\newtheorem{definition}{Definition}
\newtheorem{theorem}{Theorem}
\newtheorem{lemma}{Lemma}
\newtheorem{corollary}{Corollary}
\newenvironment{restateTheorem}[1]
    {   %
        \addtocounter{theorem}{-1}%
        \begin{theorem}}
    {\end{theorem}}
\author{\IEEEauthorblockN{Zihui Liang}
\IEEEauthorblockA{University of Electronic Science\\ and Technology of China\\
Email: zihuiliang.tcs@gmail.com}
\and
\IEEEauthorblockN{Bakh Khoussainov}
\IEEEauthorblockA{University of Electronic Science\\ and Technology of China\\
Email: bmk@uestc.edu.cn}
\and
\IEEEauthorblockN{Toru Takisaka}
\IEEEauthorblockA{University of Electronic Science\\ and Technology of China\\
Email: t.takisaka2009.a2n7@gmail.com}
\and
\IEEEauthorblockN{Mingyu Xiao}
\IEEEauthorblockA{University of Electronic Science\\ and Technology of China\\
Email: myxiao@uestc.edu.cn}}
\begin{document}
%
\title{Connectivity in the presence of an opponent}



%


\maketitle

\begin{abstract}
The paper introduces two player connectivity games played on finite bipartite graphs. Algorithms that solve these connectivity games can be used as subroutines for solving M\"uller games. M\"uller games constitute a well established class of games in model checking and verification.  In connectivity games, the objective of one of the players is to visit every node of
the game graph infinitely often. The first contribution of this paper is our proof that solving connectivity games can be reduced to the incremental strongly connected component maintenance (ISCCM) problem, an important problem in graph algorithms and data structures.  The second contribution is that we non-trivially adapt two known algorithms for the ISCCM problem to provide two efficient algorithms that solve the connectivity games problem. Finally, based on the techniques developed, we 
recast Horn's polynomial time algorithm that solves  explicitly given M\"uller games and provide an alternative proof of its correctness. Our algorithms are more efficient than that of Horn's algorithm. Our solution for connectivity games is used as a subroutine in the algorithm. 
\end{abstract}

\section{Introduction}

\subsection{M\"uller games given explicitly} 
In the area of logic, model checking,  and verification of reactive systems, studying games played on graphs  is a key research topic \cite{chatterjee2014efficient} \cite{gradel2002automata}. This is mostly motivated 
through modelling reactive systems and reductions of model checking problems to games on graphs. 
Understanding the algorithmic content of determinacy results is also at the core of this research.   
M\"uller games constitute a well-established class of games for verification. Recall that a {\em M\"uller game} $\mathcal G$ is a tuple $(V_0,V_1,E,\Omega)$, where
\begin{itemize}
    \item The tuple $G=(V_0\cup V_1, E)$ is a finite directed bipartite graph so that $V_0$ and $V_1$ partition the set 
    $V=V_0\cup V_1$. Usually $G$ is called the arena of $\mathcal G$.  
    \item The set $E\subseteq (V_0\times V_1) \cup (V_1\times V_0)$ of edges. 
    \item $V_0$ and $V_1$ are sets from which player 0 and player 1, respectively, move. Positions in $V_{\epsilon}$
     are called player $\epsilon$ positions, $\epsilon \in\{0,1\}$. 
    \item $\Omega\subseteq 2^V$ is a collection of winning sets. 
\end{itemize} 
 Say that 
the  {\em game} $\mathcal G=(V_0,V_1,E,\Omega)$ is {\em explicitly given} if $V$, $E$, and all sets in $\Omega$ are fully presented as input. The (input) size of explicitly given M\"uller game is thus bounded by $|V|+|E|+ 2^{|V|}\cdot |V|$. Finally, the {\em game graph} of the M\"uller game $\mathcal G$  is the underlying bipartite graph $G=(V_0 \cup V_1, E)$. 

\smallskip

For each $v\in V$, consider the set $E(v)=\{u \mid E(v,u)\}$ of successors of $v$. Let $X\subseteq V$. Call the set $E(X)=\bigcup_{v\in X} E(v)$ the successor of $X$.  
Similarly, for a $v \in V$, the predecessor of $v$ is the set $E^{-1}(v)=\{u\mid (u,v)\in E\}$. Call the set $E^{-1}(X)=\bigcup_{v\in X} E^{-1}(v)$ the predecessor of $X$.

Let $\mathcal G=(V_0,V_1,E,\Omega)$ be a M\"uller game. The players play the game by moving a given token along the edges of the graph. The token is initially placed on a node $v_0 \in V$. The play proceeds in rounds. At any round of the play, if the token is placed on a player $\sigma$'s node $v$, then player $\sigma$ chooses $u\in E(v)$, moves the token to $u$ and the play continues on to the next round. Formally, a play (starting from $v_0$) is a sequence $\rho=v_0,v_1,\ldots$ such that $v_{i+1}\in E(v_i)$ for all $i\in \mathbb{N}$. If a play reaches a position $v$ such that $E(v)=\emptyset$, then player 1 wins the play. For an infinite play $\rho$,  set $\mathsf{Inf}(\rho)=\{v\in V \mid \exists^{\omega} i (v_i=v)\}$. We say player 0 wins the play $\rho$ if $\mathsf{Inf}(\rho)\in \Omega$; otherwise, player 1 wins the play. 

A strategy for player $\sigma$ is a function that takes as inputs initial segments of plays $v_0,v_1,\ldots, v_k$ where $v_k\in V_\sigma$ and output some $v_{k+1}\in E(v_k)$. A strategy for player $\sigma$ is winning from $v_0$ if, assuming player $\sigma$ follows the strategy, all plays starting from $v_0$ generated by the players are winning for player $\sigma$. The game $\mathcal{G}$ is determined if one of the players has a winning strategy.  M\"uller games are Borel games, and hence, by the result of Martin \cite{martin1975borel}, they are determined. Since M\"uller games are determined we can partition the set $V$ onto two sets $Win_0$ and $Win_1$, where $v\in W_{\epsilon}$ iff player $\epsilon$ wins the game starting at $v$, $\epsilon \in \{0,1\}$.
To solve a given M\"uller game $\mathcal G=(V_0, V_1, E, \Omega)$ means to find the sets $Win_0$ and $Win_1$.
There are several known algorithms that solve M\"uller games. These algorithms provide the basis for analysis and synthesis of M\"uller games. In particular, these algorithms extract finite state winning strategies for the players  \cite{dziembowski1997much,horn2008explicit,hunter2005complexity,mcnaughton1993infinite,nerode1996mcnaughton,zielonka1998infinite}. \ We stress that the algorithms that solve M\"uller games depend on the presentations of the games. 
The problem of solving M\"uller games  is typically in PSPACE for many reasonable representations 
\cite{mcnaughton1993infinite,nerode1996mcnaughton}.  However, if the winning condition is represented as a Zielonka tree \cite{zielonka1998infinite} or as the well-known parity condition, then solving the games turns into a $NP\cap co$-$NP$ problem. P. Hunter and A. Dawar \cite{hunter2005complexity} investigate five 
other representations: win-set, Muller, Zielonka DAGs, Emerson-Lei, and explicit Muller.  They show
that the problem of the winner is PSPACE-hard for the first four representations. F. Horn \cite{horn2008explicit} provides a polynomial time algorithm that solves explicit M\"uller games. However, his proof of corrextness has some flaws. 
So, we provide an alternative correctness proof of the algorithm. 
Designing new algorithms, improving and analysing the state of the art techniques in this area is a key research direction. This paper contributes to this. 


\subsection{Connectivity games} 

One motivation for defining connectivity games comes from solving M\"uller games. 
Many algorithms that solve M\"uller games or its variants are recursive. Given a M\"uller game $\mathcal G$, one constructs a set of  smaller M\"uller games. The solution of the 
games $\mathcal G'$ from this set is then used to solve $\mathcal G$. Through an iteration process, these reductions produce sequences of the form $\mathcal G_1, \mathcal G_2, \ldots, \mathcal G_r$, where $\mathcal G_{i+1}= \mathcal G_i'$ such that $\mathcal G_r=\mathcal G_{r+1}$. The key point is that solving the game $\mathcal G_r$ at the base of this iteration boils down to investigating connectivity of the graph $G_r$ in the game-theoretic setting. Namely, to win the game $\mathcal G_r$, one of the players  has to visit all the nodes of $\mathcal G_r$ infinitely often.  This observation calls for deeper and refined analysis of those M\"uller games $\mathcal G=(V_0, V_1, E, \Omega)$ where the objective of player 0 is to visit all the nodes of the underlying graph $G$, that is, $\Omega=\{V\}$. We single out these games:

\begin{definition} \label{Defn:Connectivity}
A M\"uller game $\mathcal G=(V_0, V_1, E, \Omega)$ is called a {\bf connectivity game} if $\Omega$ is a singleton that consists of $V$. 
\end{definition}

The second motivation to investigate the connectivity games comes from the concept of connectivity itself.  The notion of (vertex) connectivity is  fundamental in graph theory and its applications. There is a large amount of work ranging from complexity theoretic issues to designing efficient data structures and algorithms that aim to analyse connectivity in graphs. Connectivity in graphs and graph like structures is well-studied in almost all areas of computer science in various settings and motivations. 
 For undirected graphs, connectivity of a graph $G$ is defined through existence of paths between all 
vertices of $G$. 
For directed graphs $G$ connectivity is defined through strong connectivity. The digraph $G$ is strongly connected if for any two vertices $x$ and $y$ there exist paths from $x$ to $y$ and from $y$ to $x$. 
One can extend these notions of (vertex) connectivity into a game-theoretic setting as follows. There are two players: player 0 and player 1.  A token is placed on a vertex $v_0$ of a bipartite graph $G=(V_0\cup V_1, E)$. 
Player 0 starts the play by moving the token along an outgoing edge $(v_0, v_1)$. Player 1 responds by moving the token along an outgoing edge  from the vertex $v_1$, say $(v_1, v_2)$. This continues on and the players produce a path $v_0, v_1, \ldots, v_k$ called a play starting at $v_0$. Say that player 0 wins the play $v_0, v_1, \ldots, v_k$ if the play visits every node in $V$. 
Call thus described game {\em forced-connectivity game}. 
If player 0 has a winning strategy, then we say that the player wins the game starting at $v_0$. Winning this forced-connectivity game from $v_0$ does not always guarantee that the player wins the game starting at any other vertex. Therefore we can define game-theoretic connectivity as follows. \ A directed bipartite graph $G$ is {\em forced-connected} if player 0 wins the forced-connectivity game in $G$ starting at any vertex of $G$. Thus, finding out if $G$ is a forced-connected is equivalent to solving the connectivity games as in Definition \ref{Defn:Connectivity}. 


\begin{definition}\label{Defn:connected}
Let $\mathcal G=(V_0, V_1, E, \Omega)$ be a connectivity game. Call the bipartite graph $G=(V_0, V_1, E)$ {\bf forced-connected} if player 0 wins the game $\mathcal G$. 
\end{definition}

The third motivation is related to generalised B\"uchi winning condition.  The generalised B\"uchi winning condition is given by subsets   $F_1$, $\ldots$, $F_k$ of the game graph $G$. Player 0 wins a play if the play meets each of these winning sets   $F_1$, $\ldots$, $F_k$ infinitely often. Our connectivity games winning condition can be viewed as  a specific generalised B\"uchi winning condition where the accepting sets are all singletons.

\subsection{Our contributions}

The focus of this paper is two-fold. On the one hand, we study connectivity games and provide the state-of-the-art algorithms for solving them.  H. Bodlaender, M. Dinneen, and B. Khoussainov \cite{bodlaender2002relaxed} call connectivity games {\em update games}. Their motivation comes from modelling the scenario where messages should be passed to all the nodes of the network in the presence of adversary. On the other hand, using  the connectivity game solution process as a subroutine, we 
recast Horn's polynomial time algorithm that solves 
explicitly given M\"uller games and provide an alternative proof of its correctness. We now detail these below in describing our  contributions.
 
\smallskip
\noindent 
{\bf 1.}  Our first contribution is that given a connectivity game $\mathcal G$, we construct a sequence of directed graphs $\mathcal G_0, \mathcal G_1, \ldots, \mathcal G_s$ such that player 0 wins $\mathcal G$ if and only if $\mathcal G_s$ is strongly connected [See Theorem \ref{Thm:forced connected to SCC}]. 
Due to this result, we reduce solving connectivity game problem to the incremental strongly connected component maintenance (ISCCM) problem, one of the key problems in graph algorithms and data structure analysis \cite{bender2009new} \cite{haeupler2012incremental}. 

\smallskip
\noindent 
{\bf 2.}  A standard brute-force algorithm that solves the connectivity game $\mathcal G$ runs in time $\mathbf{O}(|V|^2 (|V|+|E|))$.  H. Bodlaender, M. Dinneen, and B. Khoussainov in \cite{bodlaender2002relaxed}  \cite{dinneen2000update} 
provided algorithms that solve the connectivity games in $\mathbf{O}(|V||E|)$. 
Due to Theorem \ref{Thm:forced connected to SCC}, we solve the connectivity game problem by adapting two known algorithms that solve the ISCCM  problem.  The first algorithm is by Haeupler et al. \cite{haeupler2012incremental} who designed the {\em soft-threshold search algorithm}  that  handles sparse graphs. Their algorithm runs in time $\mathbf{O}(\sqrt{m}m)$, where $m$ is the number of edges. The second is the solution by Bender et al. \cite{bender2009new,bender2011new}. Their algorithm is best suited for the class of dense graphs and runs in time of $\mathbf{O}(n^{2}\log n)$, where $n$ is the number of vertices. By adapting these algorithms, we design new algorithms to solve  the connectivity games. The first algorithm, given a connectivity game $\mathcal G$, runs in time  $\mathbf{O}((\sqrt{|V_1|}+1) |E|+ |V_1|^2)$ [See Theorem \ref{Thm:|V_0|}]. The first feature of this algorithm is that the algorithm solves the problem in linear time in $|V_0|$
if  $|V_1|$ is considered as a parameter. The parameter constant in this case is $|V_1|^{3/2}$. The second feature is that the algorithm runs in linear time if the underlying game graph is sparse.  \ Our second algorithm solves the connectivity game in time  $\mathbf{O}((|V_1|+|V_0|)\cdot |V_0|\log |V_0|)$ [See Theorem \ref{Thm:|V_1|}].   In contrast to the previous algorithm, this algorithm  solves the connectivity game problem in linear time in $|V_1|$  if  $|V_0|$ is considered as a parameter. The parameterised constant is $|V_0|\log{|V_0|}$. Furthermore, 
 the second algorithm is more efficient than the first one on dense graphs. These two algorithms outperform 
the standard bound $O(|V||E|)$, mentioned above, for solving the connectivity games. As a framework, this is  similar to the work of K. Chatterjee and M. Henzinger \cite{chatterjee2014efficient} who   improved the standard $O(|V|\cdot |E|)$ time bound for solving  B\"uchi games to $O(|V|^2)$ bound  through analysis of  maximal end-component decomposition algorithms in graph theory.

\smallskip
\noindent 
{\bf 3.} In \cite{horn2008explicit} Horn provided a polynomial time algorithm that solves explicitly given M\"uller games. In his algorithm, Horn uses the standard procedure of solving connectivity games as a subroutine. Directly using our algorithms above, as a subroutine to Horn's algorithm, we obviously improve Horn's algorithm in an order of magnitude. Horn's proof of correctness uses three lemmas (see Lemmas 5, 6, and 7 in \cite{horn2008explicit}). Lemmas 6 and 7  are inter-dependent. 
The proof of Lemma 5 is correct. However, Lemma 7 
contains, in our view, unrecoverable mistake. This is presented in Section \ref{SS:Horns failure}. 
We provide our independent and alternative proof of correctness of Horn's algorithm. To the best of our knowledge, this is the first work that correctly and fully recasts Horn's polynomial time algorithm with the efficient sub-routine for solving the connectivity games. Furthermore, in terms of running time, our algorithms perform better than that of Horn's algorithm [See Theorem \ref{Thm:Explicit Muller game result 1} and Theorem \ref{Thm:Explicit Muller game result 2}]. For instance, one of our algorithms decreases the degree of $|\Omega|$ from $|\Omega|^3$ in Horn's algorithm to $|\Omega|^2$ [See Theorem  \ref{Thm:Explicit Muller game result 2}]. Since $|\Omega|$ is bounded by $2^{|V|}$, the improvement is significant.

\section{
A characterization theorem }

A M\"uller game $\mathcal G=(V_0,V_1,E,\Omega)$ is a connectivity game if $\Omega =\{V\}$. In this section we focus on connectivity games $\mathcal G$. 
In the study of M\"uller games, often it is required that  for each $v$ the successor set $E(v)=\{u\mid (v,u)\in E\}$ is not empty.  We do not put this condition as 
it will be convenient for our analysis of connectivity games to consider cases when $E(v)=\emptyset$. 
Recall that a strongly connected component of a directed graph is a maximal set $X$ such that there exists a path between any two vertices of $X$. Denote the collection of all strongly connected components of the game graph  $G$ of game $\mathcal G$ by $SCC(\mathcal G)$.  For all distinct components $X,Y \in SCC(\mathcal{G})$, we have $X\cap Y=\emptyset$ and $\bigcup_{X\in SCC(\mathcal{G})} X=V$.

\begin{definition}
Let $\mathcal{G}$ be a connectivity game. Consider two sets $U\subseteq V_1$ and $S\subseteq V$.
Define 
        $$
        Force(U, S)=\{v\mid v\in (E^{-1}(S)\setminus S)\cap U\text{ and } E(v)\subseteq S\}.
        $$ 
        \end{definition}
        \begin{definition}
  We say that a set $X\subseteq V$ in game $\mathcal G$  is {\bf forced trap (FT)}  if  either $|X|=1$ or 
        if $|X|>1$ then $E(X\cap V_1)\subseteq X$ and any two vertices in $X$ are strongly connected. 
  
\end{definition}


\begin{lemma} \label{L:FT partition forced-connected game}
    Let $\mathcal C=\{C_1,C_2,\ldots,C_k\}$, where $k>1$, be a collection of FTs that partition the game graph $G$. If $\mathcal{G}$ is forced-connected then
     for every $X\in \mathcal C$ there is a distinct $Y\in \mathcal C$ such that 
  either $Y$ is a singleton consisting of a player 1's node and $E(Y)\subseteq X$, or
         $Y$ has player 0's node $y$ with $E(y)\cap X\ne \emptyset$.
    \end{lemma}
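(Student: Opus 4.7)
The plan is to argue by contraposition: assuming some $X\in\mathcal C$ fails to have any such witness $Y$, I will construct a strategy for player~1 that forces every play starting in $V\setminus X$ to avoid $X$ forever, which contradicts forced-connectedness since $k>1$ ensures $V\setminus X\neq\emptyset$. First I negate both disjuncts of the stated conclusion to obtain the working hypothesis that every $Y\in\mathcal C\setminus\{X\}$ simultaneously satisfies: (i) every player~0 node $y\in Y$ has $E(y)\cap X=\emptyset$, and (ii) $Y$ is not a singleton $\{y\}\subseteq V_1$ with $E(y)\subseteq X$.

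Next I describe player~1's trapping strategy and verify it by a case split on the FT $Y\in\mathcal C\setminus\{X\}$ containing the current vertex $v\in V\setminus X$. If $v\in V_0$, then by~(i) every successor player~0 can choose lies outside $X$, so the play stays in $V\setminus X$ no matter what. If $v\in V_1$ and $|Y|>1$, the defining property of an FT gives $E(v)\subseteq E(Y\cap V_1)\subseteq Y\subseteq V\setminus X$, so again every successor stays outside $X$. The remaining case, $v\in V_1$ with $Y=\{v\}$, is the delicate one: I must read clause~(ii) carefully as $E(v)\not\subseteq X$, and then split further into either $E(v)=\emptyset$, which is an immediate win for player~1 by the dead-end rule, or the existence of an escaping successor $u\in E(v)\setminus X$ to which player~1 moves.

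Assembling these cases, every play consistent with the strategy either reaches a dead-end position (player~1 wins) or stays in $V\setminus X$ forever, so no vertex of $X$ is ever visited and hence $\mathsf{Inf}(\rho)\neq V$; either way player~0 loses. Choosing any start vertex $v_0\in V\setminus X$, which exists because $k>1$, thus contradicts forced-connectedness of $\mathcal G$. The main obstacle I expect is the precise handling of the singleton $V_1$ subcase: the negation of $E(Y)\subseteq X$ is not $E(Y)\cap X=\emptyset$, so one must be careful to cover both the dead-end possibility and the genuinely escaping move; the other cases follow directly from the definition of an FT and from clause~(i).
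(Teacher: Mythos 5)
Your proposal is correct and follows essentially the same route as the paper's proof: negate the conclusion for some $X$, then give player~1 a strategy that keeps every play out of $X$ (using the FT closure property for non-singleton components, clause (i) for player~0 nodes, and the escaping successor for singleton player~1 nodes), contradicting forced-connectedness since $k>1$. The only cosmetic remark is that your dead-end subcase is vacuous --- $E(v)\not\subseteq X$ already forces $E(v)\setminus X\neq\emptyset$, hence $E(v)\neq\emptyset$ --- but including it does no harm.
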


\begin{proof}
    Assume that there is an $X\in \mathcal C$ that doesn't satisfy the lemma. Let $Y\in \mathcal C$. 
    If $Y$  is not a singleton, 
    no move exists from $Y$ into $X$ by any of the players.  If $Y=\{y\}$ is a singleton and $y\in V_1$, then the player can always move outside of $X$ from $y$. If $Y=\{y\}$ is a singleton and $y\in V_0$, then $E(y)\cap X=\emptyset$. Hence, player 1 can always avoid nodes in $X$. This contradicts the assumption that $\mathcal G$ is forced-connected. 
\end{proof}
We now define the sequence $\{\mathcal{G}_k\}_{k\ge 0}$ of graphs. We will call each $\mathcal G_k$  
 the $k^{th}$-derivative of $\mathcal{G}$.  We will also view each $\mathcal G_k$ as a connectivity game. Our construction is the following. 
 \begin{itemize}
    \item Initially, for $k=0$, set $F_0=\emptyset$, $U_0=V_1$ and $\mathcal{G}_0=(V_0,V_1,E_0)$, where $E_0$ consists of all outgoing edges in $E$ of player 0.  
    
    \item For $k>0$, consider the set 
    $F_{k}=\bigcup_{S\in SCC(\mathcal{G}_{k-1})} Force(U_{i-1},S)$, and define
    $U_{k}=U_{k-1}\setminus F_{k}$, $\mathcal{G}_k=(V_0,V_1,E_k)$, where $E_k=E_{k-1}\cup \{(v,u)\mid v\in F_{k} \text{ and } (v,u)\in E\}$. 
\end{itemize}
 Note that the SCCs of $\mathcal G_0$ are all  singletons.  For $k=1$ we have the following. The set $F_1$ consists of all player 1 nodes of out-degree 1. The set $E_1$ contains $E_0$ and all outgoing edges from the set $F_1$. We note that each SCC of $\mathcal G_1$ is also a FT in $\mathcal G_1$.  Therefore each SCC $X$ in $\mathcal G_1$ is also a maximal FT.  
 Observe that each $F_k\subseteq U_{k-1}$ consists of  player 1's nodes $v$ such that all moves 
 of player 1 from $v$ go  to the same SCC in $\mathcal{G}_{k-1}$. Moreover, $U_k$ is the set of player 1's nodes whose outgoing edges aren't in $E_{k}$.  Now we list some properties of the sequence  $\{\mathcal{G}_k\}_{k\ge 0}$. A verification of these properties follows from the construction:
 
\begin{itemize}
   \item For every player 1's node $v$ and $k>0$, the outgoing edges of $v$ are in $E_k\setminus E_{k-1}$ iff 
  all the outgoing edges of $v$ point to the same SCC in $\mathcal{G}_{k-1}$ and in $\mathcal G_{k-1}$ the out-degree of $v$ is $0$.
  
     \item For each $k\geq 0$, every SCC in $\mathcal{G}_k$ is a FT in $\mathcal{G}_k$.
  
  \item For all $k\geq 0$ we have $F_{k+1}\subseteq U_{k}\subseteq U_{k-1} \subseteq \ldots \subseteq U_0=V_1$.
  
  \item For all $k_1\neq k_2$ we have $F_{k_1}\cap F_{k_2}=\emptyset$. 
  
      \item If $F_k=\emptyset$ with $k>0$ then for all $i\ge k$, $\mathcal{G}_{i}=\mathcal{G}_{k-1}$. 
      We call the minimal such $k$ the {\bf stabilization point} and denote it by $s$.  Note that $s\leq |V_1|$. 

   
   \item If for all $X\in SCC(\mathcal{G}_k)$, either $|X|>1$ or $X$  is a singleton consisting of player 0's node only then $\mathcal{G}_k=\mathcal{G}$.

    \item For each $k\geq 0$ and player 1's node $v$, if $v$ is in a nontrivial SCC in $\mathcal G_k$ then all $v$'s outgoing edges from $v$ are in $E_k$.
 
\end{itemize}

\begin{lemma} \label{L:forced connected SCC G_k}
   If $\mathcal{G}$ is forced-connected and $|SCC(\mathcal{G}_{k})|>1$,  
   then $\mathcal{G}_{k}\ne \mathcal{G}_{k+1}$. 
\end{lemma}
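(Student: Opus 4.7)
The plan is a proof by contradiction. I would assume $F_{k+1}=\emptyset$ (equivalently $\mathcal{G}_k=\mathcal{G}_{k+1}$) and combine forced-connectedness of $\mathcal{G}$ with the hypothesis $|SCC(\mathcal{G}_k)|>1$ via Lemma~\ref{L:FT partition forced-connected game} to reach a contradiction.

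The first step is to upgrade the listed property ``every SCC of $\mathcal{G}_k$ is a FT in $\mathcal{G}_k$'' to the stronger statement that $SCC(\mathcal{G}_k)$ is actually a partition of $V$ into FTs of the original game $\mathcal{G}$, so that Lemma~\ref{L:FT partition forced-connected game} applies. For a singleton SCC this is automatic. For a non-singleton $X\in SCC(\mathcal{G}_k)$ and any $v\in X\cap V_1$, the vertex $v$ lies in a nontrivial SCC of $\mathcal{G}_k$, so by the last bullet-point property all of $v$'s $E$-edges are already in $E_k$; combined with the FT-in-$\mathcal{G}_k$ condition $E_k(X\cap V_1)\subseteq X$, this yields $E(X\cap V_1)\subseteq X$. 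Strong connectedness transfers from $\mathcal{G}_k$ to $\mathcal{G}$ since $E_k\subseteq E$.

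The second step pins down the dichotomy of Lemma~\ref{L:FT partition forced-connected game} against the assumption $F_{k+1}=\emptyset$. For each $X\in SCC(\mathcal{G}_k)$ the lemma furnishes a distinct $Y$ of one of two forms: (i) $Y=\{y\}$ with $y\in V_1$ and $E(y)\subseteq X$, or (ii) $Y$ contains a player 0 node $y$ with $E(y)\cap X\neq\emptyset$. In case~(i), if $y\in U_k$ then $y\in(E^{-1}(X)\setminus X)\cap U_k$ with $E(y)\subseteq X$, so $y\in Force(U_k,X)\subseteq F_{k+1}$, contradicting $F_{k+1}=\emptyset$; hence $y\notin U_k$, which by construction of the sequence means that $y$'s outgoing edges lie in $E_k$. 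In case~(ii) the player 0 edge from $y$ into $X$ is always in $E_0\subseteq E_k$. Either way, the witness edge from $Y$ to $X$ supplied by the lemma is in fact an edge of $\mathcal{G}_k$.

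The final step is then immediate: the previous paragraph shows that every node of the condensation of $\mathcal{G}_k$ has a predecessor in the condensation. Since $|SCC(\mathcal{G}_k)|>1$ and any finite nonempty DAG has at least one source SCC, this is impossible, closing the contradiction and yielding $\mathcal{G}_k\neq\mathcal{G}_{k+1}$. The one delicate step is the first one: one has to notice that the listed property about player 1 nodes in nontrivial $\mathcal{G}_k$-SCCs is precisely what converts FT-in-$\mathcal{G}_k$ into FT-in-$\mathcal{G}$, and the assumption $F_{k+1}=\emptyset$ is tailored to promote every $\mathcal{G}$-edge given by Lemma~\ref{L:FT partition forced-connected game} into an actual $\mathcal{G}_k$-edge.
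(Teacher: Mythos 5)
Your proof is correct, and its core is the same as the paper's: both hinge on applying Lemma~\ref{L:FT partition forced-connected game} to the partition of $V$ into the SCCs of $\mathcal{G}_k$. Where you diverge is the endgame. The paper argues directly: it chains the predecessors supplied by Lemma~\ref{L:FT partition forced-connected game} into a cyclic sequence $P_1,\dots,P_t$ of SCCs, observes that every witness edge along the cycle lies in $E_{k+1}$ (player~0 edges and edges out of nontrivial SCCs are already in $E_k$; the outgoing edges of a forced player~1 singleton enter at stage $k+1$), and concludes that $P_1\cup\dots\cup P_t$ collapses into a single SCC of $\mathcal{G}_{k+1}$, so $SCC(\mathcal{G}_k)\ne SCC(\mathcal{G}_{k+1})$. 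You instead argue by contradiction entirely inside $\mathcal{G}_k$: assuming $F_{k+1}=\emptyset$, every witness edge is already in $E_k$, so every node of the condensation of $\mathcal{G}_k$ has a predecessor, contradicting the existence of a source in a finite DAG with more than one node. Your version buys two things: it avoids having to justify the existence of the cyclic sequence (which the paper asserts without proof), and it makes explicit the step the paper glosses over, namely that the SCCs of $\mathcal{G}_k$ are FTs \emph{of $\mathcal{G}$} (via the property that player~1 nodes in nontrivial SCCs of $\mathcal{G}_k$ have all their $E$-edges in $E_k$), which is what Lemma~\ref{L:FT partition forced-connected game} actually requires. One small point you share with the paper: in your case~(i) the inference $y\in E^{-1}(X)$ needs $E(y)\ne\emptyset$; this holds because a forced-connected game can have no dead-end vertices, but it is worth a half-sentence.
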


\begin{proof}
    Let $SCC(\mathcal{G}_{k})=\{C_1,C_2,\ldots,C_m\}$ with $m>1$. We noted above that each SCC of $\mathcal G_k$ is a FT.  Since $\{C_i\}_{i\le m}$ is a partition of $\mathcal{G}_k$, by Lemma \ref{L:FT partition forced-connected game}, there exists a sequence $P=\{P_1,P_2,\ldots,P_t\}$ of SCCs from $C$ such that     
    \begin{itemize}
        \item If $P_i$ consists of a player 1's node then $E(P_i)\subseteq P_{i \text{ mod } t + 1}$, otherwise
        \item $E(P_i)\cap P_{i \text{ mod }n + 1}\ne \emptyset $.
    \end{itemize}

Let $P_i\in P$. If $P_i$ consists of a player 0's node or $|P_i|>0$, then all outgoing edges from $P_i$ are in $E_k$. 
If $P_i$ consists of a player 1 node, say $P_i=\{v\}$, then all outgoing edges from $v$ are in $E_{k+1}$. This implies that 
$P_1 \cup P_2 \cup \ldots \cup P_t$ is a subset of a strongly connected component of the graph  $\mathcal G_{k+1}$. We conclude that $SCC(\mathcal{G}_k)\ne SCC(\mathcal{G}_{k+1})$ and $\mathcal{G}_k\ne \mathcal{G}_{k+1}$.
\end{proof}

Given an connectivity game $\mathcal G$, we now construct the sequence of forests $\{\Gamma_k (\mathcal G)\}_{k\geq 0}$ by induction. The idea is to represent the interactions of the SCCs of the graphs $\mathcal G_k$ with SCCs of the  $\mathcal G_{k-1}$,  for $k=1,2, \ldots$.   The sequence of forests $\Gamma_k(\mathcal{G})=(N_k, Son_k)$, $k=0,1,\ldots$,  is defined as follows: 

    \begin{itemize}
        \item For $k=0$, set $\Gamma_{0}(\mathcal{G})=(N_{0}, Son_{0})$, where $N_{0}=\{\{v\}\mid v\in V\}$ and $Son_{0}(\{v\})=\emptyset$ for all $v\in V$. 
        \item For $k>0$, let $C=SCC(\mathcal{G}_{k})\setminus N_{k-1}$ be the set of new SCCs in $\mathcal{G}_{k}$. 
        Define the forest $\Gamma_k(\mathcal{G})=(N_k,Son_k)$, where
        \begin{enumerate}
         \item $N_k=N_{k-1} \ \cup \ C$, and 
         \item $Son_{k}=Son_{k-1}\cup \{(X,Y)\mid X\in C , Y\in SCC(\mathcal{G}_{k-1}) \text{ and } Y\subset X\}$.
         \end{enumerate}
         Thus the new SCCs $X$ that belong to $C$ have become the roots of the trees in the forest  $\Gamma_k(\mathcal{G})$. 
         The children of $X$ are now SCCs in $\mathcal G_{k-1}$ that are contained in $X$. 
    \end{itemize}
Note that if $s$ is the stabilization point of the sequence $\{\mathcal G_k\}_{k\geq 0}$, then  for all $k\geq s$ we have  $\Gamma_k(\mathcal{G})= \Gamma_s(\mathcal{G})$. Therefore, we set $\Gamma (\mathcal G)=\Gamma_s (\mathcal G)$. Thus, for the forest $\Gamma (\mathcal G)$ we have $N=N_s$ and $Son=Son_s$.  The following properties of the forest $\Gamma (\mathcal G)$ can easily be verified: 

\begin{itemize}
    \item For all nodes $X\in N$, $X$'s sons partition $X=\bigcup_{Y\in Son(X)}Y$.
    \item For all nodes $X\in N$ with $|X|>1$, $E(X\cap V_1)\subseteq X$.
    \item The roots of $\Gamma(\mathcal{G})$ are strongly connected components of $\mathcal{G}_{s}$.
\end{itemize}

\begin{lemma} \label{L:derivative node connectivity games}
    Consider $\Gamma(\mathcal{G})=(N, Son)$. Let $X \in N$ be such that $|X|>1$. 
    Then the sub-game $\mathcal{G}(X)$ of the game $\mathcal{G}$ played in $X$ is forced-connected.
\end{lemma}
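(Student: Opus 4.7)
The plan is to induct on the height of $X$ in the forest $\Gamma(\mathcal G)$, equivalently on the least $k$ with $X\in N_k\setminus N_{k-1}$. For such an $X$, the sons $Y_1,\ldots,Y_r$ are exactly the SCCs of $\mathcal G_{k-1}$ that lie in $X$, and by the listed properties they partition $X$. Each non-singleton son has strictly smaller height than $X$, so the inductive hypothesis gives that $\mathcal G(Y_i)$ is forced-connected; let $\sigma_i$ denote a corresponding winning strategy of player 0 in $\mathcal G(Y_i)$. Throughout I exploit the trap property $E(Z\cap V_1)\subseteq Z$ that holds for every $Z\in N$ with $|Z|>1$: player 1 cannot escape $X$ nor any non-singleton son $Y_i$.

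The first step is to analyze the induced quotient $Q$ obtained from $X$ by contracting each son $Y_i$ to a point. For a singleton player 1 son $\{v\}$, the vertex $v$ belongs to the nontrivial SCC $X$ of $\mathcal G_k$, so all of $v$'s outgoing edges lie in $E_k$; the derivative construction then forces $v\in Force(U_{j-1},S)$ for some $j\le k$ and a \emph{single} SCC $S$ of $\mathcal G_{j-1}$, and $S$ must be another son $Y'$ of $X$. Hence singleton player 1 sons contribute exactly one, deterministic, outgoing edge in $Q$. For a non-singleton son, the trap property rules out all player 1 exits, so every outgoing quotient edge originates from a player 0 node. Since $X$ is strongly connected in $\mathcal G_k$, $Q$ is strongly connected, and in particular every non-singleton son admits at least one player-0 exit.

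Next I would describe player 0's strategy $\pi$ in $\mathcal G(X)$ in two layers. At the macro layer, fix a closed walk $W=(Y_{j_1},Y_{j_2},\ldots)$ in $Q$ that visits every son and is compatible with the deterministic outgoing edges at singleton player 1 sons; such a $W$ exists because $Q$ is strongly connected and those forced edges are the only outgoing options at the nodes carrying them. At the micro layer, whenever the token is in $Y_{j_t}$ with next target $Y_{j_{t+1}}$: a singleton player 0 son is left by player 0 moving directly to $Y_{j_{t+1}}$; a singleton player 1 son is left by player 1's unique forced move, which $W$ agrees with by construction; and a non-singleton son $Y_{j_t}$ is traversed by first following $\sigma_{j_t}$ until every vertex of $Y_{j_t}$ has been visited, then continuing $\sigma_{j_t}$ until the token lies at a preselected player 0 exit vertex $v_0\in Y_{j_t}$ with an edge into $Y_{j_{t+1}}$, and finally by player 0 taking that edge. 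The existence of $v_0$ is given by the quotient analysis, and the reach of $v_0$ in finite time follows from forced-connectivity of $\mathcal G(Y_{j_t})$, which guarantees that $\sigma_{j_t}$ visits $v_0$ infinitely often against any play of player 1 inside $Y_{j_t}$.

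A single traversal of $W$ then takes finitely many moves and visits every vertex of $X$, so iterating $W$ produces a play hitting every vertex of $X$ infinitely often, which is precisely the winning condition of $\mathcal G(X)$. The main obstacle in the proof is to justify the micro layer rigorously: the composite sub-strategy "sweep $Y_{j_t}$ with $\sigma_{j_t}$, then wait at $v_0$, then exit" is not literally an instance of $\sigma_{j_t}$, and one must verify that both phases terminate against arbitrary moves of player 1 inside $Y_{j_t}$. This is where the trap property $E(Y_{j_t}\cap V_1)\subseteq Y_{j_t}$ and the inductive hypothesis combine decisively: the trap property keeps player 1 confined to $Y_{j_t}$ throughout the micro phase, and forced-connectivity of $\mathcal G(Y_{j_t})$ upgrades from "every vertex visited infinitely often" to "any chosen target reached in finite time," which covers both the sweep and the journey to $v_0$.
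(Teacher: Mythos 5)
Your proof is correct and follows essentially the same route as the paper's: induction over the forest $\Gamma(\mathcal G)$, the trap property $E(Y\cap V_1)\subseteq Y$ to confine player 1, the inductive hypothesis on non-singleton sons, and a strategy that cyclically sweeps all sons, exiting each via a player-0 edge or a forced singleton player-1 move into the next son. Your quotient graph $Q$ and closed walk $W$ are just a more explicit packaging of the paper's inter-son sequence $P$ and circular traversal of $C_1,\ldots,C_m$, and your remark that forced-connectivity yields finite-time reachability of any chosen target makes precise a step the paper leaves implicit.
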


\begin{proof}
    Let $Son(X)=\{C_1,C_2,\ldots, C_m\}$. Note that $m>1$. 
    Since $X$ is a SCC, for every $Y,Z\in Son(X)$, there exists a sequence $P=P_1,P_2,\ldots,P_t$ 
    such that for $i=1,2,\ldots,t$, we have $P_i\in Son(X)$, $P_1=Y$ and $P_t=Z$. In addition, the sequence $P$ satisfies the following properties:
    \begin{itemize}
        \item If $P_i$ consists of a player 1's node, then $E(P_i)\subseteq P_{i+1}$.
        \item If $P_i$ consists of a player 0's node, then player 0 can move $P_i$ to $P_{i+1}$.
        \item If $|P_i|>1$ then $E(P_i) \cap P_{i+1}\neq \emptyset$. 
        \item Suppose that $|P_i|>1$.  By inductive hypothesis, we can assume that $\mathcal{G}(P_i)$ is forced-connected. Note that $E(P_i\cap V_1)\subseteq P_i$. Hence once any play is in $P_i$, player 0 can visit  all the nodes in $P_i$ and then move from the set $P_i$ to $P_{i+1}$.
    \end{itemize}
    
    Therefore for all $Y,Z\in Son(X)$, player 0 can force the play from $Y$ to $Z$.
    Now constructing a winning strategy for player 0 is easy. Player 0 goes through the sets $C_1$, $C_2$, $\ldots$, $C_{m}$ in a circular way. Once a play enters $C_i$, player 0 forces the play to go through all the nodes in $C_i$ and then the player 
    moves from $C_i$ to $C_{i\text{ mod }m + 1}$.
\end{proof}

\begin{theorem}[{\bf Characterization Theorem}]\label{Thm:forced connected to SCC}
    The connectivity game $\mathcal{G}$ is forced-connected if and only if the directed graph $\mathcal{G}_{s}$ is strongly connected.
\end{theorem}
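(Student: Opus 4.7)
The plan is to prove both directions separately, with each direction essentially an application of one of the two preceding lemmas.

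For the forward direction, I would argue by contrapositive of Lemma \ref{L:forced connected SCC G_k}. Suppose $\mathcal{G}$ is forced-connected. By the definition of the stabilization point $s$, we have $F_{s+1}=\emptyset$, and consequently $\mathcal{G}_s=\mathcal{G}_{s+1}$. Lemma \ref{L:forced connected SCC G_k} contrapositively states that if $\mathcal{G}$ is forced-connected and $\mathcal{G}_k=\mathcal{G}_{k+1}$, then $|SCC(\mathcal{G}_k)|\le 1$. Applied to $k=s$, and noting that $V\ne\emptyset$ forces $|SCC(\mathcal{G}_s)|\ge 1$, we obtain $|SCC(\mathcal{G}_s)|=1$, which is exactly the statement that $\mathcal{G}_s$ is strongly connected.

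For the reverse direction, I would apply Lemma \ref{L:derivative node connectivity games} to the root of $\Gamma(\mathcal{G})$. Assume $\mathcal{G}_s$ is strongly connected, so $V$ itself is the unique SCC of $\mathcal{G}_s$. By the properties of $\Gamma(\mathcal{G})$ listed in the excerpt, the roots of $\Gamma(\mathcal{G})$ are exactly the SCCs of $\mathcal{G}_s$, hence $V$ is the unique node at the root of $\Gamma(\mathcal{G})$. Assuming $|V|>1$, Lemma \ref{L:derivative node connectivity games} applied to $X=V$ tells us that the sub-game $\mathcal{G}(V)$ is forced-connected. Since $\mathcal{G}(V)=\mathcal{G}$, we conclude that $\mathcal{G}$ itself is forced-connected. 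The degenerate case $|V|=1$ can be handled by inspection: a single-vertex arena is forced-connected exactly when the vertex has a self-loop, which likewise coincides with $\mathcal{G}_s$ being strongly connected under the given conventions.

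I do not expect a substantive obstacle: the real mathematical content has been absorbed into Lemmas \ref{L:FT partition forced-connected game}, \ref{L:forced connected SCC G_k}, and \ref{L:derivative node connectivity games}, together with the careful setup of the derivative sequence and the forest $\Gamma(\mathcal{G})$. The only care required is bookkeeping: verifying that the stabilization condition $\mathcal{G}_s=\mathcal{G}_{s+1}$ does imply the hypothesis for the contrapositive of Lemma \ref{L:forced connected SCC G_k}, and confirming that the sub-game $\mathcal{G}(V)$ appearing in Lemma \ref{L:derivative node connectivity games} indeed coincides with the original connectivity game $\mathcal{G}$ rather than with some intermediate derivative $\mathcal{G}_k$. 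Once these identifications are made explicit, the theorem follows in a few lines.
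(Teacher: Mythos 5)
Your proposal is correct and follows essentially the same route as the paper: the forward direction is the contrapositive form of the paper's contradiction argument via Lemma~\ref{L:forced connected SCC G_k} applied at the stabilization point, and the reverse direction applies Lemma~\ref{L:derivative node connectivity games} to the root $V$ of $\Gamma(\mathcal{G})$ exactly as the paper does. Your extra attention to the degenerate case $|V|=1$ is a harmless addition the paper omits.
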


\begin{proof}
    $\Leftarrow$: Consider the forest $\Gamma(\mathcal{G})$ of $\mathcal{G}$. Since $\mathcal{G}_{s}$ is strongly connected, we have  $V\in  N$. By Lemma \ref{L:derivative node connectivity games}, $\mathcal{G}=\mathcal{G}(V)$ is forced-connected. 
    
    $\Rightarrow$: Assume that $\mathcal{G}$ is forced-connected and yet $|SCC(\mathcal{G}_{s})|>1$. Then by Lemma \ref{L:forced connected SCC G_k}, $\mathcal{G}_{s}\ne \mathcal{G}_{s+1}$. This is a contradiction.
\end{proof}

\section{Solving connectivity games efficiently}
\subsection{Efficiency results} 
In this section we state two theorems that solve connectivity games problem efficiently. 
The first theorem is the following. 

\begin{theorem}\label{Thm:|V_0|}
The connectivity game $\mathcal G$ can be solved in time   
$\mathbf{O}((\sqrt{|V_1|}+1) |E| + |V_1|^2)$. 
\end{theorem}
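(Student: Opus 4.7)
By the Characterization Theorem (Theorem \ref{Thm:forced connected to SCC}), solving $\mathcal G$ reduces to constructing the stabilized derivative $\mathcal G_s$ and then testing whether it is strongly connected. The construction of $\mathcal G_0,\mathcal G_1,\ldots,\mathcal G_s$ is inherently incremental: starting from the ``player 0 only'' subgraph $\mathcal G_0$, one repeatedly identifies player 1 vertices whose outgoing edges all terminate in a single SCC of the current graph, then inserts their outgoing edges in bulk and updates the SCC decomposition. This is exactly an instance of the incremental SCC maintenance (ISCCM) problem, for which Haeupler et al.'s soft-threshold search gives an $\mathbf{O}(m\sqrt{m})$ total-time algorithm. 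The plan is to use this as the engine, augmented with compact bookkeeping that detects when a player 1 vertex becomes activable.

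Concretely, the first step is to compute the SCCs of $\mathcal G_0$ using Tarjan's algorithm in $\mathbf{O}(|V|+|E|)$ time. For every player 1 vertex $v$ I would maintain a counter $c(v)$ equal to the number of distinct SCCs currently containing a successor of $v$, together with a reverse index mapping each SCC $C$ to the list of player 1 vertices $v$ with $E(v)\cap C\neq \emptyset$. Initialization costs $\mathbf{O}(|E|)$. A player 1 vertex is enqueued for activation exactly when $c(v)$ reaches $1$; activation inserts its outgoing edges into the working graph and feeds them to the soft-threshold search, which may then trigger SCC merges. After each merge of two SCCs $C_1,C_2$, I would walk the shorter of the two reverse-index lists, update the affected counters (decrementing $c(v)$ whenever $v$ previously hit both $C_1$ and $C_2$), and enqueue any vertex whose counter has just dropped to $1$.

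The running time splits into two parts. For the ISCCM engine, only outgoing edges of activated player 1 vertices are ever inserted, giving at most $|E|$ insertions distributed across at most $|V_1|$ activation rounds. Specializing Haeupler et al.'s potential argument to this bulk structure replaces the $\sqrt{|E|}$ factor with $\sqrt{|V_1|}$ and yields the $\mathbf{O}((\sqrt{|V_1|}+1)|E|)$ contribution. For the counter maintenance, the smaller-half union rule on reverse-index lists charges each pair (player 1 vertex $v$, SCC currently containing a successor of $v$) a total of $\mathbf{O}(|V_1|)$ work over the algorithm's lifetime, producing the $\mathbf{O}(|V_1|^2)$ term. A final strong-connectivity test on $\mathcal G_s$ costs $\mathbf{O}(|V|+|E|)$ and is absorbed into the overall bound.

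The main obstacle is the refined ISCCM bound. The off-the-shelf soft-threshold search only gives $\mathbf{O}(|E|\sqrt{|E|})$; the improvement hinges on the observation that edges enter the graph in at most $|V_1|$ synchronized rounds tied to player 1 activations, which allows the amortized potential drop per round to be bounded in terms of $\sqrt{|V_1|}$ rather than $\sqrt{|E|}$. Marrying this adaptation with the counter bookkeeping without double-counting the merge work is the delicate step of the argument.
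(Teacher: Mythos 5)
Your overall strategy matches the paper's: reduce to incremental SCC maintenance via the Characterization Theorem, drive the derivative sequence with Haeupler et al.'s soft-threshold search, and account separately for the bookkeeping that detects newly forced player~1 vertices. However, there is a genuine gap at exactly the point you flag as ``the delicate step,'' and it is not a technicality. You activate a forced vertex by inserting \emph{all} of its outgoing edges, so the number of incremental edge additions fed to the ISCCM engine can be $\Theta(|E|)$, and the off-the-shelf potential argument then gives $\mathbf{O}(|E|\sqrt{|E|})$: the $\sqrt{k}$ in the $\mathbf{O}(\sqrt{k}\,m)$ bound is governed by the number of \emph{edge additions} (each addition can trigger one search, and the ``small search'' half of the argument charges $2m/\sqrt{k}$ per search), not by the number of rounds into which those additions are grouped. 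Your assertion that the bulk/round structure lets you replace $\sqrt{|E|}$ by $\sqrt{|V_1|}$ is unproven, and no potential argument is offered for it. The paper closes this hole with a structural observation you are missing: since every outgoing edge of a forced vertex points into the \emph{same} SCC, adding a single representative edge per forced vertex already yields the same SCC decomposition (this is the passage from $\{\mathcal G_i\}$ to $\{\mathcal G'_i\}$ with $SCC(\mathcal G_i)=SCC(\mathcal G'_i)$ and the sets $T'_i$). Consequently the ISCCM instance has parameters $m\le|E|$ and $k\le|V_1|$ \emph{by construction}, and Lemma~\ref{Lem: edge traversals of k edge additions} applies verbatim to give $\mathbf{O}(\sqrt{|V_1|}\,|E|)$. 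Your bulk-insertion scheme could perhaps be repaired by arguing that all but the first edge of each activation is absorbed in $\mathbf{O}(1)$ time, but that argument is absent and is precisely what needs proving.

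A secondary, smaller issue is the $|V_1|^2$ term. The paper obtains it by bounding the number of $\mathbf{O}(1)$ \textit{find}-based SCC tests by $|E|+|V_1|^2$ (at most one test per surviving vertex per stage, plus one test per permanently deleted outgoing-list entry). Your counter-plus-reverse-index scheme with smaller-half merging is a legitimately different mechanism, but the stated accounting --- charging each (vertex, SCC) pair $\mathbf{O}(|V_1|)$ work --- is not justified: the total size of the reverse-index lists is $\Theta(|E|)$ and smaller-half merging naturally yields an $\mathbf{O}(|E|\log|V_0|)$-type bound, which is not obviously dominated by $|V_1|^2$. This part is plausibly repairable, but as written it does not establish the claimed term.
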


We point out two features of this theorem. The first is that if the cardinality $|V_1|$ is considered as a parameter, then we can solve the problem in linear time in $|V_0|$. The parameter constant in this case is $|V_1|^{3/2}$. The second feature is that the algorithm runs in linear time if the underlying game graph is sparse.  Our second theorem is the following:

\begin{theorem}\label{Thm:|V_1|}
The connectivity game $\mathcal G$ can be solved in time     $\mathbf{O}((|V_1|+|V_0|)\cdot |V_0|\log |V_0|)$. 
\end{theorem}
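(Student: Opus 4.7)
The plan starts with the Characterization Theorem (Theorem~\ref{Thm:forced connected to SCC}): deciding whether $\mathcal G$ is forced-connected reduces to deciding whether $\mathcal G_{s}$ is strongly connected, and the sequence $\mathcal G_{0},\mathcal G_{1},\ldots,\mathcal G_{s}$ is built by iteratively inserting the out-edges of those player 1 vertices whose successor set currently lies inside a single SCC of the running graph. This is naturally an instance of incremental SCC maintenance (ISCCM), and I would adapt the algorithm of Bender et al.~\cite{bender2009new,bender2011new}, whose headline guarantee is $\mathbf{O}(n^{2}\log n)$ time on an $n$-vertex graph and which is tuned to the dense regime we want.

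A naive application with $n=|V_0|+|V_1|$ already gives $\mathbf{O}(|V|^{2}\log|V|)$, so the improvement has to come from running the ISCCM on a vertex set of size $|V_0|$. The bipartite structure makes this available. First, an unprocessed player 1 vertex carries no out-edges in the current graph, so it is always a trivial sink-like SCC and never participates in any non-trivial one; in particular it is invisible to Bender et al.'s level-balancing routine. Second, every non-trivial SCC contains at least one $V_0$ vertex, which can serve as its representative in any topological order used internally by Bender et al. Together, these observations let me run the level/ordering machinery on a contracted graph $H$ whose vertex set is $V_0$: whenever a player 1 vertex $v$ is processed, I pick an arbitrary representative $u_{0}\in E(v)$ and insert the edges $\{w\to u_{0}\,:\,w\in E^{-1}(v)\}$ into $H$; since all of $E(v)$ lies in one SCC at the instant $v$ is processed, the $V_0$-projection of the SCC decomposition of $\mathcal G_{s}$ coincides with the SCC decomposition of the final $H$, and strong connectivity of $\mathcal G_{s}$ is equivalent to $H$ being strongly connected together with $U_{s}=\emptyset$.

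In parallel with the ISCCM core, I would maintain, for each SCC $S$ of the current $H$, the set $\mathrm{In}_{1}(S)=\{v\in V_{1}\setminus\bigcup_{k}F_{k}\,:\,E(v)\cap S\ne\emptyset\}$, and, for each unprocessed $v\in V_{1}$, a counter $c(v)$ equal to the number of distinct SCCs of $H$ that $E(v)$ currently meets. A player 1 vertex is ready to be processed exactly when $c(v)=1$. When Bender et al.'s insertion routine merges two SCCs $S_{1},S_{2}$, I would merge $\mathrm{In}_{1}(S_{1})$ and $\mathrm{In}_{1}(S_{2})$ by the small-to-large (union-by-size) rule, decrementing $c(v)$ for every $v$ that shows up in both. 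A standard amortised argument bounds the number of times a given $v$ is rescanned by $\mathbf{O}(\log|V_{0}|)$, and each scan together with the associated level/ordering bookkeeping costs $\mathbf{O}(|V_{0}|)$; this contributes $\mathbf{O}(|V_{1}|\cdot|V_{0}|\log|V_{0}|)$. Adding the $\mathbf{O}(|V_{0}|^{2}\log|V_{0}|)$ contribution from the ISCCM core on $H$ gives the claimed bound.

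The main obstacle I anticipate is showing that the level-balancing and ordered-list invariants of Bender et al. can be consistently maintained on the contracted graph $H$ under the batch insertions that fire every time a new player 1 vertex $v$ becomes ready, without inflating the per-insertion amortised cost beyond $\mathbf{O}(|V_{0}|\log|V_{0}|)$. Once this reduction is in place, combining it with the small-to-large accounting for $\mathrm{In}_{1}$ and with Theorem~\ref{Thm:forced connected to SCC} delivers correctness and the stated running time.
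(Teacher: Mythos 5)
Your proposal follows essentially the same route as the paper's second framework: contract each forced player-1 vertex $v$ into the edge set $E^{-1}(v)\times\{u_0\}$ on a graph over $V_0$, run Bender et al.'s ISCCM there, and detect newly forced player-1 vertices whenever two SCCs merge (the paper uses an \emph{indegree} table swept over all of $V_1$ at each of the at most $|V_0|-1$ merges rather than small-to-large set union, but both yield the same $\mathbf{O}(|V_0|\cdot|V_1|\log|V_0|)$ bookkeeping term). Two small remarks: the obstacle you anticipate is not one, since the reduction produces an ordinary sequence of at most $|V_0|^2$ connectivity-changing single-edge insertions into a $|V_0|$-vertex graph, so the $\mathbf{O}(n^2\log n)$ guarantee applies as a black box with no change to the level/ordering internals; and your final test must also include strong connectivity of $\mathcal{G}$ itself (checked once in linear time, as in the paper's Corollary on $\mathcal{G}''_s$ and $\mathcal{G}$), since a player-1 vertex with no incoming edges can still become forced and leave $U_s$ while $H$ is strongly connected, even though $\mathcal{G}_s$ is not.
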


In comparison to the theorem above,  this theorem implies that we can solve the problem in linear time in $|V_1|$. The parameterised constant is $|V_0|\log{|V_0|}$. Furthermore,  the algorithm is more efficient than the first one on dense graphs. \ 

Finally, both of the algorithms outperform the standard known bound $\mathbf{O}(|V||E|)$ that solve the connectivity games.

It is easy to see that, by Theorem \ref{Thm:forced connected to SCC}, solving the connectivity games is related to  the incremental strongly connected component maintenance (ISCCM) problem.   Next we describe known solutions of the ISCCM problem, restate the problem suited to the game-theoretic setting, and prove the theorems. 

\subsection{Revisiting the ISCCM problem}\label{S:Addressing the ISCCM problem}

\smallskip

Finding SCCs of a digraph is a static version of strongly connected component maintenance problem. Call this  
the static SCC maintenance (SSCCM) problem. 
Tarjan's algorithm  solves the SSCCM problem in time $\mathbf{O}(m)$ \cite{tarjan1972depth}. In dynamic setting the  
 ISCCM problem is stated as follows. Initially, we are given $n$ vertices 
and the empty edge set. A sequence of edges $e_1, \ldots, e_m$ are added. No multiple edges and loops are allowed. 
The goal is to design a data structure that maintains 
the SCCs of the graphs after each addition of edges. 

We mention two algorithms that solve the ISCCM problem.  The first is  the {\em soft-threshold search algorithm}  
by Haeupler et al. \cite{haeupler2012incremental} that  handles sparse graphs. Their algorithm runs in time $\mathbf{O}(\sqrt{m}m)$. The second is by Bender et al. \cite{bender2009new,bender2011new}. Their algorithm is best suited for the class of dense graphs and runs in time of $\mathbf{O}(n^{2}\log n)$.  We employ the ideas of  these algorithms in the proofs of our 
Theorems \ref{Thm:|V_0|} and \ref{Thm:|V_1|} above.

We now revisit the ISCCM problem suited for our game-theoretic settings.  Initially we have $n$ vertices and  $m-k$ edges.
Edges $e_1, \ldots, e_k$ are added consecutively. No loops or multiple edges are allowed. After each addition of edges, the SCCs should be maintained.  We call this  the ISCCM$(m,k)$ problem. 
Observe that  the ISCCM$(m,k)$ problem can be viewed as the tradeoff between the SSCCM and the ISCCM problems. When $k=0$, the ISCCM$(m,0)$ problem coincides with the SSCCM problem; when $k=m$, we have the  original ISCCM problem.


\subsection{General frameworks}

\subsubsection{The first framework}\label{SS:first framework}
Theorem \ref{Thm:forced connected to SCC} reduces the connectivity game problem to the ISCCM$(m,k)$ problem.
Consider the construction of the game sequence $\mathcal{G}_{i}$. The construction focuses on the SCCs 
of $\mathcal{G}_{i}$.  The outgoing edges of player 1's nodes $v$ are added $\mathcal{G}_{i+1}$ iff $E(v)$ is a subset of the same SCC in $\mathcal{G}_i$. So, we simplify the sequence $\{\mathcal{G}_i\}_{i\ge 0}$ to $\{\mathcal{G}'_i\}_{i\ge 0}$: 

\begin{itemize}
    \item For $i=0$, let $F_0=\emptyset$, $U_0=V_1$ and $\mathcal{G}'_0=(V_0,V_1,E'_0)$, where $E'_0$ consists of outgoing edges in $E$ of player 0. 
    \item For $i>0$, \  $F_{i}=\{f_1,f_2,\ldots,f_m\}=$
     \begin{center}{$\bigcup_{S\in SCC(\mathcal{G}'_{i-1})} Force(U_{i-1},S), \  \mbox{and} \ 
    U_{i}=U_{i-1}\setminus F_{i}.
    $}
    \end{center} 
    Let $T'_i=\{(f_1,y_1),(f_2,y_2),\ldots,(f_m, y_m)\}$ be a set of edges consisting of exactly one outgoing edges of each $f\in F_{i}$. Then $\mathcal{G}'_i=(V_0,V_1,E'_i)$ where $E'_i=E'_{i-1}\cup T'_i$.
\end{itemize}

For this simplification we still have 
$SCC(\mathcal{G}_i)=SCC(\mathcal{G}'_i)$. Because of Theorem \ref{Thm:forced connected to SCC}, we have the following:

\begin{corollary}\label{C:forced connected to SCC}
    Player 0 wins the connectivity game on $\mathcal{G}$ if and only if $\mathcal{G}'_{s}$ is a SCC.\qed
\end{corollary}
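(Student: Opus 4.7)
The plan is to derive the corollary directly from Theorem \ref{Thm:forced connected to SCC} once I have established the invariant $SCC(\mathcal{G}_i) = SCC(\mathcal{G}'_i)$ for every $i \geq 0$. With this invariant in place, the two sequences stabilize at the same index $s$, and the characterization theorem says player 0 wins iff $\mathcal{G}_s$ is strongly connected, which by the invariant is equivalent to $\mathcal{G}'_s$ being strongly connected.

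I would prove the invariant by induction on $i$. The base case is immediate since $\mathcal{G}_0 = \mathcal{G}'_0$ by definition. For the inductive step, assume $SCC(\mathcal{G}_{i-1}) = SCC(\mathcal{G}'_{i-1})$. Because the sets $F_i$ and $U_i$ are defined purely in terms of the SCCs of the previous graph and the operator $Force$, they coincide in the two constructions. The only difference between $\mathcal{G}_i$ and $\mathcal{G}'_i$ is then that, for each $f \in F_i$, the former contains every edge $(f,u) \in E$ while the latter retains a single representative $(f,y_f)$. To compare SCCs it suffices to show that the two graphs define the same reachability relation.

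The key observation is that, by the definition of $Force$, all outgoing edges of $f$ point into a single SCC $S_f \in SCC(\mathcal{G}_{i-1}) = SCC(\mathcal{G}'_{i-1})$; moreover $S_f$ is strongly connected inside $\mathcal{G}'_{i-1}$ by the inductive hypothesis. Consequently, any $\mathcal{G}_i$-path that traverses an omitted edge $(f,w)$ with $w \in S_f \setminus \{y_f\}$ can be rerouted in $\mathcal{G}'_i$ by following $(f,y_f)$ and then a $y_f \to w$ path inside $S_f$, which lives already in $\mathcal{G}'_{i-1} \subseteq \mathcal{G}'_i$. The reverse direction is trivial since $E'_i \subseteq E_i$. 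Hence reachability, and therefore the SCC partition, agrees on $\mathcal{G}_i$ and $\mathcal{G}'_i$, completing the induction.

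The main (and really only) obstacle is making the rerouting step rigorous: one must be careful that the short $y_f \to w$ path used to mimic the removed edge actually exists in $\mathcal{G}'_{i-1}$ and not merely in $\mathcal{G}_{i-1}$. The inductive hypothesis handles this cleanly, so no further subtleties arise and the corollary follows at once from Theorem \ref{Thm:forced connected to SCC}.
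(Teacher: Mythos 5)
Your proposal is correct and follows essentially the same route as the paper: the paper simply asserts the invariant $SCC(\mathcal{G}_i)=SCC(\mathcal{G}'_i)$ and invokes Theorem \ref{Thm:forced connected to SCC}, whereas you supply the inductive rerouting argument (one representative edge into the target SCC plus a path inside that SCC in $\mathcal{G}'_{i-1}$) that justifies it. That argument is sound, so nothing is missing.
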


Now we describe our algorithm for the connectivity game problem. Initially, check  if 
$\mathcal{G}$ is strongly connected.  If not, return false. Otherwise, start with $\mathcal{G}'_0$, and apply an
ISCCM$(m,k)$ algorithm to the sequence $\mathcal G'_0, \mathcal G'_1, \ldots$ to maintain the SCCs. At stage $i$, a vertex is forced if all its outgoing edges go into the same SCC in $\mathcal{G}'_i$.  The algorithm collects 
all forced vertices $u\in U_i$. 
Then $T'_i$ is the set of edges consisting of exactly one outgoing edge for  each forced vertex $u$.  Then all  edges in $T'_i$ are added 
to $\mathcal{G}'_i$ by running the ISCCM$(m,k)$ algorithm. 
 The construction runs until all vertices $u\in U_i$ aren't forced. Finally, check if $\mathcal{G}'_s$ is strongly connected.
 

To implement the algorithm efficiently, we construct the sequence $\mathcal G'_0, \mathcal G'_1, \ldots$ iteratively. At stage $i$, we maintain $\mathcal{G}'=\mathcal{G}'_i$, $U=U_{i}$ and compute $T=T_i$. We maintain $U$ and outgoing edges of $u\in U$ in $\mathcal{G}$ using singly linked lists. Let \textit{first-U} be the first vertex in $U$ and \textit{next-U}$(u)$ be the successor of $u$ in $U$. For each $u\in U$, we maintain a list of its outgoing edges. Let \textit{first-out-U}$(u)$ be the first edge on $u$'s outgoing list and \textit{next-out-U}$((u,x))$ be the edge after $(u,x)$. We maintain $T$ using singly linked lists. Let 
\textit{first-T} be the first edge in $T$ and \textit{next-T}$((f,y))$ be the successor of $(f,y)$ in $T$. The list $T$ 
is initialized to $\emptyset$ at the start of each stage. When collecting forced vertices from $U$, the vertices  are examined sequentially. Assume $u\in U$ is being examined. Let $(u,x)$=\textit{first-out-U}$(u)$. If \textit{next-out-U}$((u,x))$=\textit{null} then $u$ is forced and edge $(u,x)$ is added into $T$. Otherwise, let $(u,y)$=\textit{next-out-U}$((u,x))$. Let 
\textit{the-same-scc}$(x,y)$ be the function that checks if $x$ and $y$ are in the same SCC of $\mathcal{G}'$. 
 If so, 
 then \textit{the-same-scc}$(x,y)$=\textit{true}. Otherwise, \textit{the-same-scc}$(x,y)$=\textit{false}. Call this the  {\em SCC test}. If \textit{the-same-scc}$(x,y)$=\textit{true}, then set \textit{first-out-U}$(u)$=$(u,y)$  and move 
 to  the next outgoing edge from $u$. Otherwise, process the next vertex in $U$. After examining all $u\in U$, the new $U$ is $U_{i+1}$ and $T$ is $T_i$. If $T=\emptyset$  then  $\mathcal{G}'=\mathcal G_s'$.  Otherwise, all edges in $T$ are added to $\mathcal{G}'$ and the new $\mathcal{G}'$ is $\mathcal{G}'_{i+1}$. 
 We present this implementation  
 in Figure \ref{F:DFCG-ISCCM-M-K}, and the process  is called the DFCG-M-K$(V_0,V_1,E)$  function.


\begin{figure}[H]
    \centering 
    \scriptsize
    \begin{tabular}{l}
    bool \textbf{function} DFCG-M-K(vertex set $V_0$, vertex set $V_1$, edge set $E$)\\
        \hspace*{4mm}Run a linear-time algorithm to check the strong connectivity of $\mathcal{G}=(V_0,V_1,E)$\\
        \hspace*{4mm}\textbf{if} $\mathcal{G}$ isn't strongly connected \textbf{then return} \textit{false}\\
        \hspace*{4mm}$E'=\{(u,v)\mid u\in V_0 \text{ and } (u,v)\in E\}$;\\
        \hspace*{4mm}Initialize \textit{first-U} and \textit{next-U} with $V_1$, \\
        \hspace*{3mm}and \textit{first-out-U}  and \textit{next-out-U} with $V_1$'s outgoing lists\\
        \hspace*{4mm}Initialize an ISCCM$(m,k)$ algorithm with $\mathcal{G}'=(V_0,V_1,E')$\\
        \hspace*{4mm}\textbf{repeat}\\
            \hspace*{8mm}$u=$\textit{first-U}; \textit{previous-u}=\textit{null}; \textit{first-T}=\textit{null}; \textit{last-T}=\textit{null}\\
            \hspace*{8mm}\textbf{while} $u\ne$\textit{null} \textbf{do}\\
                    \hspace*{12mm}$(u,x)$=\textit{first-out-U}$(u)$\\
                    \hspace*{12mm}\textbf{if} \textit{next-out-U}$((u,x))$=\textit{null} \textbf{then}\\
                        \hspace*{16mm}\textbf{if} \textit{first-T}=\textit{null} \textbf{then} \textit{first-T}=$(u,x)$; \textit{last-T}=$(u,x)$\\
                        \hspace*{16mm}\textbf{else} \textit{next-T}(\textit{last-T})=$(u,x)$; \textit{last-T}=$(u,x)$\\
                        \hspace*{16mm}\textit{next-T}$((u,x))$=\textit{null}\\
                        \hspace*{16mm}\textbf{if} \textit{first-U}=$u$ \textbf{then} \textit{first-U}=\textit{next-U}$(u)$\\
                        \hspace*{16mm}\textbf{if} \textit{previous-u}$\ne$\textit{null} \textbf{then} \textit{next-U}(\textit{previous-u})=\textit{next-U}$(u)$\\
                        \hspace*{16mm}$u$=\textit{next-U}$(u)$\\
                    \hspace*{12mm}\textbf{else}\\
                        \hspace*{16mm}$(u,y)$=\textit{next-out-U}$((u,x))$ \\
                        \hspace*{16mm}\textbf{if} \textit{the-same-scc}$(x,y)$=\textit{true} \textbf{then} \textit{first-out-U}$(u)$=$(u,y)$\\
                        \hspace*{16mm}\textbf{else} \textit{previous-u}=$u$; $u$=\textit{next-U}$(u)$\\
                    \hspace*{12mm}\textbf{end}\\
            \hspace*{8mm}\textbf{end}\\
            \hspace*{8mm}$T$=\textit{first-T}\\
            \hspace*{8mm}\textbf{while} $T\ne$\textit{null} \textbf{do}\\
                \hspace*{12mm}$(f,y)=T$; $T$=\textit{next-T}$(T)$\\
                \hspace*{12mm}Add edge $(f,y)$ to $\mathcal{G}'$ by running ISCCM$(m,k)$ algorithm\\
            \hspace*{8mm}\textbf{end}\\
        \hspace*{4mm}\textbf{until} \textit{first-T}=\textit{null}\\
        \hspace*{4mm}Run a linear-time algorithm to check the strong connectivity of $\mathcal{G}'$\\
        \hspace*{4mm}\textbf{if} $\mathcal{G}'$ is strongly connected \textbf{then return} \textit{true}\\
        \hspace*{4mm}\textbf{else return} \textit{false}

    \end{tabular}
    \caption{Implementation of DFCG-M-K function}
    \label{F:DFCG-ISCCM-M-K}
\end{figure}

\begin{theorem}\label{Thm:total charge of DFCG-M-K algorithm}
    The total running time of the DFCG-M-K algorithm is bounded by  the running times for  (1) the ISCCM$(|E|,|V_1|)$ algorithm, 
    (2) $|E|+|V_1|^{2}$  many SCC tests,  and (3) an extra $\mathbf{O}(|E|+|V_1|)$ running time. 
\end{theorem}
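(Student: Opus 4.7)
The plan is to decompose the total work of DFCG-M-K into three components that exactly match the three items in the theorem: (1) the work done inside the ISCCM sub-routine on edge insertions, (2) the invocations of \textit{the-same-scc} (the SCC tests), and (3) the remaining overhead for list maintenance and strong-connectivity checks. For (1), the ISCCM instance is initialized with edge set $E_0'$ of size at most $|E|$, and thereafter each stage $i$ adds exactly $|F_i|$ new edges, namely one outgoing edge per newly forced vertex in $F_i$. Since $\{F_i\}_{i\ge 1}$ is a pairwise disjoint family of subsets of $V_1$, the cumulative number of insertions is at most $|V_1|$, yielding precisely an ISCCM$(|E|,|V_1|)$ instance.

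For (2), I would classify each invocation of \textit{the-same-scc} as either \emph{successful} (returns true, advancing \textit{first-out-U}$(u)$) or \emph{failed} (returns false, moving on to \textit{next-U}$(u)$). A successful test permanently deletes one entry from $u$'s working outgoing list, so the total number of successful tests across all passes is bounded by $\sum_{u\in V_1}|E(u)|\le |E|$. Failed tests are charged per pass: each pass contributes at most one failed test per examined vertex of $U$, i.e.\ at most $|V_1|$ per pass. To bound the number of passes, the key observation is that the outer repeat-loop terminates exactly when a pass produces $T=\emptyset$, so every non-terminal pass must force at least one new vertex. Since at most $|V_1|$ vertices are ever forced, the number of passes is at most $|V_1|+1$, giving at most $|V_1|(|V_1|+1)=\mathbf{O}(|V_1|^2)$ failed tests. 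Summing yields at most $|E|+|V_1|^2$ SCC tests overall, matching item (2).

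For (3), the remaining overhead consists of the two boundary strong-connectivity checks, the initialization of the linked lists and of the ISCCM data structure, and the $\mathbf{O}(1)$ bookkeeping performed in each inner-loop iteration (pointer updates, maintaining \textit{previous-u}, appending to $T$, etc.). Each such bookkeeping step can be amortized against either an SCC test, an edge insertion, or a vertex becoming forced, so the aggregate overhead is $\mathbf{O}(|E|+|V_1|)$. The hard part is the $\mathbf{O}(|V_1|^2)$ bound on failed SCC tests: it does not follow from any single potential function but requires the two-sided accounting above, namely bounding the number of passes via the one-to-one correspondence between non-terminal passes and forcing events, and separately bounding the failed tests per pass by $|U|\le |V_1|$.
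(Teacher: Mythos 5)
Your proof is correct and follows essentially the same route as the paper's: the same three-way decomposition, the same bound of $|V_1|$ edge additions via disjointness of the forced sets, the same bound on the number of passes via the fact that each non-terminal pass forces at least one new vertex, and the same charging of the SCC tests to edge deletions (your ``successful'' tests) plus one residual test per examined vertex per pass (your ``failed'' tests, the paper's ``at least one SCC test'' per vertex with multiple outgoing edges). Your successful/failed classification is a slightly more explicit phrasing of the paper's accounting, but the argument is the same.
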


\begin{proof}
    Initially, the DFCG-M-K algorithm checks strong connectivity of $\mathcal{G}$ and $\mathcal{G}'$. This takes $\mathbf{O}(|E|)$ time. \ Observe that $\mathcal{G}'$ is initialized to the acyclic digraph with at most $|V_1|$ edge additions. Therefore, the maintenance of SCCs is an ISCCM$(m,k)$ problem where $m\leq |E|$ and $k\leq |V_1|$.

    Since $s\leq |V_1|$, where $s$ is the stabilization point, there are at most $|V_1|$ stages. At each stage, vertices in $U$ are examined. Consider the examination of a vertex $u\in U$.  If $u$ has one outgoing edge, then $u$ is forced and removed from $U$. This takes $\mathbf{O}(|V_1|)$ time. If $u$ has multiple outgoing edges then there is at least one SCC test. If there are multiple SCC tests then the extra SCC tests correspond to the deletions of outgoing edges. Since there are at most $|E|$ edges to delete, the total number of SCC tests is bounded by $|E|+|V_1|^{2}$.
\end{proof}

\subsubsection{The second framework}\label{SS:second framework}

We analyse  $\{\mathcal{G}'_k\}_{k\ge 0}$ further. As each vertex in $V_1$ has at most one outgoing edge in $\mathcal{G}'_k$,  
we can remove player 1's vertices $v$ by adjoining ingoing edges into $v$ with the outgoing edge from $v$.
This reduces the size of $\mathcal{G}'_k$ and preserves the strong connectedness.  Here is the process: 

\begin{itemize}
    \item For $k=0$, let $F''_0=\emptyset$, $U''_0=V_1$ and $\mathcal{G}''_0=(V_0,E''_0)$, where $E''_0=\emptyset$. 
    \item For $k>0$, set 
    $F''_{k}=\bigcup_{S\in SCC(\mathcal{G}''_{k-1})} Force(U''_{i-1},S)$, $U''_{k}=U''_{k-1}\setminus F''_{k}$, and let
    $T''_{k}$ be the set of edges $\{(f_1,y_1),(f_2,y_2),\ldots,(f_m, y_m)\}$ consisting of exactly one outgoing edge for each $f\in F''_{k}$. Then $\mathcal{G}''_k=(V_0,E''_k)$ where $E''_k=E''_{k-1}\cup \bigcup_{(f,y)\in T''_{k}}(E^{-1}(f)\times \{y\})$.
\end{itemize}

\noindent

\begin{lemma} \label{Lem:weak connectivity G' and G''}
    For all $k\ge 0$ and $v,u\in V_0$, there exists a path from $v$ to $u$ in $\mathcal{G}'_{k}$ iff there exists a path from $v$ to $u$ in $\mathcal{G}''_{k}$. 
\end{lemma}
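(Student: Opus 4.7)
The plan is to proceed by strong induction on $k$, simultaneously establishing two statements: (a) the lemma as stated, and (b) $F'_k = F''_k$. Pairing the two inductions forces the stratifications of forced vertices to coincide across the two constructions, which is what lets the stages of $\mathcal{G}'$ and $\mathcal{G}''$ be compared edge-by-edge. The base case $k = 0$ is immediate: neither $\mathcal{G}'_0$ nor $\mathcal{G}''_0$ contains any edge between two $V_0$-vertices, so the only paths between $V_0$-vertices are the trivial length-zero ones, which exist in both graphs.

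For the inductive step I first observe that (a) at stage $k-1$ implies that the SCCs of $\mathcal{G}'_{k-1}$ intersected with $V_0$ coincide with the SCCs of $\mathcal{G}''_{k-1}$; the player 1 vertices that sit in singleton SCCs of $\mathcal{G}'_{k-1}$ are irrelevant, as they are disjoint from $V_0$. Since a player 1 vertex $v$ lies in $F'_k$ precisely when $v \in U_{k-1}$ and $E(v) \subseteq S$ for some $S \in SCC(\mathcal{G}'_{k-1})$, and since $E(v) \subseteq V_0$ by bipartiteness, this SCC-alignment at stage $k-1$ yields $F'_k = F''_k$, i.e., (b) at stage $k$.

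Next I prove (a) at stage $k$ by exploiting the bipartite structure of $G$. A directed path in $\mathcal{G}'_k$ from $v \in V_0$ to $u \in V_0$ alternates between $V_0$ and $V_1$, i.e., has the form $v = v_0, v_1, v_2, \ldots, v_{2\ell} = u$ with $v_{2i+1} \in V_1$. Each such $v_{2i+1}$ was forced at some stage $j_i \le k$, and the outgoing edge chosen by $T'_{j_i}$ may differ from the one chosen by $T''_{j_i}$; call the latter $(v_{2i+1}, z_i)$. In $\mathcal{G}''_k$ the edge $(v_{2i}, z_i)$ is present, because $v_{2i} \in E^{-1}(v_{2i+1})$. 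Both $z_i$ and $v_{2i+2}$ belong to $E(v_{2i+1})$, which by the $Force$ condition lies in a single SCC $S \in SCC(\mathcal{G}'_{j_i - 1})$; applying the induction hypothesis (a) at stage $j_i - 1 < k$ yields a directed path from $z_i$ to $v_{2i+2}$ in $\mathcal{G}''_{j_i - 1}$, and hence in $\mathcal{G}''_k$. Concatenating these detours produces a path from $v$ to $u$ in $\mathcal{G}''_k$. The converse direction is symmetric: each edge of a $\mathcal{G}''_k$-path is unfolded into a two-step $V_0 \to V_1 \to V_0$ segment in $\mathcal{G}'_k$, with the same rerouting idea used to bridge any discrepancy between $T''$- and $T'$-choices.

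The main obstacle is precisely this mismatch between the outgoing edges selected by $T'_j$ and $T''_j$: without a rerouting argument one would have to assume the two constructions make identical choices, which the definitions do not guarantee. The saving grace is the defining property of $Force$, namely that all of $E(v)$ lies in one SCC at the moment $v$ is forced; together with the induction hypothesis, this makes the lemma insensitive to the selection rule used in $T'$ and $T''$.
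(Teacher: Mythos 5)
Your proof is correct, and its inductive skeleton (strong induction on $k$, carrying $F_k=F''_k$ and $U_k=U''_k$ as auxiliary invariants, and deriving the alignment $SCC(\mathcal{G}''_{k-1})=\{X\cap V_0 \mid X\in SCC(\mathcal{G}'_{k-1}),\ X\cap V_0\neq\emptyset\}$ from the path equivalence at stage $k-1$) is the same as the paper's. Where you genuinely diverge is in handling the nondeterministic edge selection: the paper, after establishing $F_k=F''_k$, simply declares ``we can set $T'_k=T''_k$'' and then proves an exact edge-level correspondence, namely $(v,u)\in E''_k$ iff there is $w\in\bigcup_{j\le k}F_j$ with $(v,w),(w,u)\in E'_k$. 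You instead keep the two selections independent and repair any mismatch by rerouting: since the $Force$ condition guarantees that all of $E(f)$ lies in a single SCC of $\mathcal{G}'_{j-1}$ at the moment $f$ is forced, the induction hypothesis supplies a $\mathcal{G}''_{j-1}$-path (respectively $\mathcal{G}'_{j-1}$-path) between the two candidate targets. Your version proves a strictly stronger statement --- the lemma holds for arbitrary, uncoordinated choices of $T'_k$ and $T''_k$ --- at the cost of a path-level rather than edge-level correspondence. The paper's synchronization is legitimate in context (it only needs the equivalence for \emph{some} admissible $\mathcal{G}'$-sequence, since Corollary~\ref{C:forced connected to SCC} is choice-independent), but your argument removes the need to justify that the WLOG is harmless, which is a small but real gain in robustness.
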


\begin{proof}
    If $k=0$ then $F_0=F''_0$, $U_0=U''_0$ and there are no paths between player 0's nodes in both $\mathcal{G}'_{0}$ and $\mathcal{G}''_{0}$, which holds the lemma. For $k>0$ our hypothesis is the following.
    \begin{itemize}
        \item For all $i<k$, $F_i=F''_i$ and $U_i=U''_i$.
        \item For all $i<k$ and $v,u\in V_0$, $(v,u)\in E''_i$ iff there is a $w\in \bigcup_{0\le j\le i}F_j$ such that $(v,w)$ and $(w,u)$ are in $E'_i$.
        \item For all $i<k$, and $v,u\in V_0$, there exists a path from $v$ to $u$ in $\mathcal{G}'_{i}$ iff there exists a path from $v$ to $u$ in $\mathcal{G}''_{i}$.
    \end{itemize}
    Therefore, 
    $$
    SCC(\mathcal{G}''_{k-1})=\{X\cap V_0\mid X\in SCC(\mathcal{G}'_{k-1}) \  \mbox{and}  \ X\cap V_0\ne \emptyset\}.
    $$ 
    Since $E\subseteq(V_0\times V_1)\cup (V_1\times V_0)$, for all $U\subseteq V_1$ and $S\subset V$ with $U\cap S =\emptyset$, $Force(U,S)=Force(U,S\cap V_0)$. Note that if $S\in SCC(\mathcal{G}'_{k-1})$ and $S\cap V_0 \ne \emptyset$ then $U_{k-1}\cap S=\emptyset$; if $S\in SCC(\mathcal{G}'_{k-1})$ and $S\cap V_0 = \emptyset$ then $Force(U_{k-1}, S)=\emptyset$. Therefore, $F_k=F''_{k}$,
   $U_{k}=U''_k$. Since $F_{k}=F''_{k}$, we can set $T'_k=T''_k$. Hence for all $v,u\in V_0$, $(v,u)\in \bigcup_{(f,y)\in T''_{k}}(E^{-1}(f)\times \{y\})$ iff there is a $w\in F_k$ such that $(v,w)$ and $(w,u)$ are in $E'_k$. Since $E''_k=E''_{k-1}\cup \bigcup_{(f,y)\in T''_{k}}(E^{-1}(f)\times \{y\})$,  for all $v,u\in V_0$ we have  $(v,u)\in E''_{k}$ iff there is a $w\in \bigcup_{0\le j\le k}F_j$ such that $(v,w), (w,u)\in E'_{k}$. Hence, for all $v,u\in V_0$ there is a path from $v$ to $u$ in $\mathcal{G}'_{k}$ iff there is a path from $v$ to $u$ in $\mathcal{G}''_{k}$. 
\end{proof}


\begin{lemma}\label{Lem:G' SCC to G'' and G SCC}
    $\mathcal{G}'_{s}$ is a SCC iff both $\mathcal{G}''_{s}$ and $\mathcal{G}$ are SCC. 
\end{lemma}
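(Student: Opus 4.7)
The plan is to use Lemma \ref{Lem:weak connectivity G' and G''} as the bridge between the two graphs and then treat player-1 vertices separately by invoking the definition of $Force$.

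For the forward direction, suppose $\mathcal{G}'_s$ is strongly connected. First observe that every edge in $E'_s$ is taken from $E$: the initial set $E'_0$ consists of outgoing edges of $V_0$ from $E$, and each later addition $T'_i$ picks an outgoing edge in $E$ of a forced vertex. Consequently every path in $\mathcal{G}'_s$ is a path in $\mathcal{G}$, so $\mathcal{G}$ is strongly connected. Next, for any $v,u\in V_0$ there is a path from $v$ to $u$ in $\mathcal{G}'_s$ by hypothesis, hence by Lemma \ref{Lem:weak connectivity G' and G''} also in $\mathcal{G}''_s$; since $V_0$ is the vertex set of $\mathcal{G}''_s$, this graph is strongly connected.

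For the backward direction, suppose both $\mathcal{G}$ and $\mathcal{G}''_s$ are strongly connected. Using Lemma \ref{Lem:weak connectivity G' and G''} in reverse, all of $V_0$ lies in a single SCC $C$ of $\mathcal{G}'_s$. The main step is to show $U_s=\emptyset$. Suppose, toward contradiction, that some player-1 vertex $v$ remains in $U_s$. Then $v$ has no outgoing edge in $\mathcal{G}'_s$, so $\{v\}$ is a singleton SCC of $\mathcal{G}'_s$ and therefore $v\notin C$. Strong connectivity of $\mathcal{G}$ guarantees that $v$ has outgoing edges in $E$, and by the bipartite structure every such edge lands in $V_0\subseteq C$. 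Thus $E(v)\subseteq C$ and $v\in(E^{-1}(C)\setminus C)\cap U_s$, giving $v\in Force(U_s,C)$. This would put $v$ into $F_{s+1}$, contradicting the choice of $s$ as the stabilization point. Hence every player-1 vertex belongs to some $F_k$ with $k\le s$ and thus has a unique outgoing edge in $\mathcal{G}'_s$ leading into $V_0\subseteq C$; combined with its incoming edges from $V_0$ (all present already in $E'_0$), such a vertex sits inside the SCC $C$. Therefore $V=V_0\cup V_1=C$, so $\mathcal{G}'_s$ is strongly connected.

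The substance of the argument lies in the $\Leftarrow$ direction, specifically in ruling out $U_s\neq\emptyset$: one must exhibit $C$ as the target SCC witnessing that any unforced $v\in U_s$ would still be forceable, contradicting stabilization. The remaining work is straightforward bookkeeping that uses the inclusion $E'_s\subseteq E$ and the observation that all $V_0$-to-$V_1$ edges of $\mathcal{G}$ already live in $E'_0$, so every path in $\mathcal{G}$ can be realized in $\mathcal{G}'_s$ once the player-1 vertices are absorbed into $C$.
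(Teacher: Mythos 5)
Your proof is correct and takes essentially the same route as the paper's: the forward direction uses $E'_s\subseteq E$ together with Lemma~\ref{Lem:weak connectivity G' and G''}, and the backward direction uses that lemma plus the bipartite structure of the arena. Your explicit argument that $U_s=\emptyset$ (an unforced $v\in U_s$ would lie in $Force(U_s,C)$ for the SCC $C\supseteq V_0$, contradicting the stabilization point) carefully fills in a step that the paper's proof leaves implicit when it asserts that every player~1 vertex has an outgoing edge to a player~0 vertex in $\mathcal{G}'_s$.
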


\begin{proof}
    $\Rightarrow$: Assume $\mathcal{G}'_{s}$ is a SCC. Since $E'_{s}\subseteq E$, $\mathcal{G}$ is strongly connected. By Lemma \ref{Lem:weak connectivity G' and G''}, $\mathcal{G}''_{s}$ is strongly connected.

    $\Leftarrow$: Assume both $\mathcal{G}''_{s}$ and $\mathcal{G}$ are strongly connected. By Lemma \ref{Lem:weak connectivity G' and G''}, player 0's vertices are strongly connected in $\mathcal{G}'_{s}$. Since $\mathcal{G}$ is strongly connected, all player 1's vertices have incoming edges from player 0's vertex  and outgoing edges to player 0's vertex. Therefore, $\mathcal{G}'_{s}$ is strongly connected.
\end{proof}

\noindent
By Corollary \ref{C:forced connected to SCC} and Lemma \ref{Lem:G' SCC to G'' and G SCC}, we have the following:

\begin{corollary}\label{C:forced connected to SCC in G and G''}
    Player 0 wins the connectivity game on $\mathcal{G}$ if and only if both $\mathcal{G}''_{s}$ and $\mathcal{G}$ are strongly connected.
\end{corollary}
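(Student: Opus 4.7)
The plan is to chain together the two results already established immediately prior, since the corollary is a direct logical consequence. First I would invoke Corollary \ref{C:forced connected to SCC}, which provides the equivalence between player 0 winning the connectivity game on $\mathcal{G}$ and the strong connectedness of $\mathcal{G}'_{s}$. This reduces the problem to verifying a condition on $\mathcal{G}'_{s}$ rather than on the original game.

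Next, I would apply Lemma \ref{Lem:G' SCC to G'' and G SCC}, which translates the strong connectedness of $\mathcal{G}'_{s}$ into the conjunction of two simpler conditions, namely that $\mathcal{G}''_{s}$ is strongly connected and that the underlying graph $\mathcal{G}$ is strongly connected. Substituting this equivalence into the one from Corollary \ref{C:forced connected to SCC} yields exactly the statement of Corollary \ref{C:forced connected to SCC in G and G''}.

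Since both ingredients are already in place, there is no real obstacle here; the only point to watch is to state the biconditional symmetrically so that the implications go through in both directions. The main obstacle was really absorbed into the proof of Lemma \ref{Lem:weak connectivity G' and G''}, which handled the delicate bookkeeping of comparing paths in $\mathcal{G}'_{k}$ and $\mathcal{G}''_{k}$ after the contraction of player 1's vertices; once that lemma and its corollary-Lemma \ref{Lem:G' SCC to G'' and G SCC} are in hand, the present corollary is just a two-line composition of equivalences.
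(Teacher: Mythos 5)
Your proposal is correct and matches the paper exactly: the paper derives this corollary by chaining Corollary \ref{C:forced connected to SCC} with Lemma \ref{Lem:G' SCC to G'' and G SCC}, which is precisely your argument.
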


Now we describe our algorithm that solves the connectivity game problem. Initially, check (in linear time) if 
$\mathcal{G}$ is strongly connected.  If not, return false. Else, start with $\mathcal{G}''_0$, and apply an
ISCCM algorithm to the sequence $\mathcal G''_0, \mathcal G''_1, \ldots$ to maintain the SCCs. During the process, 
compute $T''_1,T''_2,\ldots$. Initially, $T''_1=\{(v,u)\mid v\in V_1 \text{ and } E(v)=\{u\}\}$. At stage $i$, all edges in $\bigcup_{(f,y)\in T''_{i}}(E^{-1}(f)\times \{y\})$ are added to $\mathcal{G}''_{i-1}$ by running the ISCCM algorithm. The set 
$T''_{i+1}$ is computed when two SCCs in ISCCM algorithm are joined.
 The construction runs until $T''_{s+1}=\emptyset$. Finally, check if $\mathcal{G}''_s$ is strongly connected.

To implement the algorithm, construct $\mathcal G''_0, \mathcal G''_1, \ldots$ iteratively. At stage $i$, maintain $\mathcal{G}''=\mathcal{G}''_{i-1}$, $T=T''_{i}$, and compute $\mathcal{G}''_i$ and $T''_{i+1}$. We maintain $T$ using singly linked lists. Let \textit{first-T} be the first edge $(f,y)$ in $T$ and \textit{next-T}$((f,y))$ be the successor of $(f,y)$.   For $v\in V_1$, let \textit{outdegree}$(v)$ be the outdegree of $v$ in $\mathcal{G}$, and
for the canonical vertex $u\in \mathcal{G}''$, we let \textit{indegree}$(u,v)$ be the number of outgoing edges from $v$ in $\mathcal G''$ to the SCC identified by canonical vertex $u$. 
The tables \textit{indegree} and \textit{outdegree} are initialized by outgoing edges of player 1's vertices. At each stage, we traverse $T$ by variable $t$ and then set \textit{first-T}=\textit{last-T}=\textit{null} to build the new $T$ for the next stage. In each traversal $t=(f,y)$, add all edges in $E^{-1}(f)\times \{y\}$ to $\mathcal{G}''$ by running the ISCCM algorithm. When two SCCs are combined, \textit{indegree} is updated and if some player 1's vertex becomes forced, then it is put into $T$. Let UPDATE-INDEGREE$(x,y)$ be the function that updates \textit{indegree} and puts new forced vertices into $T$ where $x,y$ are  canonical vertices and $x$ is the new canonical vertex. UPDATE-INDEGREE is an auxiliary macro  in Figure \ref{F:UPDATE-INDEGREE}, intended to be expanded in-line. The macro is called when two SCCs are joined in the ISCCM algorithm. Once a forced vertex $v$ is in $T$, \textit{outdegree}$(v)$ is set to -1, and  $v$ will never be added to $T$ again. 
After adding all edges, $\mathcal{G}''=\mathcal{G}''_{i}$ and $T=T''_{i+1}$.  If $T=\emptyset$ then $\mathcal{G}''=\mathcal{G}''_s$. Otherwise,  the next stage is considered. We call the process 
the DFCG$(V_0,V_1,E)$ function presented in Figure \ref{F:DFCG-ISCCM-M-M}.
 

\begin{figure}[H]
    \centering 
    \scriptsize
    \begin{tabular}{l}
        \textbf{macro} UPDATE-INDEGREE(vertex $x$, vertex $y$)\\ 
        \hspace{4mm}\textbf{for} $v\in V_1$ \textbf{do}\\
            \hspace{8mm}\textit{indegree}$(x,v)$=\textit{indegree}$(x,v)$+\textit{indegree}$(y,v)$\\
            \hspace{8mm}\textbf{if} \textit{outdegree}$(v)$=\textit{indegree}$(x,v)$ \textbf{then}\\
                \hspace{12mm}\textit{outdegree}$(v)$=-1\\
                \hspace{12mm}\textbf{if} \textit{first-T}=\textit{null} \textbf{then} \textit{first-T}=$(v,x)$; \textit{last-T}=$(v,x)$\\
                \hspace{12mm}\textbf{else} \textit{next-T}(\textit{last-T})=$(v,x)$; \textit{last-T}=$(v,x)$\\
            \hspace{8mm}\textbf{end}\\
        \hspace{4mm}\textbf{end}\\
    \end{tabular}
    \caption{Implementation of UPDATE-INDEGREE function}
    \label{F:UPDATE-INDEGREE}
\end{figure}
\vspace{-2mm}
\begin{figure}[H]
    \centering 
    \scriptsize
    \begin{tabular}{l}
        bool \textbf{function} DFCG(vertex set $V_0$, vertex set $V_1$, edge set $E$)\\
        \hspace*{4mm}Run a linear-time algorithm to check the strong connectivity of $\mathcal{G}=(V_0,V_1,E)$\\
        \hspace*{4mm}\textbf{if} $\mathcal{G}$ isn't strongly connected \textbf{then return} \textit{false}\\
        \hspace{4mm}Initialize an ISCCM algorithm with $\mathcal{G}''=(V_0,E''=\emptyset)$\\
        \hspace{4mm}Initialize \textit{outdegree}$(v)$=0 and \textit{indegree}$(u,v)$=0 for all $v\in V_1$ and $u\in V_0$\\
        \hspace{4mm}\textbf{for} $v\in V_1$ \textbf{do}\\
            \hspace{8mm}\textbf{for} $u\in E(v)$ \textbf{do}\\
                \hspace{12mm}\textit{outdegree}$(v)$=\textit{outdegree}$(v)$+1; \textit{indegree}$(u,v)$=1\\
        \hspace{4mm}\textit{first-T}=\textit{null}; \textit{last-T}=\textit{null}\\
        \hspace{4mm}\textbf{for} $v\in V_1$ \textbf{do}\\
        \hspace{8mm}\textbf{if} \textit{outdegree}$(v)$=1 \textbf{then}\\
            \hspace{12mm}\textit{outdegree}$(v)$=-1; $\{u\}$=$E(v)$\\
            \hspace{12mm}\textbf{if} \textit{first-T}=\textit{null} \textbf{then} \textit{first-T}=$(v,u)$; \textit{last-T}=$(v,u)$\\
            \hspace{12mm}\textbf{else} \textit{next-T}(\textit{last-T})=$(v,u)$; \textit{last-T}=$(v,u)$\\
        \hspace{8mm}\textbf{end}\\
        \hspace{4mm}\textbf{end}\\
        \hspace*{4mm}\textbf{while} \textit{first-T}$\ne$ \textit{null} \textbf{do}\\
            \hspace*{8mm}$t=$\textit{first-T}; \textit{first-T}=\textit{null}; \textit{last-T}=\textit{null}\\
            \hspace*{8mm}\textbf{while} $t\ne$ \textit{null} \textbf{do}\\
                \hspace{12mm}$(f,y)$=$t$; $E''=E^{-1}(f)\times \{y\}$; $t=$\textit{next-T}$(t)$ \\
                \hspace{12mm}Add edges in $E''$ to $\mathcal{G}''$ by running ISCCM algorithm \\
                \hspace{11mm}with UPDATE-INDEGREE\\
            \hspace*{8mm}\textbf{end}\\
        \hspace*{4mm}\textbf{end}\\
        \hspace*{4mm}Run a linear-time algorithm to check the strong connectivity of $\mathcal{G}''$\\
        \hspace*{4mm}\textbf{if} $\mathcal{G}''$ is strongly connected \textbf{then return} \textit{true}\\
        \hspace*{4mm}\textbf{else return} \textit{false}

    \end{tabular}
    \caption{Implementation of DFCG function}
    \label{F:DFCG-ISCCM-M-M}
\end{figure}

\begin{theorem}\label{Thm:total charge of DFCG-ISCCM-M-M algorithm}
    The total time of the DFCG algorithm is bounded by  (1) the time for an ISCCM algorithm with $|V_0|$ vertices and $|V_0|^2$ edge additions, together with  (2) $\mathbf{O}(|V_0|\cdot|V_1|\log |V_0|)$.
     \end{theorem}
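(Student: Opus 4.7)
My plan is to decompose the total running time of DFCG into the calls made to the underlying ISCCM data structure (item (1) of the statement) and every operation performed outside those calls (item (2)), then bound each bucket separately. Before doing anything else I will check that the two buckets exhaust all the algorithm's work: looking at the pseudocode, every instruction either (i) initializes or updates the tables $\textit{outdegree}$, $\textit{indegree}$, $T$, or (ii) is part of a call to the ISCCM subroutine (including the in-line expansion of UPDATE-INDEGREE, which I will charge to the merges the ISCCM triggers).

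For the ISCCM bound, I will first observe that $\mathcal{G}''$ is always a simple digraph on vertex set $V_0$, and every edge ever presented to the ISCCM algorithm lies in $V_0 \times V_0$. Since multi-edges are forbidden, the total number of distinct edges passed to the data structure cannot exceed $|V_0|^2$, which matches the ISCCM parameters in item (1); I do not need to control when these edges are added, only how many distinct ones there can be. For the non-ISCCM overhead I will split the remaining work as follows. First, the initialization block that zeroes $\textit{outdegree}$ and $\textit{indegree}$ and scans $E$ costs $\mathbf{O}(|V_0|\cdot|V_1|)$. Second, the outer while-loop processes each $(f,y) \in T$; the crucial invariant is that once a vertex $f$ is placed in $T$, the line $\textit{outdegree}(f) = -1$ in UPDATE-INDEGREE guarantees that $f$ is never re-inserted, so the total number of edge-insertion attempts during the whole run equals $\sum_{f \text{ forced}} |E^{-1}(f)| \le |E|$, contributing $\mathbf{O}(|E|) = \mathbf{O}(|V_0|\cdot|V_1|)$ to the non-ISCCM work.

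The main technical step, which I expect to be the principal obstacle, is bounding the total cost of the UPDATE-INDEGREE macro, because that is where the $\log|V_0|$ factor enters. The macro is invoked precisely when two SCCs merge inside the ISCCM structure, and since $\mathcal{G}''$ has only $|V_0|$ vertices and the SCC count is monotonically non-increasing, at most $|V_0|-1$ merges occur during the entire algorithm. Each call iterates over all $v \in V_1$ doing a constant number of table reads and writes per $v$, giving a per-call cost of $\mathbf{O}(|V_1|)$, together with an $\mathbf{O}(\log|V_0|)$ amortized factor for accessing $\textit{indegree}(x, v)$ through the ISCCM's canonical-vertex mechanism and for relocating the $\textit{indegree}$ entries of the absorbed SCC to those of the surviving one. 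Multiplying the at most $|V_0|-1$ merges by the $\mathbf{O}(|V_1|\log|V_0|)$ per-call cost yields $\mathbf{O}(|V_0|\cdot|V_1|\log|V_0|)$, which dominates the initialization and outer-loop costs and establishes item (2). Combining this with the ISCCM bound from the previous paragraph completes the proof.
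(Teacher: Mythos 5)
Your proposal is correct and follows essentially the same route as the paper: the same split into ISCCM work versus table/list maintenance, the same observation that $\mathcal{G}''$ is a simple digraph on $V_0$ so at most $|V_0|^2$ distinct edge additions reach the data structure, and the same count of at most $|V_0|-1$ merges each costing $|V_1|$ table operations. The one divergence is your justification of the $\log|V_0|$ factor: you attribute it to canonical-vertex lookups and relocation of \textit{indegree} entries, whereas the disjoint-set finds are amortized $\mathbf{O}(1)$ in the paper's setup and the logarithm there comes from charging $\mathbf{O}(\log|V_0|)$ per arithmetic operation on \textit{indegree}/\textit{outdegree} entries (whose magnitudes are bounded by $|V_0|$) -- a difference in bookkeeping that does not affect the final bound.
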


\begin{proof}
    Note that $|E|\leq 2 \cdot |V_0|\cdot |V_1|$ and $|E''_s|\le |E|$. The DFCG algorithm starts with checking the strong connectivity of $\mathcal{G}$ and ends with checking the strong connectivity of $\mathcal{G}''$. These take $\mathbf{O}(|V_0|\cdot |V_1|)$ time.

    The graph $\mathcal{G}''$ is initialized with $|V_0|$ vertices and no edges. The construction adds  
    at most $|V_0|\cdot |V_1|$ edges  to $\mathcal{G}''$. There are at most $|V_0|^2$ edge additions which may change the connectivity of $\mathcal{G}''$, and each of the remaining duplicated edge addition takes $\mathbf{O}(1)$ time. So, the total time of maintaining SCCs consists of (1) $\mathbf{O}(|V_0|\cdot |V_1|)$ time and (2) time of an ISCCM algorithm with $|V_0|$ vertices and $|V_0|^2$ edge additions.

    Consider computations of \textit{outdegree} and \textit{indegree}. Since entries are within $[-1,|V_0|]$, 
    each computation on the entry takes $\mathbf{O}(\log |V_0|)$ time. The  \textit{outdegree} and \textit{indegree} are initialized by player 0's outgoing edges. This takes $\mathbf{O}(|V_0|\cdot |V_1|\log |V_0| )$ time. Since SCC are joined at 
    most $|V_0|-1$ times, UPDATE-INDEGREE is called at most $|V_0|-1$ times. During the initialization of $T$ or calling UPDATE-INDEGREE, there are $|V_1|$ additions and $|V_1|$ compare operations. This takes $\mathbf{O}(|V_0|\cdot|V_1|\log |V_0|)$ time in total. 

    Consider the maintenance of $T$.  When initializing $T$ or calling UPDATE-INDEGREE, all player 1's vertices are traversed to find a forced vertex. This takes $\mathbf{O}(|V_0|\cdot|V_1|)$ time. Each player 1's vertex is added into $T$ once it becomes forced and then it will never enter $T$ again. After a vertex enters $T$, it will be traversed one time and then removed from $T$. This takes $\mathbf{O}(|V_1|)$ time. 
\end{proof}

\subsection{Proofs of Theorem \ref{Thm:|V_0|} and Theorem \ref{Thm:|V_1|}}

The proof of Theorem \ref{Thm:|V_0|} uses the first framework together with the ISCCM$(m,k)$ algorithm presented in Section \ref{S:ISCCM(m,k)}. The proof of Theorem  \ref{Thm:|V_1|} uses the second framework. 

\begin{restateTheorem}{Thm:|V_0|}
    The connectivity game $\mathcal G$ can be solved in time  $\mathbf{O}((\sqrt{|V_1|}+1) |E| + |V_1|^2)$.
\end{restateTheorem}
\begin{proof}
    Apply the DFCG-M-K process in Section \ref{SS:first framework} to solve $\mathcal G$ with the ISCCM$(m,k)$ algorithm presented in  Section \ref{S:ISCCM(m,k)}. The decision condition \textit{the-same-scc}$(x,y)$=\textit{true} is implemented by \textit{find}$(x)$=\textit{find}$(y)$. The \textit{find}$(x)$  outputs the canonical vertex of the SCC of $x$. The \textit{find} is maintained in soft-threshold search algorithm and the time per find is $\mathbf{O}(1)$. The soft-threshold search algorithm is run iff $\mathcal{G}$ is strongly connected. So, the soft-threshold search is run with $|E|\ge|V|$. By Theorems \ref{Thm:total charge of DFCG-M-K algorithm}  and \ref{Thm:time complexity of SCC soft-threshold search}, and Corollary \ref{C:forced connected to SCC},  the theorem is proved. 
\end{proof}

\begin{restateTheorem}{Thm:|V_1|}
    There exists an algorithm that solves any given connectivity game $\mathcal G$ in time  $\mathbf{O}((|V_1|+|V_0|)\cdot |V_0|\log |V_0|)$. 
\end{restateTheorem}
\begin{proof}
    Apply the DFCG algorithm in Section \ref{SS:second framework} to solve the connectivity game where the ISCCM algorithm is implemented by the $\mathbf{O}(n^2\log n)$ time solution of Bender et al. \cite{bender2009new,bender2011new}. By Theorem \ref{Thm:total charge of DFCG-ISCCM-M-M algorithm} and Corollary \ref{C:forced connected to SCC in G and G''}, the theorem is proved.
\end{proof}

\section{The ISCCM$(m,k)$ algorithm} \label{S:ISCCM(m,k)} 



In this section we adapt  (into our setting) the algorithm provided by Haeupler et al. in \cite{haeupler2012incremental}  that solves 
the ISCCM problem and analyse it. 
Our strategy of maintaining the SCCs is this. Apply a SSCCM algorithm on the initial graph with $m-k$ edges to compute SCCs, list canonical vertices (representing the SCCs) in a topological order, and keep the relationships between canonical vertices. This produces an acyclic graph with at most $n$ 
vertices and $m-k$ edges. Use this graph to initialise any of the algorithms that solves the ISCCM problem and then run the selected algorithm on the $k$ addition of edges. The key is that the initial $m-k$ edges might always 
be used in running the algorithm. Hence, the algorithm might be dependent on all the $m$ edges. We want to reduce this dependency. 
We refine the algorithm of Haeupler et al. \cite{haeupler2012incremental}  so that the algorithm runs in linear time if $k$ is taken as a parameter, and the parameter constant is $\sqrt{k}$.

\smallskip

In solving  the  ISCCM$(m,k)$ problem three issues arise: (1) cycle detection, (2) topological order maintenance  and (3) maintenance of the SCCs. The compatible search algorithm from \cite{haeupler2012incremental} 
 (see Section \ref{S:soft-threshold search algorithm for ISCCM problem}, Appendix)  detects cycles and maintains the topological order after each edge $(v,w)$ addition. Recall that a topological order on 
a digraph is a total order "$<$" of the vertices such that for each edge $(v,w) \in E$ we have $v<w$. The algorithm uses the notion of {\em related edges}. Two distinct edges in a graph {\em are related} if they are on a common path. 
The compatible search algorithm applied in our setting of the ISCCM$(m,k)$ problem satisfies the following lemma:


\begin{lemma}\label{Lem: edge traversals of k edge additions}
    Given an acyclic graph with $m-k$ edges ($m\ge n$ and $m\ge k\ge 0$), the  compatible search algorithm 
    does $\mathbf{O}(\sqrt{k}m)$ edge traversals over $k$ edge additions.
\end{lemma}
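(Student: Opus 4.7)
The plan is to adapt Haeupler et al.'s \cite{haeupler2012incremental} threshold-based amortized analysis of compatible search. Their original bound is $\mathbf{O}(m^{3/2})$ edge traversals when $m$ edges are added to an initially empty graph, obtained by balancing a threshold parameter $\Delta$ between the budgeted cost of each individual search and the amortized cost of global topological reorganizations. The refinement exploits the fact that $m$ plays two distinct roles in the original analysis, namely ``number of triggering events'' and ``total graph size''; in the ISCCM$(m,k)$ setting these can be decoupled into $k$ and $m$ respectively.

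First I would observe that the initial $m-k$ edges form an acyclic graph on which Tarjan's static SSCCM algorithm builds a first topological order in $\mathbf{O}(m-k)$ time, so they contribute no compatible search at all; only the $k$ subsequent additions can trigger it. Then I would revisit the two-term cost decomposition in Haeupler's proof and rewrite it with the refined parameters: one term charges each of the $k$ triggering events at most $\mathbf{O}(\Delta)$ work for its budgeted local search, giving $\mathbf{O}(k\Delta)$; the other term charges the global reorderings, whose count is bounded by $\mathbf{O}(m/\Delta)$ because each such reorder must make measurable progress in the topological order of the full $m$-edge graph, and each reorder costs $\mathbf{O}(m)$, giving $\mathbf{O}(m^{2}/\Delta)$. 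The total cost is $\mathbf{O}(k\Delta+m^{2}/\Delta)$, which is minimized by choosing $\Delta=m/\sqrt{k}$ to yield $\mathbf{O}(\sqrt{k}\,m)$. Setting $k=m$ recovers the original $\mathbf{O}(m^{3/2})$ bound as a sanity check.

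The main obstacle I anticipate is to re-verify Haeupler's invariants faithfully in this refined setting. In particular one must check that the number of global reorders is governed by $m/\Delta$ independently of the number of trigger events $k$, and that the search-budget argument remains valid when the graph begins with a nontrivial acyclic prefix rather than being empty. I expect both to go through because the static SSCCM preprocessing installs exactly the topological order and acyclic-decomposition invariants that soft-threshold search relies on at startup; once those invariants are established, the remainder of the adaptation is essentially a mechanical re-balancing of $\Delta$ against $k$ in place of $m$.
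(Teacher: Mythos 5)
Your overall plan --- decouple the ``number of triggering events'' ($k$) from the ``graph size'' ($m$), set a threshold $\Delta=m/\sqrt{k}$, and charge the at most $k$ below-threshold searches $\mathbf{O}(\Delta)$ each for a total of $\mathbf{O}(\sqrt{k}\,m)$ --- matches the first half of the paper's argument. But your second term hides a genuine gap. You bound the expensive searches by ``count $\le \mathbf{O}(m/\Delta)$, each costing $\mathbf{O}(m)$.'' Neither factor is justified, and the count claim is false in general: at the chosen $\Delta$ there can be as many as $k=m^{2}/\Delta^{2}$ above-threshold searches (e.g.\ each traversing just over $\Delta$ edges), which is $\sqrt{k}$ times larger than your claimed $m/\Delta=\sqrt{k}$. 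If you then price each at its worst case $\mathbf{O}(m)$, the product is $\mathbf{O}(km)$, not $\mathbf{O}(m^{2}/\Delta)$. The ``measurable progress in the topological order'' you invoke is never made precise, and no potential bounded by $m$ that increases by $\Delta$ per big search exists here.

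What actually closes the argument (and what the paper does) is to bound the \emph{total} work of big searches, not their number, via the related-edge-pairs potential from Haeupler et al.: two edges are related if they lie on a common path, there are fewer than $\binom{m}{2}<m^{2}/2$ such pairs, and a compatible search whose forward phase traverses $s_i$ edges creates at least $s_i(s_i+1)/2$ new related pairs (with each forward traversal having a distinct backward twin, so total traversals are $\mathbf{O}(\sum_i s_i)$). For big searches $s_i>\Delta$, so each contributes more than $s_i\Delta/2$ new pairs, whence $\sum_{\text{big}} s_i< m^{2}/\Delta=\sqrt{k}\,m$ directly --- no per-search $\mathbf{O}(m)$ cost and no count bound are needed. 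You would also need to note, as the paper does, that the one edge addition that first creates a cycle is handled separately with a crude $\mathbf{O}(m)$ charge, since the potential argument only applies while the graph stays acyclic. Your observation that the initial $m-k$ edges are preprocessed by a static topological sort and trigger no searches is correct and is indeed the point of the ISCCM$(m,k)$ reformulation.
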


\begin{proof}
We assume that $k>0$ as $k=0$ case is trivial. Consider the edge addition process $e_1, e_2, \ldots$. The algorithm performs forward and backward searches 
when added edge does not agree with the topological order. \ Let $e_{t+1}=(v_{t+1},w_{t+1})$ be the first edge addition that creates a cycle. The compatible search algorithm does  $\mathbf{O}(m)$ edge traversals at this stage. 

\smallskip

Let $s=\{s_1,s_2,\ldots,s_{t}\}$ be the set of numbers such that $s_i$ is the number of edges traversed during the forward search of $i$th compatible search, where $1\leq i\leq t$. In  \cite{haeupler2012incremental}  it is proved that the edge addition of $e_i$, $1\leq i\leq t$, increases the number of related edge pairs by at least $s_i(s_i+1)/2$. Also, any edge traversed forward has a distinct twin traversed backward during the same search step.


\smallskip

    Call a search at stage $i$, $i\leq t$,  {\em small} if it does no more than $2m/\sqrt{k}$ edge traversals and {\em big} otherwise. Since there are at most $k$ small searches, together they do at most $2\sqrt{k}m$ edge traversals. Let 
    $$
    D=\{d_1,d_2,\ldots,d_q\}=\{i\mid i\in [1,t]\text{ and }s_{i}>m/\sqrt{k}\}.
    $$ 
    Thus, $D$ contains indices  $i$ of all $s_i$ such that the search at stage $i$ is big. All big searches add at least $\sum_{i=1}^{q} s_{d_i}(s_{d_i}+1)/2>m/\sqrt{k}\sum_{i=1}^{q} s_{d_i}/2$ many related edge pairs. Since there are at most $\binom{m}{2}<m^{2}/2$ related edge pairs, we have $m^{2}/2>m/\sqrt{k}\sum_{i=1}^{q} s_{d_i}/2$. Hence, $\sum_{i=1}^{q} s_{d_i}<\sqrt{k}m$. Therefore, all the big searches do at most $2\sqrt{k}m$ edge traversals.
\end{proof}

To implement the compatible search algorithm efficiently, we utilise  the soft-threshold search algorithm  
from \cite{haeupler2012incremental} (Section  \ref{S:soft-threshold search algorithm for ISCCM problem}, Appendix). 

\begin{lemma} \label{Lem:time complexity of soft-threshold search}
    Given an acyclic graph with $m-k$ edges ($m\ge n$ and $m\ge k\ge 0$), and its topological order, the soft-threshold search takes $\mathbf{O}((\sqrt{k} + 1)m)$ time over the initialization and $k$ edge additions.
\end{lemma}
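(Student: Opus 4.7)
The plan is to combine the edge-traversal bound from Lemma \ref{Lem: edge traversals of k edge additions} with an amortized $\mathbf{O}(1)$ per-traversal cost for the soft-threshold implementation, plus a linear initialization cost. First, I would handle initialization: given the acyclic graph with $m-k$ edges together with its topological order, the auxiliary structures needed by the soft-threshold search (pseudo-topological labels, the priority buckets indexed by these labels, the incoming/outgoing arc lists, and a union-find structure for SCC identity) can be built in a single sweep over the vertices and edges in $\mathbf{O}(n+m)=\mathbf{O}(m)$ time, using the assumption $m\ge n$. This sweep accounts for the additive $+m$ term in the claimed bound.

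Second, I would import Lemma \ref{Lem: edge traversals of k edge additions} to bound the total number of forward and backward edge traversals performed across the $k$ additions by $\mathbf{O}(\sqrt{k}\,m)$, and then argue that each traversal in the soft-threshold implementation costs $\mathbf{O}(1)$. This cost breakdown is: a constant number of pointer operations on the arc lists, a constant number of comparisons against the current soft threshold, insertion/deletion in a bucket (implemented as a doubly linked list keyed by pseudo-topological index, hence $\mathbf{O}(1)$), and one union-find \textit{find} to test membership in the current SCC. The union-find operations are $\mathbf{O}(1)$ amortized because at most $n-1$ \textit{union}s ever occur (once two SCCs merge they do not split) and each \textit{find} can be charged to the edge traversal that triggered it.

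Third, I would address renumbering. Whenever an addition either closes a cycle or forces a reordering, the soft-threshold algorithm reassigns pseudo-topological labels only within the window of vertices actually visited by the current forward and backward searches. The cost of this reassignment is thus proportional to the number of edges traversed during that addition and is already subsumed by the traversal bound. Summing the three contributions gives $\mathbf{O}(m)+\mathbf{O}(\sqrt{k}\,m)=\mathbf{O}((\sqrt{k}+1)m)$.

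The main obstacle is verifying that the amortized $\mathbf{O}(1)$-per-traversal analysis of Haeupler et al., originally written for an algorithm that starts from the empty graph and grows incrementally, carries over unchanged when the data structure is primed with a precomputed acyclic graph on $m-k$ edges. Concretely, I need to check that the initial bucket layout constructed from the given topological order satisfies the invariants the soft-threshold analysis relies on (monotonicity of labels along arcs, bounded bucket sizes relative to the current threshold, and the charging scheme used for twin forward/backward traversals). Once these invariants are established at the end of initialization, the subsequent $k$ additions are indistinguishable from a standard execution and the cited amortized analysis applies verbatim, completing the proof.
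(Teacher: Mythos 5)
Your proposal is correct and follows essentially the same route as the paper: an $\mathbf{O}(m)$ initialization from the given acyclic graph and topological order, the $\mathbf{O}(\sqrt{k}m)$ edge-traversal bound imported from Lemma \ref{Lem: edge traversals of k edge additions}, and the amortized $\mathbf{O}(1)$ cost per traversal (including reordering) cited from Haeupler et al. The paper states these three steps without elaboration, so your added detail---in particular the check that the amortized invariants survive priming the data structure with a nonempty acyclic graph---only makes explicit what the paper leaves to the citation.
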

\begin{proof}
    Given acyclic graph and topological order, the initialization of soft-threshold search takes $\mathbf{O}(m)$ time.
  The total charge of soft-threshold search is $\mathbf{O}(1)$ time per edge traversal (see \cite{haeupler2012incremental}).   
   With Lemma \ref{Lem: edge traversals of k edge additions}, we are done. \end{proof}

We use disjoint set data structure \cite{tarjan1975efficiency,tarjan1984worst} to represent the vertex partition defined by SCCs. This supports the query \textit{find}$(v)$ which returns the canonical vertex,  and the operation \textit{unite}$(x,y)$, which given canonical vertices $x$ and $y$, joins the components of $x$ and $y$, and makes $x$  a new 
canonical vertex. 

The soft-threshold search maintains the canonical vertices.  When an edge addition combines several SCCs, we combine the incoming and outgoing lists, and remove the non-canonical vertices. This may create multiple edges between the same pair of SCCs and loops. 
We delete the loops during the search.

\begin{lemma}\label{Lem: SCC edge traversals of k edge additions}
    Given an acyclic graph with $m-k$ edges ($m\ge n$ and $m\ge k\ge 0$) and its topological order, with a sequence of 
    $k$ edge additions, maintaining SCCs via soft-threshold search does $\mathbf{O}(\sqrt{k}m)$ edge traversals.
\end{lemma}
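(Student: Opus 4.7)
The plan is to repeat the accounting of Lemma \ref{Lem: edge traversals of k edge additions} essentially verbatim, with the acyclic graph replaced by the current condensation of canonical vertices, and to verify that each of the three quantitative ingredients of that proof survives the merging of SCCs. The condensation is always a DAG, so the soft-threshold/compatible search still operates in its original setting; what needs checking is that loop deletion and vertex contraction do not invalidate the counting.

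First I would fix the bookkeeping. When an edge addition does not merge any SCCs, the situation is identical to the acyclic case. When it closes a cycle, I would contract the SCCs along that cycle into a single canonical vertex using \textit{unite}, splice the incoming and outgoing adjacency lists, and delete every edge that has become a loop. Any duplicated edges between the same pair of canonical vertices remain. Because only the initial $m-k$ edges plus the $k$ added edges ever exist, and loop deletion only removes edges, the number of edges in the working graph is at all times bounded by $m$; hence the global cap $\binom{m}{2} < m^2/2$ on related edge pairs still holds.

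Next I would re-run the forward/backward search accounting. Let $s_i$ be the number of edges traversed in the forward search triggered by the $i$th cycle-creating addition, for $i=1,\ldots,t$ with $t \le k$, and let every forward-traversed edge have its distinct backward twin exactly as before. The key reusable fact is that the $i$th search creates at least $s_i(s_i+1)/2$ new related-edge pairs in the current DAG: contraction can only identify endpoints and therefore preserves the property of two edges lying on a common path, so pairs counted by the original argument are not double-counted across searches. Splitting searches into small ($s_i \le 2m/\sqrt{k}$) and big ($s_i > m/\sqrt{k}$) as in Lemma \ref{Lem: edge traversals of k edge additions}, the small searches contribute at most $2\sqrt{k}\,m$ traversals over the at most $k$ stages, and the inequality $\tfrac{m}{\sqrt{k}} \sum s_{d_i}/2 < m^2/2$ forces $\sum s_{d_i} < \sqrt{k}\,m$, bounding the big-search traversals by another $2\sqrt{k}\,m$.

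The main obstacle will be the justification of the $s_i(s_i+1)/2$ lower bound on newly created related edge pairs under contractions: after a \textit{unite}, several distinct vertices collapse to one canonical vertex, and one must argue that the pairs the search certifies as related are not pairs that had already been charged at an earlier stage. I expect this to fall out of the invariant that before the $i$th search the topological order on canonical vertices is consistent with the current DAG, so the edges the search reaches during its forward and backward probes are precisely those that were previously order-incompatible with the newly added edge; this is the same situation analysed in \cite{haeupler2012incremental}, and transporting their argument from edges of a DAG to edges of the condensation is a matter of working modulo the disjoint-set structure rather than a new combinatorial estimate.
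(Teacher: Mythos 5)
Your overall plan---re-running the small/big search accounting of Lemma \ref{Lem: edge traversals of k edge additions} on the condensation of canonical vertices---is the right starting point, and it is also how the paper proceeds. But the step you yourself flag as ``the main obstacle'' is a genuine gap, and your proposed resolution does not close it. The problem is not double-counting of pairs across searches (relatedness is monotone under contraction, as you observe); it is that for a cycle-creating addition the compatible pairs $\bigl((u,x),(y,z)\bigr)$ with $\textit{find}(u)<\textit{find}(z)$ need not become \emph{newly} related at all. The fact from \cite{haeupler2012incremental} is a dichotomy: either such a pair is unrelated before the addition and related afterwards, or it was already related before the addition and the addition turns both edges into loops. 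For pairs of the second kind the addition creates no new related pair, so your claimed lower bound of $s_i(s_i+1)/2$ on the increase in the number of related pairs, with $s_i$ the \emph{total} number of forward traversals in the $i$th search, is false whenever the search traverses edges of the cycle being contracted. And unlike in Lemma \ref{Lem: edge traversals of k edge additions}, where only the single final addition creates a cycle and its $\mathbf{O}(m)$ traversals are absorbed separately, here there can be up to $\min(k,n-1)$ cycle-creating additions, so these traversals cannot be swept into one $\mathbf{O}(m)$ term for ``the stage that closes a cycle.''

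The paper's proof repairs exactly this point by splitting each search's edge traversals into those of edges that become loops as a result of that addition and those that do not. The former are charged globally: an edge can become a loop at most once and there are at most $m$ edges, so over all $k$ additions these traversals total $\mathbf{O}(m)$. Only the latter are charged against the related-pairs potential; for those, the dichotomy guarantees that every compatible pair becomes newly related, so the small/big accounting of Lemma \ref{Lem: edge traversals of k edge additions} applies and yields $\mathbf{O}(\sqrt{k}m)$. Your write-up needs this two-way charging scheme; without it the potential argument does not go through.
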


\begin{proof}
    We assume that $k>0$ as $k=0$ case is trivial. Divide the edge traversals during a search into those of edges that become loops, and those that do not. Note that an edge becomes a loop only if some edge addition triggers search.
    Over all $\mathbf{O}(m)$ edges turn into loops. 
       
    Suppose the addition of $(v,w)$ triggers a search. Let $(u,x)$ and $(y,z)$ be edges traversed during 
    forward and backward search, respectively,  such that $\textit{find}(u) <\textit{find}(z)$. In  \cite{haeupler2012incremental}  it is proved that either $(u,x)$ and $(y,z)$ are unrelated before the addition of $(v,w)$ but related afterwards, or they are related before the addition and the addition makes them into loops. The proof of Lemma \ref{Lem: edge traversals of k edge additions} implies that there are $\mathbf{O}(\sqrt{k}m)$ traversals of edges that do not turn into loops.
\end{proof}

We apply the lemmas above to solve the ISCCM$(m,k)$ problem.
The algorithm is in Section  \ref{S:soft-threshold search algorithm for ISCCM problem}, Appendix.


\begin{theorem}\label{Thm:time complexity of SCC soft-threshold search}
  The    ISCCM$(m,k)$ problem can be solved in  
  $\mathbf{O}((\sqrt{k}+1)m+\min(k+1,n)\log (\min(k+1,n)))$ time.
\end{theorem}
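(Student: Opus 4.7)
The plan is to assemble the final algorithm and its analysis from the three lemmas already established in this section. Given the input instance with $n$ vertices, $m-k$ initial edges and $k$ subsequent edge additions, I would first invoke Tarjan's static SSCCM algorithm on the initial graph. This computes, in $\mathbf{O}(m-k)=\mathbf{O}(m)$ time, the strongly connected components and their canonical vertices, produces the condensation (an acyclic digraph on at most $\min(n,m-k+1)$ canonical vertices with at most $m-k$ edges), and supplies a topological order of these canonical vertices as required by the soft-threshold search. This preprocessing also initializes the disjoint-set data structure of \cite{tarjan1984worst} on the canonical vertices.

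Next, I would process the $k$ edge additions by the soft-threshold search of \cite{haeupler2012incremental}, adapted as described above: every edge traversal consults $\textit{find}$ on its endpoints; whenever an added edge $(v,w)$ with $\textit{find}(w)<\textit{find}(v)$ in the current topological order forces a cycle, the search identifies the set of canonical vertices that must be unified, applies $\textit{unite}$ to them, splices their incoming/outgoing lists together, discards non-canonical vertices, and removes any self-loops created by the merge. Detection of cycles and reordering of the topological labels is handled exactly by the compatible search underlying the soft-threshold method.

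Correctness is inherited from \cite{haeupler2012incremental}; only the running-time accounting needs to be redone. For this I would combine the three lemmas: Lemma \ref{Lem:time complexity of soft-threshold search} bounds the cost of the topological-order maintenance and ordinary edge traversals of soft-threshold search by $\mathbf{O}((\sqrt{k}+1)m)$ over the initialization and $k$ additions; Lemma \ref{Lem: SCC edge traversals of k edge additions} bounds the additional traversals induced by SCC maintenance (including those of edges that eventually become loops) by $\mathbf{O}(\sqrt{k}m)$. The only remaining cost is disjoint-set bookkeeping. At most $\min(k+1,n)$ distinct canonical vertices ever exist (at most $n$ canonical vertices initially, and the number of merges is bounded above both by $k$ and by the number of canonical vertices minus one), so using the data structure of \cite{tarjan1984worst} the total cost of all $\textit{find}$ and $\textit{unite}$ calls on this universe is $\mathbf{O}(\min(k+1,n)\log\min(k+1,n))$. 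Summing the three contributions gives the stated $\mathbf{O}((\sqrt{k}+1)m+\min(k+1,n)\log\min(k+1,n))$ bound.

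The main obstacle I anticipate is not in the high-level structure of the argument but in making rigorous the interface between the edge-traversal lemmas (proved for the compatible/soft-threshold search on an acyclic digraph) and the actual run in which merges destroy edges and collapse vertices. In particular, one must verify that the amortized related-edge-pairs potential argument behind Lemma \ref{Lem: edge traversals of k edge additions} still applies after list splicing and loop deletion, i.e.\ that each traversed edge in a forward search either pairs with a fresh related backward edge or is charged to an edge that becomes a loop at this step, as in \cite{haeupler2012incremental}; once that accounting is in place, the other ingredients go through routinely.
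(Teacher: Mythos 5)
Your proposal is correct and follows essentially the same route as the paper's own proof: run Tarjan's algorithm on the initial $m-k$ edges to get the condensation and topological order in $\mathbf{O}(m)$ time, process the $k$ additions with the SCC-adapted soft-threshold search, and sum the bounds from Lemma \ref{Lem:time complexity of soft-threshold search}, Lemma \ref{Lem: SCC edge traversals of k edge additions}, and the $\mathbf{O}(\min(k+1,n)\log(\min(k+1,n)))$ disjoint-set cost. The interface concern you raise at the end is exactly what Lemma \ref{Lem: SCC edge traversals of k edge additions} and the $\mathbf{O}(m)$ bound on loop creations/deletions are there to handle, so the accounting closes as in the paper.
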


\begin{proof}
Thus, our solution of the ISCCM$(m,k)$ problem consists of 3 steps. In the first step,  we  use Tarjan's algorithm to find the SCCs of the initial graph with $m-k$ edges. In the second step, we initialise the ISCCM algorithm with the output of the first step.
In the third step,  we run the ISCCM algorithm on remaining $m-k$ edge additions. The first two steps take  $\mathbf{O}(m)$ time in total. Then we apply the soft-threshold search algorithm as the ISCCM algorithm. Now we analyse this process.

Assume that during the process an edge  $(v,w)$ is added such that $find(v)>find(w)$. Each search step either traverses two edges or deletes one or two loops. An edge can only become a loop once and be deleted once.  So the time for such events is $\mathbf{O}(m)$ over all edge additions. The edges in $Y$ (see \ref{S:soft-threshold search algorithm for ISCCM problem}) are 
traversed by the search so that the time to form the new component and reorder the vertices is $\mathbf{O}(1)$ per edge traversal. By Lemma \ref{Lem: SCC edge traversals of k edge additions} and the proof of Lemma \ref{Lem:time complexity of soft-threshold search}, the ISCCM algorithm part takes $\mathbf{O}((\sqrt{k}+1)m)$ time plus $\mathbf{O}(\min(k+1,n)\log (\min(k+1,n)))$ time (derived from the disjoint set data structure) in total. 
\end{proof}

\section{Explicitly given  M\"uller games}
\subsection{Full proof of Horn's algorithm}\label{SS:proof of Horn}
We start with standard notions about  games on graphs. 
Let $\mathcal{G}$ be a M\"uller game.  A set $S\subseteq V$ determines a subgame in $\mathcal{G}$ if for all $v\in S$ we have  $E(v)\cap S\ne \emptyset$. We call $\mathcal{G}(S)$, the subgame of $\mathcal{G}$ determined by $S$. The set $S\subseteq V$ is a $\sigma$-trap in $G$ if $S$ determines a subgame in $\mathcal{G}$ and $E(S\cap V_\sigma)\subseteq S$.


By  $Win_\sigma(\mathcal{G})$ we denote the set of all vertices $v$ in $\mathcal{G}$ such that player $\sigma$ wins game $\mathcal{G}$ starting from $v$.

Let $Attr_{\sigma}(X, G(Y))$ be a set of all vertices $v$ in $Y$ such that player $\sigma$ can force the token from $v$ to $X$ in game $\mathcal G(Y)$. 

Let $\Omega$ be the set of all winning sets of the M\"uller game $\mathcal{G}$. We can {\em 
topologically linear order $<$} the set $\Omega$, that is, 
for all distinct $X,Y\in \Omega$, if $X\subsetneq Y$ then $X<Y$. Thus, if $W_1< W_2< \ldots < W_s$ is a topological linear order on $\Omega$ then we have the following. If $i<j$ then $W_i\not\supseteq W_j$.

Below we provide several results that are interesting on their own. We will also use them in our analysis of M\"uller games. 

\begin{lemma}\label{L:attract or wins}
    Let $\mathcal{G}$ be a game, $F_0=\Omega$ and $F_1=2^{V}\setminus \Omega$. If $V\in F_\sigma$ and for all $v\in V$, either $Attr_\sigma(\{v\}, G)=V$ or player $\sigma$ wins $\mathcal{G}(V\setminus Attr_\sigma(\{v\}, G))$ then player $\sigma$ wins $\mathcal{G}$.
\end{lemma}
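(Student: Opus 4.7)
The plan is to construct an explicit winning strategy for player $\sigma$ guided by the dichotomy in the hypothesis. Fix any enumeration $v_1,\dots,v_n$ of $V$, and for each $j$ set $A_j := Attr_\sigma(\{v_j\},G)$ and $S_j := V\setminus A_j$. The hypothesis equips us, for every $j$, with either an attractor strategy $\alpha_j$ forcing the token to $v_j$ from any vertex of $V$ (when $A_j=V$), or a winning strategy $\tau_j$ for $\sigma$ in the subgame $\mathcal{G}(S_j)$ (when $A_j\neq V$). Two standard trap facts about attractors do all of the work below: $A_j$ is a $\bar\sigma$-trap, so once the token is in $A_j$ player $\bar\sigma$ cannot leave, and $\alpha_j$ drives the token to $v_j$ within at most the attractor rank; and $S_j$ is a $\sigma$-trap, so once the token is in $S_j$ player $\sigma$ playing $\tau_j$ cannot leave, although $\bar\sigma$ may choose to exit to $A_j$.

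I would define the strategy $\pi$ to carry a cyclic ``current target'' $j\in\{1,\dots,n\}$, initially $j=1$, and advanced to $j+1$ (with wrap-around from $n$ to $1$) each time the token reaches $v_j$. At each step, with the token at $u$ and target $j$, $\pi$ plays $\alpha_j$ when $u\in A_j$ and plays $\tau_j$ when $u\in S_j$. The main argument then analyses an arbitrary infinite play $\rho$ consistent with $\pi$ by a dichotomy on whether the target advances infinitely often.

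If the target advances infinitely often, every index $j$ becomes the target infinitely often, and each advance from $j$ witnesses a visit to $v_j$, so $\mathsf{Inf}(\rho)=V$; since $V\in F_\sigma$ by hypothesis, $\sigma$ wins. Otherwise the target stabilizes at some $j^{*}$ after a finite prefix, and I claim the tail of $\rho$ must then lie entirely in $S_{j^{*}}$. Indeed, if the tail ever entered $A_{j^{*}}$, the $\bar\sigma$-trap property of $A_{j^{*}}$ combined with the bounded-time forcing guarantee of $\alpha_{j^{*}}$ would drive the token to $v_{j^{*}}$ in finitely many moves, advancing the target and contradicting its stabilization. Hence the tail of $\rho$ is a play in $\mathcal{G}(S_{j^{*}})$ consistent with the winning strategy $\tau_{j^{*}}$, so $\mathsf{Inf}(\rho)$ lies in the appropriate member of $F_\sigma$ and $\sigma$ wins here too.

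The main obstacle is the stabilized-target case, where one must carefully check that every excursion into $A_{j^{*}}$ must advance the target: this hinges on $A_{j^{*}}$ being a $\bar\sigma$-trap in the whole of $G$ (not only within some subgame) together with $\alpha_{j^{*}}$ giving a uniform finite bound on the number of moves to reach $v_{j^{*}}$. The remaining bookkeeping is routine: $\tau_{j^{*}}$ winning in the subgame translates to $\sigma$ winning in $\mathcal{G}$ once the play is confined to $S_{j^{*}}$, since $\mathsf{Inf}(\rho)\subseteq S_{j^{*}}\subseteq V$, and membership of $\mathsf{Inf}(\rho)$ in $\Omega$ (for $\sigma=0$) or outside $\Omega$ (for $\sigma=1$) coincides with the subgame's winning condition.
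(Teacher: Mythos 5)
Your proof is correct and follows essentially the same route as the paper's: cycle through the vertices, play the attractor strategy when the token is in $Attr_\sigma(\{v_j\},G)$ and the subgame winning strategy on the complement otherwise, then split on whether the target advances infinitely often (visiting all of $V$, which wins since $V\in F_\sigma$) or stabilizes (confining the tail to some $V\setminus Attr_\sigma(\{v_{j^*}\},G)$, where $\tau_{j^*}$ wins). Your write-up is in fact more detailed than the paper's; the only small imprecision is calling $A_j$ a $\bar\sigma$-trap --- the target vertex $v_j$ itself may have edges leaving $A_j$ --- but this is harmless because the target advances exactly when $v_j$ is reached.
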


\begin{proof}
    Let $V=\{v_1,v_2,\ldots,v_n\}$. We construct a winning strategy for player $\sigma$ as follows. Starting at $i=1$, 
    Player $\sigma$ considers $v_{i}$. 
    \begin{itemize}
        \item If the token is in $Attr_\sigma(\{v_i\}, G)$ then player $\sigma$ forces the token to $v_i$ and then considers next vertex $v_{i \text{ mod }n + 1}$.
        \item Otherwise player $\sigma$ follows the winning strategy in the game $\mathcal{G}(V\setminus Attr_\sigma(\{v_i\}, G))$.
    \end{itemize}
    
    With this strategy, there are two possible outcomes. If the token stays in some $V\setminus Attr_\sigma(\{v_i\}, G)$ forever then player $\sigma$ wins. \ Otherwise, for all $i=1,2,\ldots,n$, the token visits $v_i$ infinitely often and player $\sigma$ wins as $V\in F_{\sigma}$. 
\end{proof}

\begin{lemma}\label{L:P1 can't win 0-traps}
Let $\mathcal S=\{S_1,S_2,\ldots, S_k\}\subseteq 2^{V}\setminus \{V\}$ be the collection of  
 all 0-traps in $\mathcal{G}$ and  $V\in \Omega$. \  If for all $S_i\in \mathcal S$, player 1 can't win $\mathcal{G}(S_i)$ then player 0 wins $\mathcal{G}$.
\end{lemma}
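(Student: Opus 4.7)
The plan is to apply Lemma~\ref{L:attract or wins} with $\sigma=0$. Since $V\in\Omega=F_{0}$ is given, it suffices to establish, for every $v\in V$, that either $Attr_0(\{v\},G)=V$, or player~$0$ wins the subgame on the complement of this attractor; Lemma~\ref{L:attract or wins} will then deliver the conclusion.

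Fix $v\in V$ and set $A_v:=Attr_0(\{v\},G)$. If $A_v=V$ there is nothing to show, so assume $A_v\subsetneq V$. The first step is the standard attractor-complement identity: I would verify that $S_v:=V\setminus A_v$ is a $0$-trap. For $w\in S_v\cap V_0$, any successor in $A_v$ would already force $w$ into $A_v$ by the very definition of the $0$-attractor, hence $E(w)\subseteq S_v$. For $w\in S_v\cap V_1$, the condition $w\notin A_v$ says that not all successors of $w$ lie in $A_v$, so $E(w)\cap S_v\neq\emptyset$. Together these yield the trap property $E(S_v\cap V_0)\subseteq S_v$ and confirm that $S_v$ determines a subgame; since $v\in A_v$ gives $S_v\subsetneq V$, we conclude $S_v\in\mathcal{S}$.

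By the hypothesis of the lemma, player~$1$ cannot win $\mathcal{G}(S_v)$. Müller games are Borel games and hence determined by Martin's theorem (already cited in the paper), so player~$0$ wins $\mathcal{G}(S_v)$. Both clauses of the dichotomy required by Lemma~\ref{L:attract or wins} therefore hold, and we conclude that player~$0$ wins $\mathcal{G}$.

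The one delicate point I anticipate is the subgame-determining condition hidden inside the $0$-trap definition: when dead ends for player~$0$ are permitted, one must rule out any $w\in S_v\cap V_0$ with $E(w)=\emptyset$, since such a $w$ would violate $E(w)\cap S_v\neq\emptyset$ and spoil the membership $S_v\in\mathcal{S}$. Under the standard convention that every vertex has a successor this is automatic; otherwise a preliminary step that absorbs player~$0$'s dead ends into $Attr_1(\cdot)$ settles the matter without affecting the rest of the argument. Apart from this bookkeeping, the proof is a direct combination of the attractor-complement identity, Martin's determinacy, and Lemma~\ref{L:attract or wins}.
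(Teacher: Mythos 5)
There is a genuine gap at the step ``player~1 cannot win $\mathcal{G}(S_v)$, so by determinacy player~0 wins $\mathcal{G}(S_v)$.'' In this paper, ``player $\sigma$ wins $\mathcal{G}(X)$'' means player $\sigma$ wins from \emph{every} vertex of $X$, and accordingly the hypothesis ``player~1 can't win $\mathcal{G}(S_i)$'' only asserts $Win_1(\mathcal{G}(S_i))\subsetneq S_i$ --- the paper's own proof makes this explicit by writing $Win_1(\mathcal{G}(S_\ell))\subset S_\ell$ and then treating the case $Win_1(\mathcal{G}(S_\ell))\neq\emptyset$ separately, and the lemma is later invoked in contrapositive form to produce a $0$-trap on which player~1 wins \emph{from everywhere}. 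Under this reading, determinacy plus the hypothesis gives only $Win_0(\mathcal{G}(S_v))\neq\emptyset$, not $Win_0(\mathcal{G}(S_v))=S_v$; but Lemma~\ref{L:attract or wins} needs player~0 to win the subgame on $V\setminus Attr_0(\{v\},G)$ from every vertex, since the token may be anywhere in that set when the strategy switches. So your application of Lemma~\ref{L:attract or wins} is not yet justified.

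The missing ingredient is an induction that upgrades ``not won everywhere by player~1'' to ``won everywhere by player~0'' uniformly over all $0$-traps. The paper orders $\mathcal{S}$ topologically by inclusion and argues: if $Win_1(\mathcal{G}(S_\ell))\neq\emptyset$, it is itself a $0$-trap of $\mathcal{G}$ properly contained in $S_\ell$, hence an earlier member of $\mathcal{S}$ on which, by the inductive hypothesis, player~0 wins everywhere; yet player~1 wins the subgame determined by its own winning region, a contradiction. Hence $Win_1(\mathcal{G}(S_\ell))=\emptyset$ and determinacy gives $Win_0(\mathcal{G}(S_\ell))=S_\ell$. Only after this bootstrapping does the attractor-complement argument you describe (which is correct as far as it goes, including the observation that $V\setminus Attr_0(\{v\},G)$ is a $0$-trap) go through. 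You should insert this induction before invoking Lemma~\ref{L:attract or wins}.
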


\begin{proof}
  Topologically order $\mathcal S$: $S_1 <\ldots< S_k$. By assumption, 
    player 0 wins $\mathcal{G}(S_{1})$. Consider the game $\mathcal{G}(S_\ell)$, $1<\ell  \leq k$. Then $Win_1(\mathcal{G}(S_\ell))\subset S_\ell$ since player 1 can't win $\mathcal{G}(S_\ell)$. If $Win_1(\mathcal{G}(S_\ell))=\emptyset$ then player 0 wins $\mathcal{G}(S_\ell)$.  Otherwise, $Win_1(\mathcal{G}(S_\ell))$ is a 0-trap in 
   $\mathcal{G}$.  By hypothesis,  player 0 wins $\mathcal{G}(Win_1(\mathcal{G}(S_\ell)))$. This is a contradiction. Thus, for all $S_i\in \mathcal S$, player 0 wins $\mathcal{G}(S_i)$. Then for all $v\in V$, $Attr_0(\{v\}, G)=V$ or player 0 wins $\mathcal{G}(V\setminus Attr_0(\{v\}, G))$ since $V\setminus Attr_0(\{v\}, G)$ is a 0-trap in $\mathcal{G}$. By Lemma \ref{L:attract or wins}, player 0 wins $\mathcal{G}$.
\end{proof}

The next lemma shows that we can reduce the size of the wining condition set $\Omega$ if one of the sets $W\in \Omega$ 
is minimal (with respect to $\subseteq$) and not forced-connected.

\begin{lemma}\label{L:unforce-connected useless}
Let $W\subseteq V$ be a subgame. If $\mathcal G(W)$ isn't forced-connected  and no winning set in $\Omega$ is contained in $W$, then 
$Win_1(\mathcal{G})=Win_1(\mathcal{G}')$, where $\mathcal{G}'$ is the same as $\mathcal G$ but has the additional winning set: $\Omega'=\Omega\cup \{W\}$.
\end{lemma}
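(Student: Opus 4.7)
My plan is to prove the equality $Win_1(\mathcal G) = Win_1(\mathcal G')$ by establishing both inclusions. The inclusion $Win_1(\mathcal G') \subseteq Win_1(\mathcal G)$ is immediate since $\Omega \subseteq \Omega'$ gives player $0$ a weaker winning condition in $\mathcal G'$, so $Win_0(\mathcal G) \subseteq Win_0(\mathcal G')$ and equivalently $Win_1(\mathcal G') \subseteq Win_1(\mathcal G)$. The substantive work is the reverse inclusion, for which I would first prove a structural fact and then construct an explicit combined strategy for player $1$.

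First I would prove the structural fact that, under the hypothesis, player $1$ wins the connectivity game on $\mathcal G(W)$ from \emph{every} vertex of $W$ (not only from some). Let $W_0$ and $W_1 = W \setminus W_0$ be the winning regions of the two players in that connectivity game; by determinacy they partition $W$, and by hypothesis $W_1 \neq \emptyset$. Suppose for contradiction $W_0 \neq \emptyset$ and fix $v \in W_0$ with player $0$'s winning strategy $\tau_{W_0}$: the connectivity condition requires visits to every vertex of $W$, hence to $W_1$, but $W_1$ is a $0$-trap in $\mathcal G(W)$, so once the play enters $W_1$ player $1$'s winning strategy from $W_1$ keeps it there forever, precluding any further visits to vertices of $W_0$ --- a contradiction. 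Hence $W_0 = \emptyset$ and there exists a (positional) winning strategy $\sigma_W$ for player $1$ in $\mathcal G(W)$'s connectivity game valid uniformly from every vertex of $W$.

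For $v \in Win_1(\mathcal G)$ with player $1$'s winning strategy $\tau_1$ in $\mathcal G$, I would then define the combined strategy $\tau_1^*$ in $\mathcal G'$ as follows: at a player-$1$ vertex $v' \in W$ play $\sigma_W(v')$, and at $v' \notin W$ play $\tau_1$ applied to the full history. For any $\tau_1^*$-consistent play $\rho$ from $v$ I would split into two cases. If $\rho$ is eventually confined to $W$, then from the point of last entry onwards $\rho$ is a $\sigma_W$-consistent play in $\mathcal G(W)$, so $\mathsf{Inf}(\rho) \subsetneq W$; the hypothesis that no winning set in $\Omega$ is contained in $W$ forces $\mathsf{Inf}(\rho) \notin \Omega$, and clearly $\mathsf{Inf}(\rho) \neq W$, so $\mathsf{Inf}(\rho) \notin \Omega'$ and player $1$ wins this play.

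The main obstacle is the remaining case where $\rho$ exits $W$ infinitely often. There $\mathsf{Inf}(\rho) \not\subseteq W$ automatically gives $\mathsf{Inf}(\rho) \neq W$, but ensuring $\mathsf{Inf}(\rho) \notin \Omega$ is delicate because $\tau_1^*$ deviates from $\tau_1$ whenever the play is in $W$, so $\rho$ is not $\tau_1$-consistent and the winning guarantee of $\tau_1$ in $\mathcal G$ does not transfer directly. My plan for handling this case is to leverage that $Win_1(\mathcal G)$ is a $0$-trap in $\mathcal G$ (so $\rho$ never escapes into $Win_0(\mathcal G)$) and to restart $\tau_1$ from a fresh winning strategy at each vertex where the play exits $W$ (which still lies in $Win_1(\mathcal G)$); the hardest step is combining these $\tau_1$-restarts with the intervening $\sigma_W$-phases via a finite-memory product construction, with careful bookkeeping of memory states, so as to ensure the $\mathsf{Inf}$ of the resulting play stays outside $\Omega$.
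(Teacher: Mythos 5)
Your easy inclusion $Win_1(\mathcal G')\subseteq Win_1(\mathcal G)$ and your structural fact about $W$ are both correct (the latter is the standard dichotomy for connectivity games, and it is the same phenomenon the paper exploits when it extracts a proper nonempty $0$-trap $W'\subsetneq W$ from the failure of forced-connectivity). The case of plays eventually confined to $W$ is also handled correctly. But the proof is not complete: the case you yourself flag as hardest --- plays that enter and leave $W$ infinitely often --- is only a plan, and the plan as stated cannot work. Restarting a fresh copy of $\tau_1$ at every exit from $W$ gives no control over $\mathsf{Inf}(\rho)$: each excursion outside $W$ is only a \emph{finite prefix} of some $\tau_1$-consistent play, and a M\"uller condition is decided solely by the infinite tail. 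A vertex $a$ that every $\tau_1$-consistent play eventually avoids may still be visited once by every restarted copy of $\tau_1$ before that copy is interrupted by the next entry into $W$; after infinitely many restarts $a\in\mathsf{Inf}(\rho)$, and nothing prevents $\mathsf{Inf}(\rho)$ from landing exactly on a set of $\Omega$. No finite-memory bookkeeping repairs this, because the obstruction is not about memory: winning prefixes certify nothing for M\"uller objectives. Two secondary gaps: your case split omits plays that visit $W$ only finitely often without being eventually confined to it (their tails are also not $\tau_1$-consistent relative to the original history); and if $W\not\subseteq Win_1(\mathcal G)$, your $\sigma_W$-moves may push the play out of $Win_1(\mathcal G)$ --- the $0$-trap property constrains only player $0$'s moves, not player $1$'s moves under $\sigma_W$.

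The paper avoids the stitching problem altogether. It splits on whether $W\subseteq Win_1(\mathcal G)$; in the nontrivial case it takes the family $\mathcal T$ of all subgames of $Win_1(\mathcal G)$ that player 1 wins in $\mathcal G$, topologically orders it by inclusion, and proves by induction that player 1 wins $\mathcal G'(T_\ell)$ for each $T_\ell$, using attractor and trap decompositions via Lemmas \ref{L:attract or wins} and \ref{L:P1 can't win 0-traps}. Every strategy built there either confines the play forever inside a strictly smaller trap, where a genuine winning strategy supplied by the induction hypothesis is followed from some point on, or cycles through attractors in the pattern certified by Lemma \ref{L:attract or wins}; no winning strategy is ever restarted infinitely often. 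To salvage a direct construction you would essentially have to reproduce that induction.
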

\begin{proof}
It is clear that $Win_1(\mathcal{G}') \subseteq Win_1(\mathcal{G})$. Also, note that the set $Win_1(\mathcal{G})$ is a 0-trap in $\mathcal{G}'$.  For the set $Win_1(\mathcal{G})$ we have two cases.

\smallskip
\noindent
{\em Case 1}: $W\not\subseteq Win_1(\mathcal{G})$. Then 
$\mathcal{G}'(Win_1(\mathcal{G}))=\mathcal{G}(Win_1(\mathcal{G}))$. Hence,  $Win_1(\mathcal{G})=Win_1(\mathcal{G}')$. 

\smallskip
\noindent
{\em Case 2}: $W\subseteq Win_1(\mathcal{G})$. Define set $\mathcal T=\{T_1,T_2,\ldots,T_r\}$ of all subgames such that each $T_i\subseteq Win_1(\mathcal{G})$ 
 and player 1 wins $\mathcal{G}(T_i)$ for $i=1\ldots, r$.   In particular, both 
 $W$ and $Win_1(\mathcal{G})$ belong to $\mathcal T$. Note that the sets $T_i$ do not have to be a $0$-traps. Topologically order $\mathcal T$, say:    $T_1<  \ldots <T_r$.
 
  Inductively on  $\ell=1,2,\ldots,r$, we prove that  player 1 wins each $\mathcal{G}'(T_\ell)$.   Consider $T_{\ell}$. By induction, for all 1-traps $W'\subset T_\ell$ in $\mathcal G(T_\ell)$, player 1 wins $\mathcal{G}(W')$ and by hypothesis player 1 wins $\mathcal{G}'(W')$. Note that for $\ell=1$ this hypothesis is vacuous. 
    \begin{itemize}
        \item If $T_\ell\notin \Omega'$ then for all $v\in T_\ell$, we have either $Attr_1(\{v\},G(T_\ell))=T_\ell$ or player 1 wins the game $\mathcal{G}'(T_\ell\setminus Attr_1(\{v\},G(T_\ell)))$.  Since $T_\ell\setminus Attr_1(\{v\},G(T_\ell))$ is a 1-trap in $\mathcal{G}'(T_\ell)$. By Lemma \ref{L:attract or wins}, player 1 wins $\mathcal{G}'(T_\ell)$.
        \item If $T_\ell=W$ then $T_\ell$ isn't forced-connected and there exists a 0-trap $W'\subset T_\ell$ in the arena $G(T_{\ell})$. We construct a winning strategy for player 1 as follows: If the token is in $W'$ then player 1 forces the token in $W'$ forever, otherwise moves arbitrarily. With this strategy, for any play $\rho$ in $\mathcal G'(T_{\ell})$ we have 
        $\mathsf{Inf}(\rho)\ne T_\ell$. Since $\Omega'\cap 2^{T_\ell}$ equals $\{T_\ell\}$ by the assumption of the lemma, 
        player 1 wins $\mathcal{G}'(T_\ell)$.
        \item If $T_\ell\in \Omega'\setminus \{W\}$ then by Lemma \ref{L:P1 can't win 0-traps}, there exists a 0-trap $W'\subset T_\ell$ such that player 1 wins $\mathcal{G}(W')$. By hypothesis, player 1 wins $\mathcal{G}'(W')$. Then $Attr_1(W', G(T_\ell))=T_\ell$ or player 1 wins $\mathcal{G}'(T_\ell\setminus Attr_1(W', G(T_\ell)))$ since $T_\ell\setminus Attr_1(W', G(T_\ell))$ is a 1-trap in $\mathcal{G}'(T_\ell)$. We construct a winning strategy for player 1 as follows. 
        \begin{itemize}
            \item If the token is in $W'$ then player 1 forces the token in $W'$ forever and follows the winning strategy in $\mathcal{G}'(W')$.
            \item If the token is in $Attr_1(W', G(T_\ell))$ then player 1 forces the token to $W'$.
            \item Otherwise, player 1 follows the winning strategy in $\mathcal{G}'(T_\ell\setminus Attr_1(W', G(T_\ell)))$.
        \end{itemize}
        Thus, any play consistent with the strategy described, will eventually stay either in  $W'$ or $T_\ell\setminus Attr_1(W', G(T_\ell))$ forever and player 1 wins $\mathcal{G}'(T_\ell)$.
    \end{itemize}
    Thus, for all $T_i\in \mathcal T$, player 1 wins $\mathcal{G}'(T_i)$. Since $Win_1(\mathcal{G})$ belongs to $\mathcal T$, player 1 wins $\mathcal{G}'(Win_1(\mathcal{G}))$. We conclude that $Win_1(\mathcal{G})=Win_1(\mathcal{G}')$.
\end{proof}

    Let $\mathcal{G}$ be M\"uller game with $\Omega=\{W_1,W_2,\ldots,W_s\}$. For the next two lemmas and the follow-up theorem we assume that there exists a $W\in \Omega$ such that $\mathcal G(W)$ is forced-connected and $W$ isn't a 1-trap. 
    
    \begin{definition}[Horn's construction]
    The  game  $\mathcal{G}_{W}=(G_{W}, \Omega_{W})$   determined by $W$ is defined as follows:
    
    \smallskip
    \noindent
    1. $V_{W}=V_0 \cup V_1\cup \{\mathbf{W}\}$, where $\mathbf{W}$ is a player 1's new vertex.\\
    2.  $E_{W}=E\cup (V_0\cap W)\times \{\mathbf{W}\} \ \cup  \ \{\mathbf{W}\}\times (E(V_1\cap W)\setminus W)$.        \\
    3.  $\Omega_{W}=(\Omega \cup \{W'\cup \{\mathbf{W}\}\mid W'\in R\})\setminus (R\cup \{W\})$, 
    where the set $R$ is the following  $R=\{W' \mid W'\in \Omega \text{ and } W\subset W'\}$. 

\end{definition}

\noindent
Note that $|\Omega_W|+1=|\Omega|$, and $\mathcal G_W(W)$ is forced-connected. Thus, similar to the lemma above, Horn's construction also reduces the size of $\Omega$. Now our goal is to show that Horn's construction preserves the winners of the original game. This is shown in the next two lemmas. 

\begin{lemma}\label{L:winning if winning reduction game}
    We have $Win_0(\mathcal{G}_{W})\setminus \{\mathbf{W}\} \subseteq Win_0(\mathcal{G})$.
\end{lemma}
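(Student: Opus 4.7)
}
Fix $v\in Win_0(\mathcal{G}_W)\setminus\{\mathbf{W}\}$. My plan is to lift a winning strategy $\sigma$ for player 0 in $\mathcal{G}_W$ from $v$ into a finite-memory winning strategy $\sigma^{\ast}$ for player 0 in $\mathcal{G}$ from $v$. Besides $\sigma$, I will use a winning strategy $\tau$ for player 0 in the forced-connectivity game on the subgame $\mathcal{G}(W)$; such $\tau$ exists because $\mathcal{G}(W)$ is forced-connected by hypothesis. The overall idea is that the auxiliary vertex $\mathbf{W}$ in $\mathcal{G}_W$ acts as a \emph{placeholder} for ``player 0 decides to stay in $W$ until she is forced out via $V_1\cap W$''; since $\tau$ can realise this promise in $\mathcal{G}$, moves to $\mathbf{W}$ in $\mathcal{G}_W$ can be faithfully emulated in $\mathcal{G}$.

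The strategy $\sigma^{\ast}$ simulates, alongside the true play $\rho$ in $\mathcal{G}$, a play $\rho_W$ in $\mathcal{G}_W$ together with a Boolean flag recording whether we are in ``mode $W$''. Outside mode $W$, player 0 plays as $\sigma$ prescribes on $\rho_W$, unless $\sigma$ wants to move to $\mathbf{W}$ from some $v'\in V_0\cap W$; in that case, $\sigma^{\ast}$ enters mode $W$ and henceforth plays $\tau$, keeping the actual play inside $W$. Player 1 in $\mathcal{G}$ retains all her original moves, so at any step of mode $W$ she may leave $W$ from some $v''\in V_1\cap W$ to some $u\in V\setminus W$. When this happens, $\sigma^{\ast}$ exits mode $W$ and appends $\mathbf{W},u$ to $\rho_W$, which is legal since $u\in E(V_1\cap W)\setminus W = E_W(\mathbf{W})$ and $(v',\mathbf{W})\in E_W$ by construction. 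Note that $\sigma^{\ast}$ never blocks player 0: every proposed edge lives in $E$, and $W$ being a subgame ensures $\tau$'s moves stay in $W$.

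To verify $\sigma^{\ast}$ wins, consider an infinite play $\rho$ consistent with $\sigma^{\ast}$ (finite plays are already won by player 0 by the rules). Split into three cases. (i) Some activation of mode $W$ never terminates: then from that activation $\rho$ is trapped in $W$, and the winning property of $\tau$ yields $\mathsf{Inf}(\rho)=W\in\Omega$. (ii) Mode $W$ is activated only finitely often and each activation terminates: then $\rho$ and $\rho_W$ eventually coincide, so $\mathsf{Inf}(\rho)=\mathsf{Inf}(\rho_W)$; since $\mathbf{W}\notin\mathsf{Inf}(\rho_W)$, the description $\Omega_W=(\Omega\cup\{W'\cup\{\mathbf{W}\}:W'\in R\})\setminus(R\cup\{W\})$ forces $\mathsf{Inf}(\rho_W)\in\Omega\setminus(R\cup\{W\})\subseteq\Omega$. (iii) Mode $W$ is activated infinitely often: then $\mathbf{W}\in\mathsf{Inf}(\rho_W)$, so $\mathsf{Inf}(\rho_W)=W^{\ast}\cup\{\mathbf{W}\}$ for some $W^{\ast}\in R$, whence $W\subsetneq W^{\ast}$ and $W^{\ast}\in\Omega$. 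Because $\rho$ and $\rho_W$ visit exactly the same vertices outside $W$, I obtain $\mathsf{Inf}(\rho)\cap(V\setminus W)=\mathsf{Inf}(\rho_W)\cap(V\setminus W)=W^{\ast}\setminus W$; and because mode $W$ only adds visits inside $W$, $\mathsf{Inf}(\rho)\cap W\supseteq\mathsf{Inf}(\rho_W)\cap W=W$, giving $\mathsf{Inf}(\rho)\cap W=W$. Therefore $\mathsf{Inf}(\rho)=W^{\ast}\in\Omega$.

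The subtle point I anticipate is case (iii): since player 1 may truncate each mode-$W$ activation arbitrarily quickly, $\tau$ on its own does not guarantee that every vertex of $W$ ends up in $\mathsf{Inf}(\rho)$. The resolution is that $\sigma$ wins $\mathcal{G}_W$, so whenever $\mathbf{W}\in\mathsf{Inf}(\rho_W)$ the only admissible shape of $\mathsf{Inf}(\rho_W)$ is $W^{\ast}\cup\{\mathbf{W}\}$ with $W\subsetneq W^{\ast}$, and hence $W\subseteq\mathsf{Inf}(\rho_W)$ automatically; every vertex of $W$ is therefore visited infinitely often \emph{outside} mode $W$ (in $\rho_W$), and \textit{a fortiori} in $\rho$. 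This closes the argument.
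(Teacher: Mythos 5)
Your proof is correct and follows essentially the same route as the paper's: simulate a $\sigma$-consistent play in $\mathcal{G}_W$, realise each visit to $\mathbf{W}$ as an excursion inside $W$ driven by the forced-connectivity strategy $\tau$, and split into the same three cases according to how often $\mathbf{W}$ is visited, using $W\subset W^{*}$ for $W^{*}\in R$ to settle the infinitely-often case exactly as the paper's inclusion chain does. One small slip: under the paper's convention a finite play is won by player \emph{1}, not player 0, but this is harmless here because $W$ being a subgame and $\sigma$ being winning in $\mathcal{G}_W$ together rule out dead ends along any play consistent with your $\sigma^{*}$.
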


\begin{proof}
    Let $\sigma_W$ be a winning strategy for player 0 in game $\mathcal{G}_{W}$ starting at $s\in V$.  
    We now describe a winning strategy for player 0 in $\mathcal{G}$ starting from $s$.  Player 0 plays the game $\mathcal{G}$ by simulating plays $\rho$ consistent with $\sigma_{W}$ in $\mathcal{G}_{W}$. If a play $\rho$ stays out of 
    $\mathbf{W}$, then the player 0 copies $\rho$ in $\mathcal{G}$. Once $\rho$ moves 
    to $\mathbf{W}$, then player 0 in $\mathcal{G}$ moves to any node in $W\cap V_1$. Then player 0 stays in $W$ and uses its strategy to visit every node in $W$.
   If player 1 moves out of $W$ to a node $u$ in $\mathcal{G}$, this will correspond to a move by player 1 
   from $\mathbf{W}$ to $u$ in $\mathcal{G}_{W}$. Player 0 continues on simulating $\rho$. 
   
   Let $\rho'$ be the play in $\mathcal{G}$ consistent with the strategy.  If $\rho$ meets $\mathbf{W}$  finitely often then $\mathsf{Inf}(\rho)=\mathsf{Inf}(\rho')$ and $\mathsf{Inf}(\rho') \in \Omega$. If $\rho$ never moves out of $\mathbf{W}$ from some point on, then 
 $\mathsf{Inf}(\rho')=W$. In both cases player 0 wins. 
 If the simulation leaves $\mathbf{W}$ infinitely often, then $\mathsf{Inf}(\rho)\in \{W'\cup \{\mathbf{W}\}\mid W'\in R\}$ and $W\subseteq \mathsf{Inf}(\rho)$. Therefore 
 $$
 \mathsf{Inf}(\rho')\subseteq \mathsf{Inf}(\rho)\setminus\{\mathbf{W}\}\cup W=\mathsf{Inf}(\rho)\setminus \{\mathbf{W}\}\subseteq \mathsf{Inf}(\rho'),
 $$ 
 and hence $\mathsf{Inf}(\rho')=\mathsf{Inf}(\rho)\setminus \{\mathbf{W}\}\in R$,  and player 0 wins. 
\end{proof}
\noindent
The next lemma is more involved.
\begin{lemma}\label{L:losing if losing reduction game}
    We have $Win_1(\mathcal{G}_{W})\setminus \{\mathbf{W}\} \subseteq Win_1(\mathcal{G})$. 
\end{lemma}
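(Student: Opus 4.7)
The plan is to prove the inclusion by constructing, from a winning strategy $\tau_W$ for player 1 in $\mathcal{G}_W$ starting at $v \in V$, a winning strategy $\tau$ for player 1 in $\mathcal{G}$ starting at $v$. The argument mirrors Lemma~\ref{L:winning if winning reduction game}, but with an essential twist: in $\mathcal{G}_W$ the new vertex $\mathbf{W}$ belongs to player 1, yet the edges entering $\mathbf{W}$ originate from player 0 in $V_0 \cap W$; in a simulation where player 1 maintains the shadow, the shadow visits to $\mathbf{W}$ are therefore under the simulator's control, which is exactly the lever we need.

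I would define $\tau$ by maintaining a shadow play $\hat\rho$ in $\mathcal{G}_W$ that is always consistent with $\tau_W$, and by extracting each real move of player 1 from $\tau_W$ applied to $\hat\rho$. By default, $\hat\rho$ mirrors the real play. However, whenever the real play has just completed a full tour of $W$ (visited every vertex of $W$ during one uninterrupted stay in $W$), the simulator inserts a virtual shadow step through $\mathbf{W}$ and follows it by $\tau_W$'s response $u = \tau_W(\hat\rho \cdot \mathbf{W})$, which lies in $E(V_1 \cap W) \setminus W$. The strategy $\tau$ then steers the real play to exit $W$ at $u$: since $W$ is not a $1$-trap and $u \in E(V_1 \cap W) \setminus W$, some $v_1' \in V_1 \cap W$ has the edge $v_1' \to u$, and because the tour is complete that vertex $v_1'$ has been reached, so player 1 can take the exit.

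For correctness I would verify that, under $\tau$, the limit sets of the two plays are linked by the equivalence $\mathbf{W} \in \mathsf{Inf}(\hat\rho) \iff W \subseteq \mathsf{Inf}(\rho)$: infinitely many $\mathbf{W}$-insertions occur precisely when the real play keeps re-entering $W$ and tours it, which, using the forced-connectedness of $\mathcal{G}(W)$, is equivalent to $W \subseteq \mathsf{Inf}(\rho)$. A short case analysis using $\Omega_W = (\Omega \setminus (R \cup \{W\})) \cup \{W' \cup \{\mathbf{W}\} : W' \in R\}$ then gives $\mathsf{Inf}(\rho) \in \Omega \iff \mathsf{Inf}(\hat\rho) \in \Omega_W$: the sets $W$ and $R$ are exchanged with their $\mathbf{W}$-augmented versions, and since $W \subseteq \mathsf{Inf}(\rho)$ holds in exactly those $\mathsf{Inf}(\rho)$ that have been modified on the $\Omega_W$ side, the two memberships agree. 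Combining this with $\tau_W$'s winning property $\mathsf{Inf}(\hat\rho) \notin \Omega_W$ yields $\mathsf{Inf}(\rho) \notin \Omega$, i.e.\ $\tau$ wins $\mathcal{G}$ from $v$.

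The hard part will be formalizing the shadow simulation in the presence of desynchronization: after a $\mathbf{W}$-insertion the shadow is at $u \notin W$ while the real play is still inside $W$ and must traverse several further steps before reaching $v_1'$; during this window player 1's real moves must still be produced from $\tau_W$ via some principled shadow history, and one must check that the transient shadow vertices visited in this lag do not contaminate $\mathsf{Inf}(\hat\rho)$ in a way that breaks the claimed equivalence. Settling these bookkeeping details, together with verifying that "steering to $v_1'$" is always eventually successful when the real play stays in $W$ (which again uses forced-connectedness of $\mathcal{G}(W)$ together with the non-$1$-trap property of $W$), is the main technical obstacle.
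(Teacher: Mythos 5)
Your simulation approach has a genuine gap, and it is essentially the same gap that undermines Horn's own Lemma 7. The asymmetry you flag as a ``twist'' is in fact the obstruction: in $\mathcal{G}_W$ the edges into $\mathbf{W}$ belong to player 0 and the edge out of $\mathbf{W}$ belongs to player 1, so in the player-0 direction (Lemma \ref{L:winning if winning reduction game}) the simulator decides when the shadow enters $\mathbf{W}$ and the opponent's real exit from $W$ supplies the shadow exit from $\mathbf{W}$ --- both transits are controlled by the right party. In your direction both controls point the wrong way. First, your trigger for inserting $\mathbf{W}$ (``the real play has just completed a full uninterrupted tour of $W$'') is not an event player 1 can bring about: forced-connectedness of $\mathcal{G}(W)$ is a power of player 0, and there are plays consistent with your strategy in which $W\subseteq \mathsf{Inf}(\rho)$ yet no single sojourn in $W$ covers all of $W$. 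Then no insertion ever fires, $\hat\rho=\rho$, and if $\mathsf{Inf}(\rho)=W'\in R$ you get $\mathsf{Inf}(\hat\rho)=W'\notin\Omega_W$ (because $R$ is deleted from $\Omega_W$), so $\tau_W$ ``wins'' the shadow while player 1 loses the real game; your claimed equivalence $\mathbf{W}\in\mathsf{Inf}(\hat\rho)\iff W\subseteq\mathsf{Inf}(\rho)$ fails in exactly the cases that matter. Second, even when an insertion fires and $\tau_W$ names an exit $u\in E(V_1\cap W)\setminus W$, player 1 cannot ``steer the real play'' to the specific vertex $v_1'\in V_1\cap W$ with $(v_1',u)\in E$: the trajectory inside $W$ is jointly controlled, and player 0 may avoid $v_1'$ indefinitely or leave $W$ through a player-0 vertex first, so the desynchronization need never resolve and there is no principled shadow history to fall back on during that window.

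The paper's proof abandons simulation entirely. It defines two families of subgames of $\mathcal{G}$ --- the extendible $W'$ for which player 1 wins $\mathcal{G}_W(W'\cup\{\mathbf{W}\})$, and the $W'$ with $W\not\subseteq W'$ for which player 1 wins $\mathcal{G}_W(W')$ --- topologically orders their union $\mathcal S$, and proves by induction along that order that player 1 wins $\mathcal{G}(S_\ell)$ for every $S_\ell\in\mathcal S$, assembling the winning strategy from attractor decompositions via Lemmas \ref{L:attract or wins} and \ref{L:P1 can't win 0-traps} rather than from $\tau_W$ directly; the inclusion follows because $Win_1(\mathcal{G}_W)\setminus\{\mathbf{W}\}$ lies in $\mathcal S$. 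The honest conclusion is not that your bookkeeping is hard to settle, but that the direct simulation cannot be repaired and a structurally different argument is required.
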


\begin{proof}
Assume that $W'\subseteq V$ determines a subgame in $\mathcal G$. Obviously, $W'$ also determines a  subgame of  $\mathcal G_W$. We call $W'$ {\em extendible} if $W' \cup \{\bf W\}$ is a subgame of $\mathcal G_W$. Note that there could exist unextendible $W'$.  Based on this, we define the following two sets of subgames of the game $\mathcal G$.
The first set $\mathcal A$ is the following set of subgames of $\mathcal G$:
$$
\{ W' \mid \mbox{$W'$ is extendible $\&$ player 1 wins $\mathcal{G}_{W}(W'\cup \{\mathbf{W}\})$}\}.
$$
Note that if $W'\in \mathcal A$ then player 1 wins the subgame $\mathcal G_W(W')$.  The second set $\mathcal B$ is
the following set of subgames of $\mathcal G$:
$$
\{ W' \mid \mbox{$W\not \subseteq W'$ and player 1 wins $\mathcal{G}_{W}(W')$}\}.
$$
Now we define the set $\mathcal S = \mathcal A\cup \mathcal B$. 
We note that  the set $W$ does not belong to $\mathcal S$ because $W\cup \{\mathbf{W}\}$ is not a subgame in $\mathcal{G}_{W}$ and $W \not \in \mathcal B$ by definition of $\mathcal B$.  
To prove the lemma it suffices to show that player 1 wins $\mathcal G(S)$ for all  $S\in \mathcal S$.   

    
    \smallskip
    
    Topologically order $\mathcal S$:   $S_1< S_2 < \ldots <S_s$.  For each $\ell=1,2,\ldots, s$, we want to show that player 1 wins $\mathcal G(S_\ell)$. 
    As player 1 wins $\mathcal{G}_{W}(S_\ell)$, for all 1-traps $S'\subset S_\ell$  player 1 wins $\mathcal{G}_W(S')$. Let $\mathcal T=\{T_1,T_2,\ldots T_t\}\subseteq 2^{S_\ell}\setminus \{S_\ell\}$ be all 1-traps in the game $\mathcal{G}(S_\ell)$. For each $T_i\in \mathcal T$ we reason as follows.

\smallskip
\noindent
{\em Case 1}: $W\subseteq T_i$.  Then $E_{W}(\mathbf{W})\cap T_i=(E_{W}(V_1\cap W)\setminus W)\cap T_i=(E_{W}(V_1\cap W)\setminus W)\cap S_\ell=E_{W}(\mathbf{W})\cap S_\ell$. Since $W\subseteq S_\ell$ implies player 1 wins $\mathcal{G}_W(S_\ell \cup \{\mathbf W\})$ and $E_{W}(\mathbf{W})\cap S_\ell \ne \emptyset$, $T_i\cup \{\mathbf{W}\}$ is also a 1-trap in the game $\mathcal{G}_{W}(S_\ell\cup \{\mathbf W\})$ and player 1 wins $\mathcal{G}_{W}(T_i\cup \{\mathbf{W}\})$. Hence $T_i$ belongs to $\mathcal A$.
 
 \smallskip
\noindent
{\em Case 2}: $W\not \subseteq T_i$. Note that $T_i$ is also a 1-trap in the game $\mathcal{G}_{W}(S_\ell)$ and player 1 wins $\mathcal{G}_{W}(T_i)$. Hence $T_i$ belongs to $\mathcal B$.
   
    \smallskip
\noindent
Thus,  $\mathcal T\subset \mathcal S$ and by hypothesis, player 1 wins all $\mathcal{G}(T_i)$.

    If $S_\ell \in \mathcal B$ then player 1 wins $\mathcal{G}(S_\ell)=\mathcal{G}_{W}(S_\ell)$. Otherwise $S_\ell \in \mathcal A$ and player 1 wins $\mathcal{G}_W(S_\ell\cup \{\mathbf{W}\})$. 
    
    \begin{itemize}
        \item If $S_\ell\notin \Omega$ then for all $v\in S_\ell$, $Attr_1(\{v\}, G(S_\ell))=S_\ell$ or player 1 wins $\mathcal{G}(S_\ell\setminus Attr_1(\{v\}, G(S_\ell)))$ since $S_\ell\setminus Attr_1(\{v\}, G(S_\ell))$ is a 1-trap in the game $\mathcal{G}(S_\ell)$. By Lemma \ref{L:attract or wins}, player 1 wins $\mathcal{G}(S_\ell)$.
        \item Otherwise by Lemma \ref{L:P1 can't win 0-traps}, there is a 0-trap $Q\subset S_\ell\cup \{\mathbf{W}\}$ in  $\mathcal{G}_W(S_\ell\cup \{\mathbf{W}\})$ such that player 1 wins $\mathcal{G}_W(Q)$. 
        \begin{itemize}
            \item  If $\mathbf{W}\notin Q$ then $W\cap V_0\cap Q=\emptyset$ and $Q$ also determines a 0-trap in the game $\mathcal{G}(S_\ell)$. Since $W\not\subseteq Q$, player 1 wins $\mathcal{G}(Q)=\mathcal{G}_{W}(Q)$ and let $Y=Q$.
            \item If $\mathbf{W} \in Q$ then let $Y=Q\setminus \{\mathbf{W}\}$. Note that for all $v\in V_0\cap W\cap Q$, $E_W(v)\cap Q=E_W(v)\cap (S_\ell\cup \{\mathbf{W}\})$ and $|E_W(v)\cap (S_\ell\cup \{\mathbf{W}\})|>1$. Hence $Y$ determines a 0-trap in the game $\mathcal{G}(S_\ell)$. Since player 1 wins $\mathcal{G}_{W}(Y\cup \{\mathbf{W}\})$, $Y$ belongs to $\mathcal A$ and by hypothesis player 1 wins $\mathcal{G}(Y)$.
        \end{itemize}
        Therefore there exists a 0-trap $Y$ in the game $\mathcal{G}(S_\ell)$ such that player 1 wins $\mathcal{G}(Y)$. Also $Attr_1(Y, G(S_\ell))= S_\ell$ or player 1 wins $\mathcal{G}(S_\ell\setminus Attr_1(Y, G(S_\ell)))$ since $S_\ell\setminus Attr_1(Y, G(S_\ell))$ is a 1-trap in the game $\mathcal{G}(S_\ell)$. Then we construct a winning strategy for player 1 in the game $\mathcal{G}(S_\ell)$ as follows.
        \begin{itemize}
            \item If the token is in $Y$ then player 1 forces the token in $Y$ forever and follows the winning strategy in $\mathcal{G}(Y)$.
            \item If the token is in $Attr_1(Y, G(S_\ell))$ then player 1 forces the token to $Y$.
            \item Otherwise, player 1 follows the winning strategy in $\mathcal{G}(S_\ell\setminus Attr_1(Y, G(S_\ell)))$.
        \end{itemize}
    \end{itemize}
    By hypothesis, for all $S_i\in \mathcal S$, player 1 wins $\mathcal{G}(S_i)$. Since player 1 wins $\mathcal{G}_W(Win_1(\mathcal{G}_W))$, $Win_1(\mathcal{G}_W)$ is a 0-trap in $\mathcal{G}_W$ and it's easy to see that $Win_1(\mathcal{G}_W)\setminus \{\mathbf{W}\}$ is also a 0-trap in $\mathcal{G}$. Then if $\mathbf{W}\in Win_1(\mathcal{G}_W)$ then $Win_1(\mathcal{G}_W)\setminus \{\mathbf{W}\}\in \mathcal A$, otherwise $Win_1(\mathcal{G}_W)\setminus \{\mathbf{W}\}\in \mathcal B$. Since $Win_1(\mathcal{G}_W)\setminus \{\mathbf{W}\}\in \mathcal S$, player 1 wins $\mathcal{G}(Win_1(\mathcal{G}_W)\setminus \{\mathbf{W}\})$ and $Win_1(\mathcal{G}_W)\setminus \{\mathbf{W}\}\subseteq Win_1(\mathcal{G})$.    
\end{proof}

\noindent
By Lemmas \ref{L:winning if winning reduction game} and \ref{L:losing if losing reduction game}, we have the following theorem.

\begin{theorem}\label{Thm:reduction game}
   We have $Win_0(\mathcal{G})=Win_0(\mathcal{G}_W)\setminus \{\mathbf{W}\}$ and $Win_1(\mathcal{G})=Win_1(\mathcal{G}_W)\setminus \{\mathbf{W}\}$. \qed
\end{theorem}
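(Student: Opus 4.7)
The theorem claims two equalities, and Lemmas \ref{L:winning if winning reduction game} and \ref{L:losing if losing reduction game} together give the two containments $Win_0(\mathcal G_W)\setminus\{\mathbf W\}\subseteq Win_0(\mathcal G)$ and $Win_1(\mathcal G_W)\setminus\{\mathbf W\}\subseteq Win_1(\mathcal G)$. So the only remaining work is to upgrade these containments into equalities, which should follow purely from determinacy.

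First I would note that $V_W=V\cup\{\mathbf W\}$, where $\mathbf W\in V_1$ is the freshly added vertex, and that both $\mathcal G$ and $\mathcal G_W$ are M\"uller games, hence Borel and therefore determined by Martin's theorem. Consequently $V=Win_0(\mathcal G)\sqcup Win_1(\mathcal G)$ and $V_W=Win_0(\mathcal G_W)\sqcup Win_1(\mathcal G_W)$. Removing the extra vertex gives the disjoint decomposition
\[
V=\bigl(Win_0(\mathcal G_W)\setminus\{\mathbf W\}\bigr)\sqcup\bigl(Win_1(\mathcal G_W)\setminus\{\mathbf W\}\bigr).
\]

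Now combine this with the two lemmas: we have $Win_\sigma(\mathcal G_W)\setminus\{\mathbf W\}\subseteq Win_\sigma(\mathcal G)$ for each $\sigma\in\{0,1\}$, and the left-hand sides are disjoint sets whose union is all of $V$, while the right-hand sides are disjoint sets whose union is also $V$. A standard set-theoretic argument (two pairs of disjoint sets covering the same universe, with one pair contained componentwise in the other) forces equality on each component. This yields the two equalities in the statement.

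The only subtle point is checking that the lemmas really do apply on both sides simultaneously, which they do because the hypothesis preceding Horn's construction (namely that $W\in\Omega$ is forced-connected and not a $1$-trap) was imposed once and carries through; no additional assumption is needed. There is no real obstacle here, since all the heavy lifting (the simulation of $\sigma_W$ in $\mathcal G$, and the inductive trap argument showing that losing positions transfer back) has already been absorbed into the two preceding lemmas.
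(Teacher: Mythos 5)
Your proposal is correct and matches the paper's (implicit) argument exactly: the paper derives the theorem directly from Lemmas \ref{L:winning if winning reduction game} and \ref{L:losing if losing reduction game}, with the upgrade from inclusions to equalities resting on determinacy of both games, precisely the set-theoretic step you spell out. No further comment is needed.
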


We briefly explain the algorithm, presented in Figure \ref{F:Explicit Muller game}, that takes as input an explicit M\"uller game $\mathcal{G}$ and returns the winning regions of the players. Initially, the algorithm orders $\Omega$ topologically: $W_1<W_2<\ldots <W_s$. 
At each iteration, the algorithm modifies the arena and the winning conditions: 
\begin{itemize}
    \item If $W'_i$ doesn't determine a subgame in game $\mathcal G'$ or $\mathcal G'(W'_i)$ isn't forced-connected, $W'_i$ is removed from $\Omega'$.
    \item Otherwise, $\mathcal G'(W'_i)$ is forced-connected, then:
    \begin{itemize}
        \item If $W'_i$ is a 1-trap in $\mathcal G'$ then $Attr_0(W'_i, G')$ is removed from $\mathcal G'$ and added to the winning region of player 0. Note that all $W'\in \Omega'$ with $W'\cap Attr_0(W'_i, G')\ne\emptyset$ are removed.
        \item Otherwise, apply Horn's construction to $\mathcal{G'}$ by setting $\mathcal G'=\mathcal G'_{W'_i}$. In this construction, a new player 1's node $\mathbf{W}'_i$ is added to $G'$, $\mathbf{W}'_i$ is added to all supersets of $W'_i$ in $\Omega'$ and $W'_i$ itself is removed from $\Omega'$, which maintains the topological order of $\Omega'$.
    \end{itemize}
\end{itemize}

\begin{figure}[H]
    \centering 
    \scriptsize
    \begin{tabular}{l}
        \textbf{Input}: An explicit M\"uller game $\mathcal{G}=(G,\Omega)$\\
        \textbf{Output}: The winning regions of player 0 and player 1\\
        topologically order $\Omega$;\\
        $G'=(V'_0,V'_1,E')\leftarrow G=(V_0,V_1,E)$;\\
        $\Omega'\leftarrow \Omega$;\\
        $Win_0\leftarrow \emptyset$;\\
        \textbf{while} $\Omega'\ne \emptyset$ \textbf{do}\\
        \hspace*{4mm}$W'_i\leftarrow \text{pop}(\Omega')$\\
        \hspace*{4mm}\textbf{if} $\mathcal G'(W'_i)$ is forced-connected \textbf{then}\\
        \hspace*{8mm}\textbf{if} $W'_i$ is a 1-trap in $\mathcal G'$ \textbf{then}\\
        \hspace*{12mm}remove $Attr_0(W'_i,G')$ from $\mathcal G'$ and add it to $Win_0$;\\
        \hspace*{8mm}\textbf{else}\\
        \hspace*{12mm}$\mathcal G' \leftarrow \mathcal G'_{W'_i}$;\\
        \hspace*{8mm}\textbf{end}\\
        \hspace*{4mm}\textbf{end}\\
        \textbf{end}\\
        \textbf{return} $Win_0\cap V$ and  $V\setminus Win_0$
    \end{tabular}
    \caption{Algorithm for explicit M\"uller games}
    \label{F:Explicit Muller game}
\end{figure}

\begin{lemma}\label{L:Explicit Muller game each step}
    At the end of each iteration, we have
    $$Win_0(\mathcal{G})=(Win_0(\mathcal{G}')\cup Win_0)\cap V$$ 
    and 
    $$Win_1(\mathcal{G})=Win_1(\mathcal{G}')\cap V.
    $$
\end{lemma}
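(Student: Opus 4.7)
The proof proceeds by induction on the number of completed iterations of the main loop. The base case is immediate: initially $\mathcal{G}' = \mathcal{G}$, $\Omega' = \Omega$, and $Win_0 = \emptyset$, so both identities hold trivially. For the inductive step, fix an iteration and split into three branches: (a) $W'_i$ is removed because either it does not determine a subgame of $\mathcal{G}'$ or $\mathcal{G}'(W'_i)$ is not forced-connected; (b) $W'_i$ is a forced-connected $1$-trap in $\mathcal{G}'$ and the $0$-attractor $A := Attr_0(W'_i, G')$ is moved into $Win_0$; (c) $W'_i$ is forced-connected but not a $1$-trap, so $\mathcal{G}' \leftarrow \mathcal{G}'_{W'_i}$.

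Branches (a) and (c) are handled via earlier results. For (c), Theorem \ref{Thm:reduction game} gives $Win_\sigma(\mathcal{G}') = Win_\sigma(\mathcal{G}'_{W'_i}) \setminus \{\mathbf{W}'_i\}$ for $\sigma \in \{0,1\}$; intersecting with $V$ absorbs the removal because $\mathbf{W}'_i \notin V$, and the accumulated $Win_0$ is untouched. For (a), if $W'_i$ is not a subgame then no play $\rho$ can satisfy $\mathsf{Inf}(\rho) = W'_i$, so deletion has no effect on the winners. If $W'_i$ is a subgame, invoke Lemma \ref{L:unforce-connected useless} with the post-iteration game and $W = W'_i$ to obtain $Win_1(\mathcal{G}') = Win_1(\mathcal{G}'_{\text{new}})$, and deduce the corresponding $Win_0$ identity by determinacy. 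To justify the lemma's hypothesis, maintain as a side invariant that every element of $\Omega'$ has the form $W_j \cup S$ where $W_j \in \Omega$ is unprocessed and $S$ is a set of previously introduced Horn tokens $\mathbf{W}'_k$; since Horn tokens lie outside $V$ and the topological order on $\Omega$ forces $W_j \not\subseteq W_i$ for unprocessed $j$, no remaining element of $\Omega'$ can be contained in $W'_i$.

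For (b), player $0$ wins $\mathcal{G}'$ from every vertex in $W'_i$: since $\mathcal{G}'(W'_i)$ is forced-connected and $W'_i$ is a $1$-trap, player $0$ can keep the token inside $W'_i$ and visit every vertex infinitely often, so $\mathsf{Inf}(\rho) = W'_i \in \Omega'$; a standard attractor argument then gives $A \subseteq Win_0(\mathcal{G}')$. The complement $V'_{\text{new}} := V' \setminus A$ is a $0$-trap in $\mathcal{G}'$ (by definition of the $0$-attractor, every player-$0$ successor from $V'_{\text{new}}$ lies in $V'_{\text{new}}$), and $W'_i \subseteq A$, so the restricted winning condition $\Omega'_{\text{new}} := \{X \in \Omega' : X \subseteq V'_{\text{new}}\}$ coincides with $\Omega'$ on subsets of $V'_{\text{new}}$. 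Any optimal player-$1$ strategy in $\mathcal{G}'$ avoids $A \subseteq Win_0(\mathcal{G}')$, so its plays from $V'_{\text{new}}$ stay inside $V'_{\text{new}}$; combined with the $0$-trap property this yields $Win_1(\mathcal{G}') = Win_1(\mathcal{G}'_{\text{new}})$ and hence $Win_0(\mathcal{G}') = A \cup Win_0(\mathcal{G}'_{\text{new}})$, exactly the update performed by the algorithm.

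The main obstacle is the bookkeeping in branch (a): one has to verify that the hypothesis of Lemma \ref{L:unforce-connected useless} survives arbitrary sequences of Horn constructions and deletions, which requires setting up and carefully maintaining the structural invariant on the form of elements of $\Omega'$ described above. Once that invariant is in place, the remaining steps in each branch are routine applications of the results already proved in the paper together with standard attractor and trap arguments.
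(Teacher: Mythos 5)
Your proof is correct and follows essentially the same route as the paper's: induction over iterations with the same three-way case split, using Lemma \ref{L:unforce-connected useless} for discarded sets, a standard attractor/trap argument for the $1$-trap branch, and Theorem \ref{Thm:reduction game} for Horn's construction. Your additional care in branch (a) --- treating the non-subgame case separately (since Lemma \ref{L:unforce-connected useless} formally assumes $W$ is a subgame) and explicitly maintaining the invariant that the topological order guarantees no remaining set of $\Omega'$ is contained in $W'_i$ --- fills in details the paper leaves implicit, but does not change the argument.
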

\begin{proof}
Initially, $\mathcal{G}'=\mathcal{G}$ and $Win_0=\emptyset$, which holds the lemma. Then for $i=1,2,\ldots,s$, we want to show that at the end of $i$th iteration, $Win_0(\mathcal{G})=(Win_0(\mathcal{G}')\cup Win_0)\cap V$ and $Win_1(\mathcal{G})=Win_1(\mathcal{G}')\cap V$. Let $\mathcal{G}''$ be $\mathcal{G}'$ and $Win_0'$ be $Win_0$ at the beginning of $i$th iteration. Let $\mathcal{G}'''$ be $\mathcal{G}'$ and $Win_0''$ be $Win_0$ at the end of $i$th iteration. By hypothesis, $Win_0(\mathcal{G})=(Win_0(\mathcal{G}'')\cup Win_0')\cap V$ and $Win_1(\mathcal{G})=Win_1(\mathcal{G}'')\cap V$. If $W'_i$ doesn't determine a subgame in game $\mathcal{G}'$ or $W'_i$ isn't forced-connected then by Lemma \ref{L:unforce-connected useless}, $W'_i$ can be removed without affecting the winning regions of the players of the game. Otherwise, if $W'_i$ is a 1-trap in $\mathcal{G}'$ then player 0 wins $\mathcal{G}''(Attr_0(W'_i, G''))$ by forcing the token to $W'_i$ and then to go through $W'_i$. Since $Attr_0(W'_i,G'')$ is a 1-trap in $G''$, $Attr_0(W'_i,G'')\subseteq Win_0(\mathcal{G}'')$. Since $\mathcal{G}'''=\mathcal{G}''(V''\setminus Attr_0(W'_i,G''))$, $Win_0(\mathcal{G}'')=Win_0(\mathcal{G}''')\cup Attr_0(W'_i,G'')$ and $Win_1(\mathcal{G}'')=Win_1(\mathcal{G}''')$. Therefore, $Win_0(\mathcal{G})=(Win_0(\mathcal{G}''')\cup Attr_0(W'_i,G'')\cup Win_0')\cap V=(Win_0(\mathcal{G}''')\cup Win_0'')\cap V$ and $Win_1(\mathcal{G})=Win_1(\mathcal{G}''')\cap V$. If $W'_i$ isn't a 1-trap in $\mathcal{G}'$ then by Theorem \ref{Thm:reduction game}, $Win_0(\mathcal{G}'')=Win_0(\mathcal{G}''')\setminus \{\mathbf{W}'_i\}$ and $Win_1(\mathcal{G}'')=Win_1(\mathcal{G}''')\setminus \{\mathbf{W}'_i\}$. Therefore, $Win_0(\mathcal{G})=(Win_0(\mathcal{G}''')\cup Win_0'')\cap V$ and $Win_1(\mathcal{G})=Win_1(\mathcal{G}''')\cap V$. By hypothesis, at the end of each iteration, $\mathcal{G}'$ and $Win_0$ hold the lemma.
\end{proof}


By Lemma \ref{L:Explicit Muller game each step}, we have the following theorem.

\begin{theorem}\label{Thm:Explicit Muller game}
    At the end of the algorithm, we have 
    $$
    Win_0(\mathcal{G})=Win_0\cap V \ \mbox{and} \  
    Win_1(\mathcal{G})=V\setminus Win_0. \qed
    $$
\end{theorem}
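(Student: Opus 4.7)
The plan is to derive the theorem as a clean corollary of Lemma \ref{L:Explicit Muller game each step} by determining the state of $\mathcal G'$ at the moment the while-loop exits, namely when $\Omega' = \emptyset$. Since the lemma already gives, as a loop invariant, that $Win_0(\mathcal G) = (Win_0(\mathcal G') \cup Win_0) \cap V$ and $Win_1(\mathcal G) = Win_1(\mathcal G') \cap V$ at the end of every iteration, it suffices to identify $Win_0(\mathcal G')$ and $Win_1(\mathcal G')$ once $\Omega'$ is empty.

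First I would verify termination of the while-loop. Every iteration strictly decreases $|\Omega'|$: if $\mathcal G'(W'_i)$ is not forced-connected, then $W'_i$ is simply discarded by the pop; if $W'_i$ is a 1-trap, then removing $Attr_0(W'_i,G')$ from $\mathcal G'$ purges $W'_i$ and every $W'\in\Omega'$ meeting that attractor; and if Horn's construction is applied, then by design $|\Omega'_{W'_i}| = |\Omega'| - 1$. Hence after at most $|\Omega|$ iterations the loop exits with $\Omega'=\emptyset$.

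Next I would argue that once $\Omega'=\emptyset$, player 0 cannot win $\mathcal G'$ from any vertex. By the winning convention for M\"uller games, an infinite play $\rho$ is won by player 0 only if $\mathsf{Inf}(\rho)\in\Omega'$, which is vacuous, and every finite play ending at a dead-end is won by player 1. Therefore $Win_0(\mathcal G')=\emptyset$ and $Win_1(\mathcal G')=V'$, where $V'$ denotes the current vertex set of $\mathcal G'$. Substituting into the invariant of Lemma \ref{L:Explicit Muller game each step} immediately yields $Win_0(\mathcal G)=Win_0\cap V$ and $Win_1(\mathcal G)=V'\cap V$.

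It then remains to show $V'\cap V = V\setminus Win_0$. This is a bookkeeping check: the only modifications to the vertex set during the algorithm are removals of sets $Attr_0(W'_i,G')$ (whose vertices are placed into $Win_0$) and additions of the fresh tokens $\mathbf W'_i$ from Horn's construction. Since each $\mathbf W'_i$ is by definition not in $V$, intersecting with $V$ strips them away, leaving exactly $V$ minus all vertices ever absorbed into $Win_0$. I expect the only mildly subtle point to be this bookkeeping step, but nothing deeper is required because the heavy lifting already sits inside Lemma \ref{L:Explicit Muller game each step} and Theorem \ref{Thm:reduction game}.
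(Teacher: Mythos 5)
Your proposal is correct and follows the same route the paper intends: the paper derives the theorem directly from Lemma \ref{L:Explicit Muller game each step} (leaving the proof implicit with a \qed), and your write-up simply makes explicit the termination argument, the observation that $\Omega'=\emptyset$ forces $Win_0(\mathcal{G}')=\emptyset$, and the bookkeeping that strips the auxiliary vertices $\mathbf{W}'_i$ by intersecting with $V$. Nothing is missing; this is exactly the intended reading of the paper's one-line proof.
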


At each iteration, at most one player 1's vertex is added and at most $|V'_0|$ edges are added. Therefore, $|V'_0|=|V_0|$, $|V'_1|$ is bounded by $|V_1|+|\Omega|$ and $|E'|$ is bounded by $|E|+|V_0||\Omega|$. For time complexity of the algorithm, there are at most $|\Omega|$ iterations in a run and the most time-consuming operation is to determine if $\mathcal{G}'(W'_i)$ is forced-connected. By Theorem \ref{Thm:|V_0|} and Theorem \ref{Thm:|V_1|}, we have the following theorems.

\begin{theorem}\label{Thm:Explicit Muller game result 1}
    The explicit M\"uller game $\mathcal{G}$ can be solved in time $\mathbf{O}( |\Omega|\cdot( (\sqrt{|V_1|+|\Omega|}+1) (|E|+|V_0||\Omega|)+(|V_1|+|\Omega|)^2 ) )$.  \qed
\end{theorem}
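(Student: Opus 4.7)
The plan is to combine three ingredients: the correctness of the algorithm in Figure \ref{F:Explicit Muller game} (already established via Lemma \ref{L:Explicit Muller game each step} and Theorem \ref{Thm:Explicit Muller game}), a careful accounting of how the working arena $\mathcal G'$ grows across iterations, and the cost of the forced-connectivity test supplied by Theorem \ref{Thm:|V_0|}.

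First I would bound the number of iterations. Every iteration strictly decreases $|\Omega'|$: the attractor-removal branch drops $W'_i$ (and any $W'\in\Omega'$ whose intersection with the removed attractor becomes empty), while Horn's construction replaces $W'_i$ by its $\mathbf{W}'_i$-augmented supersets, decreasing $|\Omega'|$ by exactly one. Hence the loop executes at most $|\Omega|$ times.

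Next I would bound the arena sizes that can appear. Since Horn's construction adds at most one new player 1 vertex per iteration and never a player 0 vertex, and since the attractor branch only deletes vertices, we always have $|V'_0|\le |V_0|$ and $|V'_1|\le |V_1|+|\Omega|$. For edges, each application of Horn's construction adds at most $|V_0\cap W|$ new incoming edges to $\mathbf{W}'_i$ (from $V_0\cap W$) and at most $|E(V_1\cap W)\setminus W|\le |V_0|$ new outgoing edges from $\mathbf{W}'_i$; summing over all $|\Omega|$ iterations gives $|E'|\le |E|+\mathbf{O}(|V_0|\cdot|\Omega|)$ throughout the run.

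The main step is the cost analysis of a single iteration. The dominant operation is the forced-connectivity test on $\mathcal G'(W'_i)$; the 1-trap check, the attractor computation, and the edge additions of Horn's construction can all be carried out in linear time in the current arena size. Applying Theorem \ref{Thm:|V_0|} to $\mathcal G'(W'_i)$ with the bounds above yields a per-iteration cost of
\[
\mathbf{O}\bigl((\sqrt{|V_1|+|\Omega|}+1)(|E|+|V_0|\cdot|\Omega|)+(|V_1|+|\Omega|)^2\bigr),
\]
and multiplying by the at most $|\Omega|$ iterations gives the claimed total running time.

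The main obstacle I expect is the bookkeeping in the second step: one must verify that each Horn construction truly contributes only $\mathbf{O}(|V_0|)$ new edges even when applied to a $\mathcal G'$ already swollen by previous iterations, and that successive introductions of fresh vertices $\mathbf{W}'_j$ do not interact to blow up $|V'_1|$ beyond $|V_1|+|\Omega|$ (which relies on the observation that Horn's construction removes $W'_i$ from $\Omega'$ at the same time). Once these invariants are in place, the cost per iteration is a direct instance of Theorem \ref{Thm:|V_0|} and the final multiplication by $|\Omega|$ is immediate.
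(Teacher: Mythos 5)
Your proposal is correct and follows essentially the same route as the paper: bound the iterations by $|\Omega|$, observe that Horn's construction adds at most one player 1 vertex and $\mathbf{O}(|V_0|)$ edges per iteration so that $|V'_1|\le|V_1|+|\Omega|$ and $|E'|\le|E|+\mathbf{O}(|V_0||\Omega|)$, and charge each iteration the cost of the forced-connectivity test from Theorem \ref{Thm:|V_0|}. The paper's own argument is exactly this accounting, stated more tersely in the paragraph preceding the theorem.
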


\begin{theorem}\label{Thm:Explicit Muller game result 2}
    The explicit M\"uller game $\mathcal{G}$ can be solved in time $\mathbf{O}( |\Omega|\cdot(|V_0|+|V_1|+|\Omega|)\cdot |V_0|\log |V_0|  )$.  \qed
\end{theorem}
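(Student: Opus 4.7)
The plan is to bound the total cost of the algorithm in Figure \ref{F:Explicit Muller game} by combining two simple observations: the while-loop executes at most $|\Omega|$ times, and the arena $\mathcal G'$ grows in a tightly controlled way across iterations. Correctness is already provided by Theorem \ref{Thm:Explicit Muller game}, so only the running time needs to be justified.

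First I would track the evolution of $\mathcal G'=(V'_0,V'_1,E')$. At each iteration one of three things happens: the current $W'_i$ is discarded from $\Omega'$ (no change to $\mathcal G'$), an attractor set is removed (so $\mathcal G'$ only shrinks), or Horn's construction is applied. Horn's construction introduces a single new player-$1$ vertex $\mathbf{W}'_i$ and attaches edges from $V_0\cap W'_i$ into $\mathbf{W}'_i$ plus edges out of $\mathbf{W}'_i$. Since the loop runs at most $|\Omega|$ times, summing these contributions yields $|V'_0|\le |V_0|$, $|V'_1|\le |V_1|+|\Omega|$, and (though not needed for this theorem) $|E'|\le |E|+|V_0|\cdot|\Omega|$.

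Next I would identify the bottleneck inside each iteration. Testing whether $W'_i$ determines a subgame, checking if it is a $1$-trap, computing the player-$0$ attractor $Attr_0(W'_i,G')$, and performing Horn's construction itself are all straightforward linear-time graph manipulations in $\mathbf{O}(|V'_0|+|V'_1|+|E'|)$. The only genuinely costly step is deciding whether $\mathcal G'(W'_i)$ is forced-connected. The subgame on $W'_i$ has at most $|V'_0|$ player-$0$ vertices and at most $|V'_1|$ player-$1$ vertices, so Theorem \ref{Thm:|V_1|} solves the associated connectivity game in
\[
\mathbf{O}\bigl((|V'_0|+|V'_1|)\cdot|V'_0|\log|V'_0|\bigr)
=\mathbf{O}\bigl((|V_0|+|V_1|+|\Omega|)\cdot|V_0|\log|V_0|\bigr),
\]
and this dominates every other per-iteration cost.

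Finally, multiplying the per-iteration bound by the number of iterations gives the claimed $\mathbf{O}(|\Omega|\cdot(|V_0|+|V_1|+|\Omega|)\cdot|V_0|\log|V_0|)$ runtime. The only mildly delicate point, and the step I would write with the most care, is verifying that applying Theorem \ref{Thm:|V_1|} to the modified arena $\mathcal G'$ (rather than to the original $\mathcal G$) is legitimate: the bound needed is controlled purely by $|V'_0|$ and $|V'_1|$, which we have already bounded in terms of $|V_0|$, $|V_1|$, and $|\Omega|$. With that established, the theorem follows directly.
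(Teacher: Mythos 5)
Your proposal is correct and follows essentially the same route as the paper: bound the growth of the arena across iterations ($|V'_0|=|V_0|$, $|V'_1|\le|V_1|+|\Omega|$), note that the loop runs at most $|\Omega|$ times with the forced-connectivity test as the dominant per-iteration cost, and apply Theorem \ref{Thm:|V_1|} to the enlarged arena. The paper's argument is exactly this multiplication of the per-iteration bound by the iteration count.
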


Both of these algorithms beat the bound of Horn's algorithm. Importantly, Theorem \ref{Thm:Explicit Muller game result 2} decreases the degree of $|\Omega|$ from $|\Omega|^3$ in Horn's algorithm to $|\Omega|^2$. Since $|\Omega|$ is bounded by $2^{|V|}$, the improvement is significant.

\subsection{Horn's approach}\label{SS:Horns failure}
    In \cite{horn2008explicit}, Horn considers sensible sets. A winning set $W\in \Omega$ is sensible if it determines a subgame. Initially, all non-sensible sets are removed (this is fine). Then Horn's assumption is that the iteration 
    process (presented in figure \ref{F:Explicit Muller game}) preserves sensibility. So, Horn's analysis doesn't take non-sensible sets into account. When $\mathcal{G}_W$ is built, a winning condition $W'$ that contains $W$ might become non-sensible.  Below provide an example. In Horn's defence, assume we remove all non-sensible winning sets in the current game $\mathcal{G}_W$. Then one needs to prove that this is a correct action. So, it is an intricate interplay between sensibility and maintenance of the winning sets at each iteration. Horn does not address this. Neglecting non-sensible sets makes the proofs of Lemmas 6 and 7 (in \cite{horn2008explicit}) incorrect. Here is an example.

    Let $\mathcal{G}=(G,\Omega)$ be a game where $G$ is shown as the solid graph in figure \ref{F:horns countercase} and $\Omega=\{W_1=\{v_1,u_1\},W_2=\{v_1,u_1,u_2\},W_3=\{v_1,v_2,u_1,u_2\}\}$. During the algorithm, player 0 wins the subgame determined by $W_1$.
    The set $W_1$ isn't a 1-trap.  So the 
    new player 1's node $\mathbf{W}_1$ is added. The sensible set $W_2$ now becomes non-sensible in $G_{W_1}$. Let us assume, again in Horn's defense, that $W_2\cup \{\mathbf{W}_1\}$ is removed. Then player 0 wins the subgame determined by $W_3\cup \{\mathbf{W}_1\}$. 
    Horn's Lemma 7 applied to the game in $G_{W_1}$ that occurs on $W_3\cup \{\mathbf{W}_1\}$ in the new game states that player 1 wins the original game played on $W_3$, where $W_3$ is removed from the original $\Omega$.

    The proof of Lemma 7, in this case,  considers 
    the maximal winning set $W_3$ and assumes that player 1 has a winning strategy on $W_3$  and uses it to contract the desired winning strategy. This is a self-loop argument. To save the proof, assume that $W_2$ is the maximal set
    by Horn. Since $W_2\cup \{\mathbf{W}_1\}$ 
    is non-sensible in $G_{W_1}$, it's removed during the algorithm. However, player 0 wins $\mathcal{G}(W_1)$  and $W_1$ is a 1-trap in $\mathcal{G}(W_2)$. As a result, player 1 has no winning strategy in $\mathcal{G}(W_2)$ and Horn's proof fails. Since Horn reuses the proof of Lemma 7 in the proof of Lemma 6, Horn fails on the proofs of Lemmas 6 and 7.
\begin{figure}[H]
    \centering 
    \includegraphics[width=0.2\textwidth]{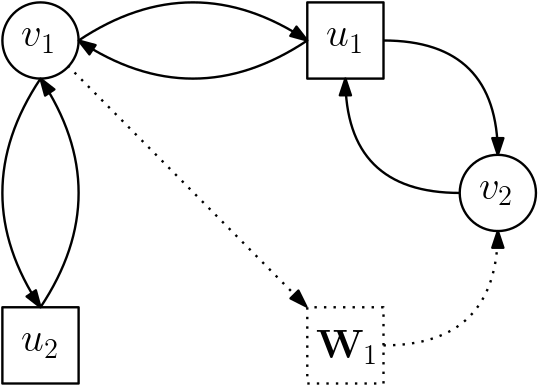} 
    \caption{The counter case of Lemma 6,7 in Horn's paper}
    \label{F:horns countercase}
\end{figure}

    We do not see why Horn's arguments are correct.
    The arguments in Lemmas 7 and 6 have a serious flaw in induction step. 
    In section \ref{SS:proof of Horn}, we develop a new method, totally independent on Horn's considerations. By applying extendible sets and unextendible sets (see the proof of Lemma \ref{L:losing if losing reduction game}), we take the winning sets (which becomes non-sensible) into account, showing that the winning sets stay invariant with each iteration.

\subsection{Applications}

We now apply the results above to specific classes of games. Here we give three examples of such classes. The first such class is the class of fully separated M\"uller games.
A M\"uller game $\mathcal  G$ is {\em fully separated} if for each $W\in \Omega$ there is a $s_W$, called separator, such that for all $s_W\in W$ but $s_W\notin W'$ for all $W'\in \Omega$ distinct from $W$. The second class of games is the class of linear games.  A M\"uller game $\mathcal G$ is a {\em linear game} if the set $\Omega$ forms a linear order $W_1\subset W_2 \subset\ldots \subset W_s$. These classes of games were studied in \cite{ishihara2002complexity}.  As the games are fully separated, when one constructs $\mathcal{G}'_{W'_i}$ there is no need to add a new vertex.  Then applying Theorems \ref{Thm:|V_0|} and  \ref{Thm:|V_1|} to Horn's algorithm, we get the following result:

\begin{theorem}
Each of the following is true: 
\begin{enumerate}
\item 
   Any  fully separated M\"uller game $\mathcal{G}$ can be solved in time $\mathbf{O}(|V|\cdot((\sqrt{|V_1|}+1)|E|+|V_1|^2))$. 
\item
    Any fully separated M\"uller game $\mathcal{G}$ can be solved in time $\mathbf{O}(|V|^2\cdot|V_0|\log |V_0|)$. \qed
\end{enumerate}
\end{theorem}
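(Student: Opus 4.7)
The plan is to run Horn's algorithm from Section \ref{SS:proof of Horn} on the input game $\mathcal{G}$, using the forced-connectivity solver of Theorem \ref{Thm:|V_0|} for part (1) and that of Theorem \ref{Thm:|V_1|} for part (2) as the inner subroutine that dominates the cost of each iteration.

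The critical structural observation is that in a fully separated game, $\Omega$ forms an antichain under inclusion. Indeed, if $W \subsetneq W'$ were both in $\Omega$, then the separator $s_W$ would lie in $W'$, contradicting the uniqueness clause in the definition of separator. This has two consequences. First, the set $R=\{W' \in \Omega : W \subsetneq W'\}$ from Horn's construction is always empty, so the new player-1 vertex $\mathbf{W}$ never appears in any updated winning condition; therefore the reduction $\mathcal{G}' \leftarrow \mathcal{G}'_{W'_i}$ can be realized by simply deleting $W'_i$ from $\Omega'$ without touching the arena. Second, since each $W \in \Omega$ carries a distinct separator $s_W \in V$, we have $|\Omega| \leq |V|$.

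Because the graph is held fixed across all iterations, the values $|V'_0|$, $|V'_1|$, $|E'|$ stay bounded by $|V_0|$, $|V_1|$, $|E|$ respectively throughout the run. The outer \textbf{while} loop performs at most $|\Omega| \leq |V|$ rounds; inside each round, the subgame test, the 1-trap test, and the attractor computation are each linear in $|E|$, so they are absorbed by the cost of the forced-connectivity test on $\mathcal{G}'(W'_i)$. Substituting the bound from Theorem \ref{Thm:|V_0|} into the per-iteration cost and multiplying by $|V|$ yields part~(1); substituting the bound from Theorem \ref{Thm:|V_1|}, and noting that $|V_0|+|V_1|=|V|$ so each iteration costs $\mathbf{O}(|V|\cdot |V_0|\log|V_0|)$, yields part~(2).

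The step I expect to require the most care is justifying that omitting the vertex $\mathbf{W}$ is truly safe, i.e.\ that Theorem \ref{Thm:reduction game} degenerates cleanly to \emph{just drop $W$ from $\Omega$} whenever $R=\emptyset$. Concretely, I would unpack Horn's definition in this case and argue that, because $\mathbf{W}$ appears in no element of $\Omega_W$, neither the edges $(V_0\cap W)\times\{\mathbf{W}\}$ nor $\{\mathbf{W}\}\times (E(V_1\cap W)\setminus W)$ influence any limit set, so the simulation argument behind Lemmas \ref{L:winning if winning reduction game} and \ref{L:losing if losing reduction game} transfers verbatim to the arena-unchanged reduction. Once this is verified, the correctness proof of Lemma \ref{L:Explicit Muller game each step} and Theorem \ref{Thm:Explicit Muller game} carries over unchanged, and the two running-time bounds follow from the cost accounting above.
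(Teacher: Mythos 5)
Your proposal is correct and follows essentially the same route as the paper: the paper likewise observes that full separation forces $\Omega$ to be an antichain with at most $|V|$ elements (one distinct separator per winning set), so Horn's construction never needs the auxiliary vertex $\mathbf{W}$, the arena parameters stay fixed at $|V_0|,|V_1|,|E|$, and the stated bounds are just $|\Omega|\le|V|$ iterations times the connectivity-game costs of Theorems \ref{Thm:|V_0|} and \ref{Thm:|V_1|}. Your extra care in justifying that the reduction degenerates to deleting $W'_i$ from $\Omega'$ when $R=\emptyset$ is a welcome elaboration of a step the paper leaves implicit, not a departure from its argument.
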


Both of these algorithms beat the bound of \cite{ishihara2002complexity} $\mathbf{O}(|V|^2 |E|)$ that solves fully separated M\"uller game. Applying Theorem \ref{Thm:Explicit Muller game result 1} and Theorem \ref{Thm:Explicit Muller game result 2}, we have the following theorems.

\begin{theorem}
Each of the following is true:
\begin{enumerate}
\item Any linear M\"uller game \  $\mathcal{G}$  \ can be \  solved \  in time  $\mathbf{O}(|V|\cdot((\sqrt{|V|}+1)\cdot |V_0||V|+|V|^2))$. 
\item  Any linear M\"uller game $\mathcal{G}$ \ can be \ solved \ in time $\mathbf{O}(|V|^2\cdot|V_0|\log |V_0|)$. \qed
\end{enumerate}

\end{theorem}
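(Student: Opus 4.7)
The plan is to obtain both bounds as direct specializations of Theorem \ref{Thm:Explicit Muller game result 1} and Theorem \ref{Thm:Explicit Muller game result 2} by exploiting the chain structure of $\Omega$. The central observation is that in a linear game whose winning condition is $W_1 \subsetneq W_2 \subsetneq \ldots \subsetneq W_s$, each strict containment contributes at least one new element, so $s-1 \le |W_s|-|W_1| \le |V|$, and hence $|\Omega| = O(|V|)$. This is the only structural fact about linear games we need; unlike fully separated games, we still pay for Horn's construction, but the count of iterations is already $O(|V|)$.

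For the first bound, I would substitute $|\Omega| = O(|V|)$ into Theorem \ref{Thm:Explicit Muller game result 1}. Since $|V_1| \le |V|$, the factor $\sqrt{|V_1|+|\Omega|}+1$ is $O(\sqrt{|V|}+1)$, and $(|V_1|+|\Omega|)^2$ is $O(|V|^2)$. Using the bipartite bound $|E| = O(|V_0|\cdot|V_1|) = O(|V_0|\cdot|V|)$, the inner sum $|E|+|V_0||\Omega|$ collapses to $O(|V_0||V|)$. Combining these, Theorem \ref{Thm:Explicit Muller game result 1} yields $\mathbf{O}(|V|\cdot((\sqrt{|V|}+1)\cdot|V_0||V|+|V|^2))$, as claimed.

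For the second bound, I would substitute the same $|\Omega| = O(|V|)$ into Theorem \ref{Thm:Explicit Muller game result 2}. The factor $|V_0|+|V_1|+|\Omega|$ becomes $O(|V|)$, and $|\Omega|$ contributes another $O(|V|)$, so the total running time simplifies to $\mathbf{O}(|V|\cdot|V|\cdot|V_0|\log|V_0|) = \mathbf{O}(|V|^2\cdot|V_0|\log|V_0|)$.

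The only mild obstacle is verifying that, over the course of the algorithm of Figure \ref{F:Explicit Muller game}, the modified arena $\mathcal{G}'$ does not grow beyond the $O(|V|)$ regime on linear instances: at most one fresh player 1 vertex $\mathbf{W}'_i$ is introduced per iteration, and there are at most $|\Omega| = O(|V|)$ iterations, so the cumulative parameters $|V_1'|$ and $|E'|$ used in applying the efficient connectivity game subroutines remain within the bounds assumed above. With this uniform bound in hand, both estimates follow by plain substitution, and no new combinatorial argument beyond the chain inequality on $|\Omega|$ is required.
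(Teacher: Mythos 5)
Your proposal is correct and follows essentially the same route as the paper: the theorem is stated as a direct consequence of Theorem \ref{Thm:Explicit Muller game result 1} and Theorem \ref{Thm:Explicit Muller game result 2}, using the chain structure of $\Omega$ to get $|\Omega|=O(|V|)$ and the bipartite bound $|E|=O(|V_0||V_1|)$. Your worry about the growth of $\mathcal{G}'$ is already absorbed into those two theorems (their bounds use $|V_1|+|\Omega|$ and $|E|+|V_0||\Omega|$ precisely to account for the vertices and edges added by Horn's construction), so nothing further is needed.
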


Both of these algorithms beat the bound  $\mathbf{O}(|V|^{2\cdot |V|-1}|E|)$ from of \cite{ishihara2002complexity} and the bound  $\mathbf{O}(|V|^3\cdot |V_0|)$ implied from Horn's algorithm. 

The third class of M\"uller games was introduced by A. Dawar and P. Hunter in \cite{hunter2008complexity}. They investigated games with anti-chain winning condition. A winning condition $\Omega$ is an {\em anti-chain} if  $X\not \subseteq Y$ for all $X,Y\in \Omega$. Applying Theorem \ref{Thm:Explicit Muller game result 1} and Theorem \ref{Thm:Explicit Muller game result 2}, we have the following theorems. Note that, since the winning condition is an anti-chain, $|V'_1|$ is bounded by $|V_1|$, $|E'|$ is bounded by $|E|$ and no new player 1's vertex is added to $\Omega'$.

\begin{theorem}
Each of the following is true:
\begin{enumerate}
\item 
Any M\"uller game $\mathcal{G}$ with anti-chain winning condition can be solved in time $\mathbf{O}(|\Omega|\cdot((\sqrt{|V_1|}+1)|E|+|V_1|^2))$.  
\item     Any M\"uller game $\mathcal{G}$ with anti-chain winning condition can be solved in time $\mathbf{O}( |\Omega||V|\cdot |V_0|\log |V_0|)$. \qed
\end{enumerate}
\end{theorem}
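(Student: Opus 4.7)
The plan is to read off both statements directly from Theorems \ref{Thm:Explicit Muller game result 1} and \ref{Thm:Explicit Muller game result 2} after observing that the anti-chain hypothesis eliminates the $|\Omega|$-summands sitting inside the parameter expressions of those bounds.

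First I would unpack what the anti-chain property buys within a run of the algorithm of Figure \ref{F:Explicit Muller game}. Whenever the main loop reaches the Horn reduction $\mathcal{G}' \leftarrow \mathcal{G}'_{W'_i}$, the auxiliary set $R = \{W' \in \Omega' \mid W'_i \subsetneq W'\}$ from the construction is empty, because no two members of an anti-chain are $\subsetneq$-comparable. Hence $\Omega_{W'_i} = \Omega' \setminus \{W'_i\}$, so the freshly created player~1 vertex $\mathbf{W}'_i$ is never inserted into any winning set of $\Omega'$. An invariant then follows by induction on the iteration count: every surviving $W'_j \in \Omega'$ is a subset of the original $V$, and the sub-arena $\mathcal{G}'(W'_j)$ that is subsequently fed to the forced-connectedness subroutine draws its vertices from $V_0 \cup V_1$ alone and its edges from the original $E$. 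Consequently the effective parameters submitted to the connectivity algorithm obey $|V'_1| \le |V_1|$, $|V'_0| \le |V_0|$, and $|E'| \le |E|$ throughout.

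Second, I would multiply through. The main loop of Figure \ref{F:Explicit Muller game} runs at most $|\Omega|$ times (each iteration strictly decreases $|\Omega'|$), and its dominant cost per iteration is a single forced-connectedness test on the current sub-arena. Plugging the invariant bounds into Theorem \ref{Thm:|V_0|} gives a per-iteration cost of $\mathbf{O}((\sqrt{|V_1|}+1)|E| + |V_1|^2)$, and summing over the at most $|\Omega|$ iterations yields part~(1). Plugging them into Theorem \ref{Thm:|V_1|} gives a per-iteration cost of $\mathbf{O}((|V_1|+|V_0|)\cdot |V_0|\log |V_0|) = \mathbf{O}(|V|\cdot |V_0|\log |V_0|)$, and summing yields part~(2).

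The only delicate point is verifying that parameter invariant, i.e.\ showing that the vertices $\mathbf{W}'_i$ introduced by Horn's construction do not contribute to the size of any sub-arena whose connectivity is later tested. This follows from the anti-chain property via the first step above: each $\mathbf{W}'_i$ is absent from every surviving $W'_j \in \Omega'$, and the sub-arena $\mathcal{G}'(W'_j)$ is by definition supported on $W'_j$, so the $\mathbf{W}'_i$ vertices never enter the connectivity subroutine's input. Once this invariant is granted, the complexity calculation is a routine substitution into the two explicit-Müller bounds.
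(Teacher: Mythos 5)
Your proposal is correct and follows essentially the same route as the paper: the paper's own justification is the single observation that, under the anti-chain hypothesis, the set $R$ in Horn's construction is empty, so no new player 1's vertex enters any winning set and the subgames $\mathcal G'(W'_i)$ fed to the connectivity subroutine keep the original parameters $|V_0|$, $|V_1|$, $|E|$, after which Theorems \ref{Thm:Explicit Muller game result 1} and \ref{Thm:Explicit Muller game result 2} specialize to the stated bounds. Your write-up merely makes that invariant and its induction explicit, which is a faithful elaboration rather than a different argument.
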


Just as above, both of the algorithms beat the bound  $\mathbf{O}(|\Omega||V|^2|E|)$ from \cite{hunter2008complexity} and the bound of Horn's algorithm $\mathbf{O}(|\Omega||V||E|)$ that solves the explicit M\"uller games with anti-chain winning conditions.

\bibliographystyle{plain}
\bibliography{bibfile}


\section{Appendix: The  soft-threshold search  for  the ISCCM problem}\label{S:soft-threshold search algorithm for ISCCM problem}


For completeness of this paper, we compactly present the  soft-threshold search algorithm  for  the ISCCM problem borrowed from \cite{haeupler2012incremental}.  
 We represent the topological order by the dynamic order list of distinct elements so that three basic operations, order queries (does $x$ occur before $y$ in the list?), deletions, and insertions (insert a given non-list element just before, or just after, a given list element) are fast. With any of methods in \cite{bender2002two}, \cite{dietz1987two}, each of the operation on the topological order takes a constant factor time. Note that the initial topological order is given by a sequence of size $n$ and we can turn it into a dynamic order list in $\mathbf{O}(n)$ time.

\begin{figure}[H]
    \centering 
    \scriptsize
    \begin{tabular}{l}
        edge \textbf{function} COMPATIBLE-SEARCH(vertex $v$, vertex $w$)\\
        \hspace*{4mm}$F=\{w\}$; $B=\{v\}$; \\
        \hspace*{4mm}$A_F=\{(w,x)\mid (w,x)\text{ is an edge}\}$; $A_B=\{(y,v)\mid (y,v)\text{ is an edge}\}$\\
        \hspace*{4mm}\textbf{while} $\exists (u,x)\in A_F$, $\exists (y,z)\in A_B (u<z)$ \textbf{do}\\
        \hspace*{8mm}choose $(u,x)\in A_F$ and $(y,z)\in A_B$ with $u<z$\\
        \hspace*{8mm}$A_F=A_F-\{(u,x)\}$; $A_B=A_B-\{(y,z)\}$\\
        \hspace*{8mm}\textbf{if} $x\in B$ \textbf{then return} $(u,x)$ \textbf{else if} $y\in F$ \textbf{then return} $(y,z)$\\
        \hspace*{8mm}\textbf{if} $x\notin F$ \textbf{then}\\
        \hspace*{12mm}$F=F\cup \{x\}$; $A_F=A_F\cup\{(x,q)\mid (x,q)\text{ is an edge}\}$\\
        \hspace*{8mm}\textbf{end}\\
        \hspace*{8mm}\textbf{if} $y\notin B$ \textbf{then}\\
        \hspace*{12mm}$B=B\cup \{y\}$; $A_B=A_B\cup\{(r,y)\mid (r,y)\text{ is an edge}\}$\\
        \hspace*{8mm}\textbf{end}\\
        \hspace*{4mm}\textbf{end}\\
        \hspace*{4mm}\textbf{return} $null$ 
    \end{tabular}
    \caption{Implementation of compatible search \cite{haeupler2012incremental}}
    \label{F:compatible search}
\end{figure}

The compatible search is defined as follows. Since adding an edge $(v,w)$ with $v<w$ doesn't change the topological order, we assume $v>w$.  When a new edge $(v,w)$ is added, concurrently search forward from $w$ and backward from $v$. Each step of the search traverses one edge $(u,x)$ forward and one edge $(y,z)$ backward. Two edges $(u,x)$ and $(y,z)$ are compatible if $u<z$. The compatible search only examines the compatible pairs of edges. During the search, every vertex is in one of three states: unvisited, forward(first visited by the forward search), or backward(first visited by the backward search). The search maintains the set $F$ of forward vertices, the set $B$ of backward vertices, the set $A_F$ of edges to be traversed forward and the set $A_{B}$ of edges to be traversed backward. If the search does not detect a cycle, vertices in $B\cup F$ must be reordered to restore topological order and the search return null. Otherwise the search returns an edge other than $(v,w)$ on the cycle.  The compatible search is done by calling COMPATIBLE-SEARCH$(v,w)$ function presented in Figure \ref{F:compatible search}. 

If the search returns null, restore topological order as follows. Let $t=\min(\{v\}\cup \{u\mid \exists (u,x)\in A_F\})$,  $F_<=\{x\in F\mid x < t\}$ and $B_>=\{y\in B\mid y>t \}$. If $t=v$, move all vertices in $F_<$ just after $t$. Otherwise, move all vertices in $F_<$ just before $t$ and all vertices in $B_>$ just before all vertices in $F_<$.

\begin{figure}[H]
    \centering 
    \scriptsize
\begin{tabular}{l}
    \textbf{macro} SEARCH-STEP(vertex $u$, vertex $z$)\\
    \hspace*{4mm}$(u,x)$=\textit{out}$(u)$; $(y,z)$=\textit{in}$(z)$\\
    \hspace*{4mm}\textit{out}$(u)$=\textit{next-out}$((u,x))$; \textit{in}$(z)$=\textit{next-in}$((y,z))$\\
    \hspace*{4mm}\textbf{if} \textit{out}$(u)$=\textit{null} \textbf{then} $F_A=F_A-\{u\}$; \textbf{if} \textit{in}$(z)$=\textit{null} \textbf{then} $B_A=B_A-\{z\}$\\
    \hspace*{4mm}\textbf{if} $x\in B$ \textbf{then return} $(u,x)$ \textbf{else if} $y\in F$ \textbf{then return} $(y,z)$\\
    \hspace*{4mm}\textbf{if} $x\notin F$ \textbf{then}\\
    \hspace*{8mm}$F=F\cup \{x\}$; \textit{out}$(x)$=\textit{first-out}$(x)$\\
    \hspace*{8mm}\textbf{if} \textit{out}$(x)$$\ne$\textit{null} \textbf{then} $F_A=F_A\cup\{x\}$\\
    \hspace*{4mm}\textbf{end}\\
    \hspace*{4mm}\textbf{if} $y\notin B$ \textbf{then}\\
    \hspace*{8mm}$B=B\cup \{y\}$; \textit{in}$(y)$=\textit{first-in}$(y)$\\
    \hspace*{8mm}\textbf{if} \textit{in}$(y)$$\ne$\textit{null} \textbf{then} $B_A=B_A\cup\{y\}$\\
    \hspace*{4mm}\textbf{end}    
\end{tabular}
    \caption{Implementation of a search step \cite{haeupler2012incremental}}
    \label{F:search step}
\end{figure}

\begin{figure}[H]
    \centering 
    \scriptsize
\begin{tabular}{l}
    edge \textbf{function} SOFT-THRESHOLD-SEARCH(vertex $v$, vertex $w$)\\
    \hspace*{4mm}$F=\{w\}$; $B=\{v\}$; \textit{out}$(w)$=\textit{first-out}$(w)$; \textit{in}$(v)$=\textit{first-in}$(v)$; $s=v$\\
    \hspace*{4mm}\textbf{if} \textit{out}$(w)$=\textit{null} \textbf{then} $F_A=\{\}$ \textbf{else} $F_A=\{w\}$; $F_P=\{\}$\\
    \hspace*{4mm}\textbf{if} \textit{in}$(v)$=\textit{null} \textbf{then} $B_A=\{\}$ \textbf{else} $B_A=\{v\}$; $B_P=\{\}$\\
    \hspace*{4mm}\textbf{while} $F_A\ne \{\}$ and $B_A\ne \{\}$ \textbf{do}\\
    \hspace*{8mm}choose $u\in F_A$ and $z\in B_A$\\
    \hspace*{8mm}\textbf{if} $u<z$ \textbf{then} SEARCH-STEP$(u, z)$ \textbf{else}\\
    \hspace*{12mm}\textbf{if} $u>s$ \textbf{then}\\
    \hspace*{16mm}$F_A=F_A-\{u\}$; $F_P=F_P\cup \{u\}$\\
    \hspace*{12mm}\textbf{end}\\
    \hspace*{12mm}\textbf{if} $z<s$ \textbf{then}\\
    \hspace*{16mm}$B_A=B_A-\{z\}$; $B_P=B_P\cup \{z\}$\\
    \hspace*{12mm}\textbf{end}\\
    \hspace*{8mm}\textbf{end}\\
    \hspace*{8mm}\textbf{if} $F_A=\{\}$ \textbf{then}\\
    \hspace*{12mm}$B_P=\{\}$; $B_A=B_A-\{s\}$\\
    \hspace*{12mm}\textbf{if} $F_P\ne \{\}$ \textbf{then}\\
    \hspace*{16mm}choose $s\in F_P$; $F_A=\{x\in F_P\mid x\le s\}$; $F_P=F_P-F_A$\\
    \hspace*{12mm}\textbf{end}\\
    \hspace*{8mm}\textbf{end}\\
    \hspace*{8mm}\textbf{if} $B_A=\{\}$ \textbf{then}\\
    \hspace*{12mm}$F_P=\{\}$; $F_A=F_A-\{s\}$\\
    \hspace*{12mm}\textbf{if} $B_P\ne \{\}$ \textbf{then}\\
    \hspace*{16mm}choose $s\in B_P$; $B_A=\{x\in B_P\mid x\ge s\}$; $B_P=B_P-B_A$\\
    \hspace*{12mm}\textbf{end}\\
    \hspace*{8mm}\textbf{end}\\
    \hspace*{4mm}\textbf{end}\\
    \hspace*{4mm}\textbf{return} $null$ 
\end{tabular}
    \caption{Implementation of soft-threshold search \cite{haeupler2012incremental}}
    \label{F:soft-threshold search}
\end{figure}

For each vertex, we maintain a list of its outgoing edges and a list of its incoming edges implemented by singly linked lists, which we call the outgoing list and incoming list, respectively. \textit{first-out}$(x)$ and \textit{first-in}$(x)$ are the first edge on the outgoing list and the incoming list of $x$, respectively. \textit{next-out}$((x,y))$ and \textit{next-in}$((x,y))$ are the edges after $(x,y)$ on the outgoing list of $x$ and the incoming list of $y$, respectively. In each case, if there is no such edge, the value is null. Adding a new edge to the graph takes $\mathbf{O}(1)$ time. 

During the search, we do not maintain $A_F$ and $A_B$ explicitly. Instead we partition $F$ and $B$ into active, passive and dead vertices. 
A vertex in $F$ is live if it has at least one outgoing untraversed edge and a vertex in $B$ is live if it has at least one incoming untraversed edge. A vertex in $F\cup B$ is dead if it's not live. The live vertices in $F\cup B$ are partitioned into active vertices and passive vertices. We maintain the sets $F_A$ and $F_P$, and $B_A$ and $B_P$, of active and passive vertices in $F$ and $B$. Vertex $s$ is a vertex in $F\cup B$, initially $v$. All vertices in $F_P$ are greater than $s$ and all vertices in $B_P$ are less than $s$. Vertices in $F_A\cup B_A$ can be on either side of $s$. Searching continues until $F_A=B_A=\emptyset$. During a search, the algorithm chooses $u$ from $F_A$ and $z$ from $B_A$ arbitrarily. If $u<z$ then the algorithm traverses an edge out of $u$ and an edge into $z$ and makes each newly live vertex active. If $u>z$ then the algorithm traverses no edges and some (at least one) of $u$ and $z$ become passive. When $F_A$ or $B_A$ becomes empty, the algorithm updates $s$ and other structures as follows. Assume $F_A$ is empty and the updating is symmetric if $B_A$ is empty. All vertices in $B_P$ become dead and $s$ becomes dead if it is live in $B_A$. A new $s$ is selected to be the median of the choices in $F_P$ and make all vertices $x\in F_A$ with $x\le s$ active.

The soft-threshold search is done by calling SOFT-THRESHOLD-SEARCH$(v, w)$ function presented in Figure \ref{F:soft-threshold search}. It uses an auxiliary macro SEARCH-STEP presented in Figure \ref{F:search step}, intended to be expanded in-line. If SOFT-THRESHOLD-SEARCH$(v, w)$ returns null, let $t=\min(\{v\}\cup \{x\in F\mid \textit{out}(x)\ne \textit{null}\})$ and determine the sets $F_<$ and $B_>$. Topologically sort these acyclic subgraphs induced by $F_<$ and $B_>$ by linear-time methods. Then reorder the vertices in $F_<$ and $B_>$ as discussed in compatible search algorithm.

Then soft-threshold search algorithm is extended to the maintenance of SCCs. For each SCC, maintain a canonical vertex which represents the SCC. We use disjoint set data structures \cite{tarjan1975efficiency,tarjan1984worst} to represent the vertex partition defined by SCCs. This structure supports the query \textit{find}$(v)$ which returns the canonical vertex of the component containing vertex $v$, and the operation \textit{unite}$(x,y)$, which given canonical vertices $x$ and $y$, combines the sets containing $x$ and $y$, and makes $x$ be the canonical vertex of the new set. Using path compression and union by rank techniques, the amortized time per find is $\mathbf{O}(1)$ and the total time charged for the unites is $\mathbf{O}(\min(k+1, n)\log (\min(k+1, n)))$ \cite{tarjan1984worst} since there are at most $\min(k+1,n)$ united vertices over $k$ edge additions.

Then during the soft-threshold search, we only maintain the canonical vertices and their outgoing edges and incoming edges. The incoming and outgoing lists are circular. When an edge addition combines several components into one, we combine the incoming lists and outgoing lists to the incoming list and outgoing list of the new component and delete from the order those that are no longer canonical, which takes $\mathbf{O}(1)$ time per old component. This may create multiple edges between the same pair of SCCs and loops, edges whose ends are in the same SCC. We delete the loops during the search and each deletion of a loop takes $\mathbf{O}(1)$ time.

Since we identify each strong component with its canonical vertex, when an edge $(v,w)$ is added, we examine the topological order of \textit{find}$(v)$ and \textit{find}$(w)$. It \textit{find}$(v)$$>$\textit{find}$(w)$, do a soft-threshold search by calling SOFT-THRESHOLD-SEARCH(\textit{find}$(v)$, \textit{find}$(w)$) function presented in Figure \ref{F:soft-threshold search} but with the macro SEARCH-STEP redefined as in Figure \ref{F:SCC search step}. The new version of SEARCH-STEP is different from the old one by only visiting the canonical vertices, using circular edge lists and not doing cycle detection. This SOFT-THRESHOLD-SEARCH always returns null. 

\begin{figure}[H]
    \centering 
    \scriptsize
\begin{tabular}{l}
    \textbf{macro} SEARCH-STEP(vertex $u$, vertex $z$)\\
    \hspace*{4mm}$(q,g)$=\textit{out}$(u)$; $(h,r)$=\textit{in}$(z)$\\
    \hspace*{4mm}\textit{out}$(u)$=\textit{next-out}$((q,g))$; \textit{in}$(z)$=\textit{next-in}$((h,r))$\\
    \hspace*{4mm}$x=$\textit{find}$(g)$; \textbf{if} \textit{out}$(u)$=\textit{first-out}$(u)$ \textbf{then} $F_A=F_A-\{u\}$\\
    \hspace*{4mm}$y=$\textit{find}$(h)$; \textbf{if} \textit{in}$(z)$=\textit{first-in}$(z)$ \textbf{then} $B_A=B_A-\{z\}$\\
    \hspace*{4mm}\textbf{if} $u=x$ \textbf{then} delete $(q,g)$; \textbf{if} $y=z$ \textbf{then} delete $(h,r)$\\
    \hspace*{4mm}\textbf{if} $x\notin F$ \textbf{then}\\
    \hspace*{8mm}$F=F\cup \{x\}$; \textit{out}$(x)$=\textit{first-out}$(x)$\\
    \hspace*{8mm}\textbf{if} \textit{out}$(x)$$\ne$\textit{null} \textbf{then} $F_A=F_A\cup\{x\}$\\
    \hspace*{4mm}\textbf{end}\\
    \hspace*{4mm}\textbf{if} $y\notin B$ \textbf{then}\\
    \hspace*{8mm}$B=B\cup \{y\}$; \textit{in}$(y)$=\textit{first-in}$(y)$\\
    \hspace*{8mm}\textbf{if} \textit{in}$(y)$$\ne$\textit{null} \textbf{then} $B_A=B_A\cup\{y\}$\\
    \hspace*{4mm}\textbf{end}    
\end{tabular}
    \caption{Redefinition of SEARCH-STEP to find SCCs \cite{haeupler2012incremental}}
    \label{F:SCC search step}
\end{figure}

Once the search finishes, let $t=\min(\{\textit{find}(v)\}\cup \{x\in F\mid \textit{out}(x)\ne \textit{null}\})$. Compute the sets $F_<$ and $B_>$. Detect and find the new SCC by running a linear-time algorithm on the subgraph induced by the SCCs where the vertex set is $X=F_<\cup \{t\} \cup B_>$ and the edge set is $Y=\{(\textit{find}(x), \textit{find}(y))\mid \textit{find}(x)\ne \textit{find}(y) \text{ and } (x,y) \text{ is an edge with } \textit{find}(x)\in F_< \text{ or } \textit{find}(y)\in B_{>}\}$. If a new SCC is found, combine the old SCCs into one new SCC with canonical vertex $v$. Reorder the list of vertices in topological order by moving the vertices in $X-\{t\}$ as discussed in compatible search algorithm and delete all vertices that are no longer canonical from the order.


\end{document}